\documentclass[11pt,letterpaper]{article}
\pdfoutput=1

\RequirePackage[OT1]{fontenc}
\RequirePackage[colorlinks,citecolor=blue,urlcolor=blue]{hyperref}
\usepackage{amssymb,amsmath,amsthm,color,outlines,subfigure,comment}
\usepackage[small]{caption}
\usepackage{graphics}
\usepackage{graphicx}
\usepackage{stackrel}
\usepackage[round]{natbib}

\addtolength{\evensidemargin}{-.5in}
\addtolength{\oddsidemargin}{-.5in}
\addtolength{\textwidth}{0.9in}
\addtolength{\textheight}{0.9in}
\addtolength{\topmargin}{-.4in}


\newcommand{\bft}{{\bf{t}}}

\newcommand{\bfx}{{\bf{x}}}

\newcommand{\bfS}{{\mathbf{S}}}
\newcommand{\bfT}{{\mathbf{T}}}

\newcommand{\bfX}{{\mathbf{X}}}

\newcommand{\BE}{{\mathbb{E}}}

\newcommand{\BF}{{\mathbb{F}}}

\newcommand{\BH}{{\mathbb{H}}}
\newcommand{\BI}{{\mathbb{I}}}
\newcommand{\BJ}{{\mathbb{J}}}
\newcommand{\BK}{{\mathbb{K}}}

\newcommand{\BP}{{\mathbb{P}}}
\newcommand{\BR}{{\mathbb{R}}}
\newcommand{\BS}{{\mathbb{S}}}

\newcommand{\BV}{{\mathbb{V}}}

\newcommand{\bfgamma}{{\boldsymbol{\gamma}}}
\newcommand{\bfmu}{{\boldsymbol{\mu}}}

\newcommand{\cD}{\mathcal{D}}

\newcommand{\cF}{\mathcal{F}}

\newcommand{\cI}{\mathcal{I}}
\newcommand{\cM}{\mathcal{M}}

\newcommand{\cS}{\mathcal{S}}
\newcommand{\cX}{\mathcal{X}}

\def\baq#1\eaq{\begin{align}#1\end{align}}
\def\ban#1\ean{\begin{align*}#1\end{align*}}
\newcommand{\draro}{ \stackrel{{\mathcal D}}{\Rightarrow} }

\def\bredbf#1\eredbf{{\color{red}{\bf ???? #1 ????}}}
\def\BeginRedComment#1\EndRedComment{({\color{red}{\bf Comment:} {\it #1}}) }
\def\BeginBlueComment#1\EndBlueComment{({\color{blue}{\bf Comment:} {\it #1}}) }

\DeclareMathOperator*{\ve}{vec}
\DeclareMathOperator*{\diag}{diag }

\makeatletter
\newcommand{\opnorm}{\@ifstar\@opnorms\@opnorm}
\newcommand{\@opnorms}[1]{%
  \left|\mkern-1.5mu\left|\mkern-1.5mu\left|
   #1
  \right|\mkern-1.5mu\right|\mkern-1.5mu\right|
}
\newcommand{\@opnorm}[2][]{%
  \mathopen{#1|\mkern-1.5mu#1|\mkern-1.5mu#1|}
  #2
  \mathclose{#1|\mkern-1.5mu#1|\mkern-1.5mu#1|}
}
\makeatother

\newtheorem{Theorem}{Theorem}[section]
\newtheorem{Lemma}[Theorem]{Lemma}
\newtheorem{Corollary}[Theorem]{Corollary}
\newtheorem{Proposition}[Theorem]{Proposition}

\theoremstyle{definition} \newtheorem{Definition}[Theorem]{Definition}


\begin{document}

\title{On Weighted Multivariate Sign Functions}
\date{}
\author{
Subhabrata Majumdar\thanks{Currently at AT\&T Labs Research. Email: {\tt subho@research.att.com}}\\
and\\
Snigdhansu Chatterjee\thanks{Corresponding author. Email: {\tt chatt019@umn.edu}}\\
	University of Minnesota, Minneapolis, MN, USA
}
\maketitle

\noindent\textbf{Abstract}: 
Multivariate sign functions are often used for robust estimation and inference. We propose using data dependent weights in association with such functions. The proposed weighted sign functions  retain desirable robustness properties, while significantly improving efficiency in estimation and inference compared to unweighted multivariate sign-based methods. Using weighted signs, we demonstrate methods of robust location estimation and robust principal component analysis. We extend the scope of using robust multivariate methods to include robust sufficient dimension reduction and functional outlier detection. Several numerical studies and real data applications demonstrate the efficacy of the proposed methodology.

\vspace{.5cm}
\noindent\textbf{Keywords}: Multivariate sign, Principal component analysis, Data depth, Sufficient dimension reduction


\section{Introduction}
\label{Sec:Introduction}
 
 Given a point $\mu$ in a normed linear space $\cX$ with norm denoted by $| \cdot |$, 
 the \textit{generalized sign function} 
 $S : \cX \times \cX \mapsto \cX$ with center $\mu$ is  
 defined as
\baq
S (x; \mu) =  \left\{ 
\begin{array}{ll}
| x - \mu|^{-1} ( x - \mu ) &  \text{ if } x \ne \mu, \\
0 & \text{ if } x = \mu.
\end{array}
\right.
\label{eq:GSign}
\eaq
This is a functional and multivariate  generalization of the real-valued 
\textit{sign function}, that 
takes the values one, negative one or zero if the point $x \in \BR$ is to the right, left 
or equal $\mu \in \BR$ respectively. This generalized sign function was  
introduced by \cite{ref:JNonpara95201_MottonenOja95} for $\cX = \BR^{p}$, 
the $p$-dimensional  real Euclidean space.

The function $S$ maps $\mu$ to the origin and all other points of $\cX$ 
to the unit sphere  ${\cS}_{0; 1} = \{ x \in \cX: | x | = 1 \}$. 
Given a dataset  $\{ X_{i} \in \BR^{p}: i =1, \ldots, n \}$,
that we collect together in 
the $n \times p$ matrix $\bfX = ( X_{1}: \ldots : X_{n} )^{T}$,
an approach for robust estimation and inference in multivariate data starts by evaluating  
\eqref{eq:GSign} on each observation, thus defining 
$S_{i} = S ( X_{i}; \mu_i)$ with respect to some suitable center 
$ \mu_i \in \BR^{p}$, and then using these for robust location and scale estimation and 
inference, including inference for $ \mu_i$ \citep{ref:Test991_Locantoreetal, ref:OjaBook10, ref:JASA151658_WangPengLi}.
Suppose $\bfS = ( S_{1}: \ldots : S_{n} )^{T} \in \BR^{n \times p}$. 
If the data $\{ X_{i} \in \BR^{p}: i =1, \ldots, n \}$ are independent, identically 
distributed (hereafter, i.i.d.) from an elliptically symmetric distribution, then the 
eigenvectors of $\BE ( X_{1} - \tilde{\mu}) ( X_{1} - \tilde{\mu})^{T}$
and of $\BE S_{1} S_{1}^{T}$ are the same for suitable centering 
parameter $\tilde{\mu} \equiv \mu_i$, that is, the population principal components 
from the original data and from its sign transformations are the same 
\citep{ref:SPL12765_Taskinenetal}. However, valuable information is lost in the form of magnitudes of sample points. As a result, spatial sign-based procedures suffer from low efficiency. For example, eigenvector estimates obtained from the covariance matrix of $S$ are asymptotically inadmissible \citep{ref:Biometrika14673_MagyarTyler} and  Tyler's M-estimate of scatter \citep{ref:AoS87234_Tyler} has uniformly lower asymptotic risk. 

In this paper, we propose to alleviate this low efficiency problem, by associating 
a data-driven weight $W_{i}$ with the generalized sign $S_{i}$, that can be used 
to adaptively trade-off between efficiency and robustness considerations in any given 
application. We demonstrate the utility of using the proposed \textit{weighted generalized sign} functions in a number of problems of current interest, including robust estimation of location and scatter.

Specifically, we propose using product of the generalized sign function and a weight function derived as a transformation of a data-depth function \citep{ref:DIMACS061_Serfling, ref:AoS00461_ZuoSerfling}. Like data-depth functions, the weight functions used in this paper are non-negative reals defined over $\cX \times \cF$, where $\cF$ is a fixed family of probability measures. For every choice of parameters $\mu \in \cX$ and $\BF \in \cF$, in this paper
\begin{align}
R (X_{i}; \mu, \BF) = S( X_{i}, \mu) W( X_{i}, \BF).
\label{eqn:rank}
\end{align}
is used as a robust surrogate for observation $X_{i}$.
Notice that for the trivial choice $W( x, \BF) = | x - \mu_{\BF} |$, 
 $\mu = \mu_{\BF}$,  we get $R (X_{i}; \mu, \BF) = X_{i}-\mu_{\BF}$, the original centered observations. 
 With the other trivial choice of  $W( x, \BF) \equiv 1$, we get the 
 generalized  sign $R (X_{i}; \mu, \BF) = S(X_{i}, \mu) = S_{i}$. 
 However, in this paper we illustrate how using other weight functions
 can lead to interesting robustness and efficiency trade-offs in a variety of situations. 
 The various technical conditions and assumptions that we impose on the 
 weight function $W( x, \BF)$ are valid for weights derived 
 from three well-known data depth functions: the \textit{half-space depth}, 
 the \textit{Mahalanobis depth}, and the \textit{projection depth}. Note that for i.i.d. data, there are three parameters involved here: the location parameter $\mu$ in 
 \eqref{eq:GSign}, the distribution $\BF$ used for the weight function, and the distribution $\BF_{X}$ of $X_{1}$. For clarity and to mirror the contexts of how data depth has been used in the literature \citep{LiuPareliusSingh99, ref:DIMACS061_Serfling}, we fix $\BF = \BF_{X}$ for this paper, although in Section~\ref{Sec:WSQuantiles} we briefly remark on the case when these  distributions are different. Also note that 
 in many problems of interest, $\BF$ is 
unknown and  data-depths are computed using $\BF_{n}$, however, under very standard regularity conditions (for example, see assumptions \textbf{B1-B3} below), the 
properties of $W (x, \BF)$ and $W (x, \BF_{n})$ are close enough for both asymptotic theory and practical applicability.  
 
\begin{figure}[t!]
\begin{center}
\includegraphics[width=\textwidth]{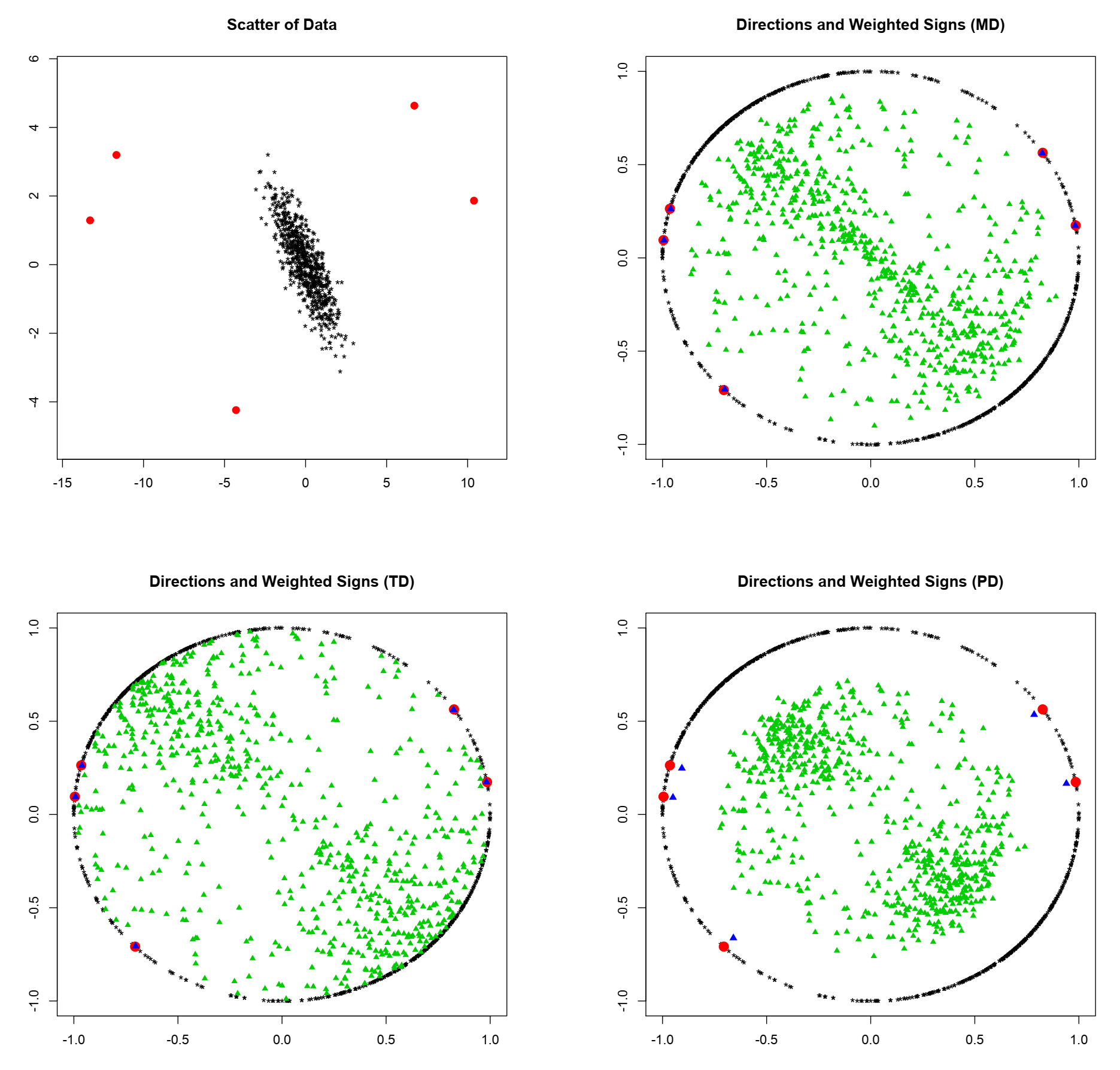}


\caption{
An illustrative bivariate scatter plot in the top left panel where the outliers are 
identified with red circles, 
and generalized signs from the same data (black points on the 
unit radius circle, outliers are red points) in the other panels. 
In the top right (bottom left, bottom right) 
panel, weighted signs from the same data with weights obtained using Mahalanobis depth 
(Tukey depth, projection depth respectively) are presented as green triangles 
(outliers are identified by blue triangles). 
}
\label{fig:Fig1}
\end{center}
\end{figure}

We primarily focus on the task of \textit{robust dispersion/scatter estimation} and 
\textit{robust principal component analysis} in this paper. Figure~\ref{fig:Fig1} presents an illustrative example of bivariate data with outliers in the top left panel, where the outliers are marked with red points. In the 
other panels, the generalized sign values of the same data are presented as black points 
on the unit circle, with the outliers again marked with red points. Notice that the 
black points from either the top right or bottom panels have very similar eigenvector 
structure as the original data without the outliers. The green and blue triangles 
are examples of 
the proposed \textit{weighted sign} values: the top right (respectively, bottom left, 
bottom right) panels depict these values where the weights have been generated using 
Mahalanobis depth (respectively Tukey's half-space depth and the projection depth). The 
blue triangles are the weighted sign values of the outliers. Notice that the eigenvectors 
from the weighted signs also capture the pattern from the original data without the 
outliers.

There are two unknown quantities in the generalized sign function as defined in \eqref{eq:GSign}: $\mu$ and $\BF$, To estimate dispersion and its eigen-structure robustly, we must start with a robust estimator for $\mu$. In Section~\ref{Sec:WSQuantiles} we present the case for \textit{weighted spatial quantiles}, which can be defined and studied in very general spaces 
$\cX$. One special case of this is the {\it weighted spatial median}. As a location estimator, it has several interesting robustness properties and can be shown to be more efficient that some existing robust location estimators, thus making it a perfect candidate to estimate $\mu$. 

Following that, we restrict to the $\BR^{p}$ for fixed $p$, and present detailed discussions on our primary proposal for a  robust measure of dispersion in Section~\ref{Sec:WSDispersion1}, followed by a proposed \textit{ affine equivariant version} of it in Section~\ref{Sec:WSDispersion2}, robust estimation of eigenvalues and a third robust estimator for dispersion in Section~\ref{Sec:Eigen}, and a thorough study of robustness and efficiency using influence functions in Section~\ref{Sec:RE_Dispersion}. We then report multiple simulation-based numeric studies in Section~\ref{Sec:Simulation}, present several real data examples in Section~\ref{Sec:DA}, and concluding remarks in Section~\ref{Sec:Conclusion}.

In the rest of this paper, all finite-dimensional vectors are column vectors, and for a vector or matrix $a$, the notation $a^{T}$ stands for its transpose. The Gaussian distribution with mean $\mu$ and variance $\Sigma$ is denoted by $N (\mu, \Sigma)$. The identity matrix  is denoted by $\BI$, with or without a subscript to denote its dimension. The notations $A^{-1}, det (A), \lambda_{\min} (A), \lambda_{\max} (A)$ respectively stand for the inverse, determinant, minimum and maximum eigenvalues of a matrix $A$, whenever these are well-defined. For a scalar or vector valued random variable $Y$, $\BE Y$ denotes its expected value, while $\BV Y$ denotes its variance or covariance matrix.
\section{The Weighted Spatial Median}
\label{Sec:WSQuantiles}

Suppose the open unit sphere in $\cX$ is given by 
$\textrm{int} {\cX}_{0; 1} = \{ x \in \cX: | x | < 1 \}$, and let
 $u \in \textrm{int} {\cX}_{0; 1}$. We also fix the set of probability measures $\cM$, 
 and select $\BF \in \cM$. Consider a random element $X \in \cX$, and define the 
 function $\Phi (q; X, u, \BF) = W (X, \BF) \bigl\{ | X - q | 
+ \langle u, X - q \rangle \bigr\}$. 
We define the $(u, \BF)$-th \textit{weighted spatial quantile} of $\cX$ 
as the minimizer $q (u, \BF) \in \cX$ of the expectation of $\Phi (q; X, u, \BF)$, 
that is
\baq
\Psi (q; u, \BF) = \BE \Bigl[ W (X, \BF) \bigl\{ | X - q | 
+ \langle u, X - q \rangle \bigr\} \Bigr] = \BE \Phi (q; X, u, \BF).
\label{eq:WeightedQuantile}
\eaq
%
\noindent
This is a natural generalization of the spatial median \citep{ref:JASA96862_Chaudhuri, ref:Biometrika48414_Haldane,ref:AoS97435_Koltchinskii} ($W(X, \BF) \equiv 1$ and $u = {\bf 0}_p$), or more general spatial quantiles \citep{ref:AoS17591_Cardotetal_Median_HilbertSpace, ref:AoS141203_ChakrabortyChaudhuri_Banach_Quantile, ref:Bernoulli152308_Minsker_Median_Banach} ($W(X, \BF) \equiv 1$). We assume that 
$\Phi (q; X, u, \BF)$ is convex in $q$ for $\BF$-almost all values of $x \in \cX$.

In what follows, for brevity we elaborate only the case of the \textit{weighted spatial median} (thus $\Psi (q; 0, \BF) = \BE \bigl[ W (X, \BF) | X - q | \bigr] $) for the case of finite dimensional $\cX$. Specifically, we demonstrate the utility of using the {\it weighted} spatial median as opposed to using the traditional, unweighted versions found in literature. The analysis and computation of the general weighted spatial quantiles will largely follow by extending the results of the above cited references, and we postpone details to a future study.

Assume that we have independent and identically distributed observations $X_{1}, \ldots, X_{n} \in \cX$, and the sample weighted spatial median is computed by minimizing $\Psi_{n} (q; 0, \BF) = \sum_{i = 1}^{n} W (X_{i}, \BF) | X_{i} - q | $, and is denoted by $\hat{q}_{nW}$, the second subscript is in acknowledgement that the weight function is used. We denote the unweighted version of this estimator, ie, the case where $W (X, \BF) \equiv 1$ as $\hat{q}_{n}$. Assume the following technical conditions:

\begin{description}
\item[(A1)] $\Psi (q; 0, \BF)$ is finite for all $q \in \cX \subseteq \BR^{p}$ 
and has a unique minimizer $q_{0}$. 

\item[(A2)] $\Psi (q; 0, \BF)$ is twice differentiable at $q_{0}$ and the second derivative is positive definite. 
 
\item[(A3)] $ {\frac{\partial^{2}} {\partial q^{2}}} \Psi (q; 0, \BF)$
exists for all $q$ in a neighborhood of $q_{0}$, and we use the notations
\ban
\Psi_{1W}   & = 
\Bigl( {\frac{\partial} {\partial q}} \Psi (q_{0}; 0, \BF) \Bigr)
 \Bigl( {\frac{\partial} {\partial q}} \Psi (q_{0}; 0, \BF) \Bigr)^{T}, \\
\Psi_{2W}   & = {\frac{\partial^{2}} {\partial q^{2}}} \Psi (q_{0}; 0, \BF). 
\ean
 \end{description}
 These assumptions are very broad-based and general. The first one essentially requires 
 the existence of a population parameter, the second one requires that the minimization 
 approach is meaningful in the population, and the third one essentially requires that the 
 weight function has a finite variance. No further restrictions are placed on the tuning parameter $\BF$ or the choice of the weight function. 
 
 \begin{Theorem}\label{Thm:WSMedian}
 Under assumptions [A1]-[A3], we have 
 \ban 
 n^{1/2} (\hat{q}_{nW} - q_{0}) & \draro N(0, \Psi_{2W}^{-1} \Psi_{1W} \Psi_{2W}^{-1}).
 \ean
 \end{Theorem}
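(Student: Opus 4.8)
The plan is to treat $\hat q_{nW}$ as the minimiser of a convex stochastic process and invoke the by-now-standard convexity (``argmin'') argument for M-estimators defined by convex minimisation, following Pollard and Hjort--Pollard. First I would introduce the recentred and rescaled objective
\[
Z_n(t) \;=\; \sum_{i=1}^{n} W(X_i,\BF)\Bigl\{ \,|X_i - q_0 - n^{-1/2}t| \;-\; |X_i - q_0|\, \Bigr\}, \qquad t \in \cX ,
\]
which is convex in $t$ because each summand is, and which is minimised at $t_n := n^{1/2}(\hat q_{nW}-q_0)$. Since the gradient of $q\mapsto W(x,\BF)|x-q|$ at any point with $x\ne q$ is $-W(x,\BF)S(x;q)$, the first-order term of the $i$-th summand is $-n^{-1/2}t^{T}W(X_i,\BF)S(X_i;q_0)$. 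Writing $V_n = n^{-1/2}\sum_{i=1}^{n}W(X_i,\BF)S(X_i;q_0)$, the target is to show, for each fixed $t$,
\[
Z_n(t) \;=\; -\,t^{T}V_n \;+\; \tfrac12\, t^{T}\Psi_{2W}\,t \;+\; o_P(1),
\qquad
V_n \;\draro\; V \sim N(0,\Psi_{1W}) ,
\]
where $\Psi_{1W}$ is read as the second-moment matrix $\BE\bigl[W(X,\BF)^2 S(X;q_0)S(X;q_0)^{T}\bigr]$ of the per-observation gradient (which at the minimiser is what the definition in [A3] amounts to). Then $Z_n(\cdot)$ converges to the convex random limit $Z(t) = -t^{T}V + \tfrac12 t^{T}\Psi_{2W}t$, and by [A2] the matrix $\Psi_{2W}$ is positive definite, so $Z$ has the a.s.\ unique minimiser $\Psi_{2W}^{-1}V$.

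For the central limit statement, the summands of $V_n$ are i.i.d., their mean is $\BE\bigl[W(X,\BF)S(X;q_0)\bigr] = -\frac{\partial}{\partial q}\Psi(q_0;0,\BF) = 0$ because $q_0$ minimises $\Psi$ and $\Psi$ is differentiable there by [A1]--[A2], and they have finite second moment since $|S(\cdot;q_0)|\le 1$ while [A3] supplies $\BE[W(X,\BF)^2]<\infty$; the Lindeberg--L\'evy CLT gives $V_n \draro N(0,\Psi_{1W})$. For the quadratic expansion I would split $Z_n(t)+t^{T}V_n$ into its mean and a fluctuation. Its mean equals $n\{\Psi(q_0+n^{-1/2}t;0,\BF)-\Psi(q_0;0,\BF)\}$, and a second-order Taylor expansion at $q_0$ — legitimate by the twice-differentiability in [A2]--[A3], with the linear term vanishing — shows this is $\tfrac12 t^{T}\Psi_{2W}t + o(1)$. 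The fluctuation is a sum of $n$ i.i.d.\ mean-zero terms $W(X_i,\BF)\{|X_i-q_0-n^{-1/2}t|-|X_i-q_0|+n^{-1/2}t^{T}S(X_i;q_0)\}$; using the crude bound $\bigl|\,|x-a|-|x|+a^{T}x/|x|\,\bigr|\le 2|a|$ together with the $O(|a|^2/|x|)$ behaviour of the same expression away from $x=0$, a routine truncation argument (using $\BE[W(X,\BF)^2]<\infty$ and the non-atomicity at $q_0$ forced by the smoothness in [A2]) shows each term has variance $o(n^{-1})$, so the fluctuation is $o_P(1)$. This gives the displayed expansion, jointly over finitely many $t$ with the same Gaussian $V$ entering the linear term.

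Finally I would apply the convexity lemma: $Z_n$ is convex, $Z_n(t)\draro Z(t)$ for $t$ in a countable dense set, and $Z$ has the a.s.\ unique minimiser $\Psi_{2W}^{-1}V$; hence the minimisers converge, $t_n = \argmin_{t} Z_n(t) \draro \Psi_{2W}^{-1}V$. This argument also delivers the $n^{1/2}$-consistency (tightness of $t_n$) that justifies the local analysis, so no separate consistency proof is needed. Since $\Psi_{2W}^{-1}V \sim N\bigl(0,\Psi_{2W}^{-1}\Psi_{1W}\Psi_{2W}^{-1}\bigr)$, the conclusion follows.

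The step I expect to be the main obstacle is the quadratic expansion, precisely because $x\mapsto|x|$ is not differentiable at the origin: $Z_n$ cannot be Taylor-expanded pathwise, so the quadratic term must be extracted from the (smooth, by assumption) population functional $\Psi$ while the stochastic remainder is controlled only in mean square. This is the one place where the integrability granted by [A3] is genuinely used, and where it matters that $\BF$ is held fixed — so that the $W(X_i,\BF)$ are honestly i.i.d.\ functions of the data — the case $\BF=\BF_n$ being deferred as the text notes.
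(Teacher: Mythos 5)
Your proposal is correct, but it follows a genuinely different (more self-contained) route than the paper. The paper's proof is essentially a citation: it invokes Theorem 4 of Niemiro (1992), tracing back to Haberman (1989), to obtain a Bahadur-type linearization $n^{1/2}(\hat{q}_{nW}-q_{0}) = -\Psi_{2W}^{-1}\, n^{-1/2}\sum_{i=1}^{n}\nabla\Phi(q_{0};X_{i},0,\BF)+o_{P}(1)$ (the display in the paper writes $\Psi_{2W}$ where $\Psi_{2W}^{-1}$ is meant), and then applies the Lindeberg--L\'evy CLT to the subgradient sum. You instead run the Hjort--Pollard/Pollard convexity ("argmin") argument from scratch: recentre and rescale the convex criterion, establish the pointwise quadratic approximation $Z_{n}(t)=-t^{T}V_{n}+\tfrac12 t^{T}\Psi_{2W}t+o_{P}(1)$ by Taylor-expanding the population functional $\Psi$ and killing the fluctuation in mean square, and conclude via the convexity lemma, which also yields tightness of $n^{1/2}(\hat{q}_{nW}-q_{0})$ without a separate consistency step. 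Both arguments exploit convexity in the same essential way, and your identification of the linear-term covariance agrees with the paper's intended reading: as literally written in [A3], $\Psi_{1W}$ would vanish at the minimiser, and your reinterpretation as $\BE\bigl[W^{2}(X,\BF)S(X;q_{0})S(X;q_{0})^{T}\bigr]$ is exactly how the paper uses $\Psi_{1}$ in Corollary~\ref{Cor:ARE}. What each approach buys: the paper's citation-based proof is shorter and inherits from Niemiro a reusable linear representation (with possible refinements on the remainder), while yours is elementary and self-contained, needs only fixed-$t$ control thanks to convexity, and makes explicit where each of [A1]--[A3], the finiteness of $\BE W^{2}(X,\BF)$, and the non-atomicity at $q_{0}$ (forced by the smoothness assumption) actually enter. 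Your closing caveat about the fixed $\BF$ versus $\BF_{n}$ distinction is also consistent with the paper, which defers the plug-in case to conditions \textbf{B1}--\textbf{B3}.
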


Thus, under very standard regularity conditions, the sample weighted spatial median 
is consistent and is asymptotically normal. Theorem~\ref{Thm:WSMedian}  can be proved in several different ways. Here we use techniques following \cite{ref:AoS891631_Haberman, ref:AoS921514_Niemiro}. Specifically, following Theorem 4 in \cite{ref:AoS921514_Niemiro}, which traces back to \cite{ref:AoS891631_Haberman} with slightly relaxed conditions, we get
\begin{align*}
n^{1/2} (\hat{q}_{nW} - q_{0}) =
- \frac{\Psi_{2W}}{\sqrt n} 
\sum_{i=1}^n \nabla \Phi (q; X_{i}, u, \BF)
+ o_P(1),
\end{align*}
where $\nabla \Phi (q; X_{i}, u, \BF)$ is any measurable subgradient of 
$\Phi (q; X_{i}, u, \BF)$. 
Theorem~\ref{Thm:WSMedian} follows by applying the central limit theorem.

\paragraph{Remark.} Note that the technical conditions for the result presented in Theorem~\ref{Thm:WSMedian} is one of several alternatives that can conceived, and the scope of this result is broader than what is presented above. First, note that if $\BF$ 
and $\BF_{X}$ are different and the weights are not a function of $q$, 
a situation that may arise in hypothesis testing problems where the weights are based 
on the null distribution, the convexity of $\Phi (q; X_{i}, u, \BF)$ follows automatically 
and is not an assumption. Second, even if $\Phi (q; X_{i}, u, \BF)$ is not convex but sufficiently smooth, we can have a central limit theorem, for example, by using techniques similar to \cite{ref:CBose_AoS05414}. 
Choices of $\BF$ other than $\BF_{X}$, e.g. \cite{StatPaper18},  may lead to interesting interpretations of $W(. ,\BF)$ and the resulting location estimator and will be explored further in future.

\subsection{Asymptotic efficiency of weighted spatial median}
Let $V_{W} = \Psi_{2W}^{-1} \Psi_{1W} \Psi_{2W}^{-1}$ be the asymptotic 
variance of $\hat{q}_{n,W}$ from Theorem~\ref{Thm:WSMedian}, where we use the 
subscript ``${}_{W}$'' to denote that this depends on the weight function. 
We use the notation 
$V_{1}$ for the case where $W (x, \BF) \equiv 1$, that is, all weights are one.
The asymptotic relative efficiency of two statistics is the $p$-th root of 
the reciprocals of their determinants. That is, 
\ban 
ARE (\hat{q}_{nW}, \hat{q}_{n}) = \Bigl\{\frac{det(V_{1})}{det(V_{W})} \Bigr\}^{1/p}.
\ean
One easy result from Theorem~\ref{Thm:WSMedian} is that under reasonable conditions the
asymptotic relative efficiency of the weighted spatial median over the unweighted version
is always greater than one. We document this in the following corollary:

\begin{Corollary}
\label{Cor:ARE}
Assume that the weight function $W (X, \BF)$ is bounded above by some $W_{\max} > 0$, and the matrices $\Psi_1 = \BE S(X; q_0) S^T(X; q_0)$ and $\Psi_{1W}$ are positive definite. Then
\ban
ARE (\hat{q}_{nW}, \hat{q}_{n}) \geq
\frac{ \lambda_{\min} (\Psi_1) \lambda_{\min}^2 (\Psi_{2W})}
{W_{\max} \lambda_{\max} (\Psi_{1W}) \lambda_{\max}^2 (\Psi_2)}.
\ean
Consequently, if $W_{\max} / \lambda_{\min}^2 (\Psi_{2W}) < \lambda_{\min} (\Psi_1) /(\lambda_{\max} (\Psi_{1W}) \lambda_{\max}^2 (\Psi_2))$ then this asymptotic relative efficiency is larger than 1. 
\end{Corollary}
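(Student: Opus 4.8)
The plan is to turn the claim into elementary determinant and eigenvalue inequalities, once the two ``sandwich'' asymptotic variances from Theorem~\ref{Thm:WSMedian} are written out explicitly. For the spatial median one has $u=0$, so the per-observation function is $\Phi(q;X_i,0,\BF)=W(X_i,\BF)\,|X_i-q|$ with measurable subgradient $\nabla\Phi(q;X_i,0,\BF)=-W(X_i,\BF)\,S(X_i;q)$. The population first-order condition forces $\BE[\nabla\Phi(q_0;X_1,0,\BF)]=0$, so the representation in Theorem~\ref{Thm:WSMedian} together with the central limit theorem (second moments finite by (A3)) gives $V_W=\Psi_{2W}^{-1}\Psi_{1W}\Psi_{2W}^{-1}$ with $\Psi_{1W}=\BE\bigl[W(X_1,\BF)^2\,S(X_1;q_0)S(X_1;q_0)^T\bigr]$, and, on setting $W\equiv1$, $V_1=\Psi_2^{-1}\Psi_1\Psi_2^{-1}$ with $\Psi_1=\BE[S(X_1;q_0)S(X_1;q_0)^T]$. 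By multiplicativity of the determinant,
\[
ARE(\hat q_{nW},\hat q_n)=\Bigl\{\tfrac{\det(V_1)}{\det(V_W)}\Bigr\}^{1/p}
=\Bigl\{\frac{\det(\Psi_1)\,\det(\Psi_{2W})^2}{\det(\Psi_2)^2\,\det(\Psi_{1W})}\Bigr\}^{1/p}.
\]

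Next, the matrices $\Psi_1,\Psi_{1W},\Psi_2,\Psi_{2W}$ are all symmetric positive definite — the first two by hypothesis, the last two by assumption (A2) (applied, for $\Psi_2$, with unit weights) — so each has strictly positive determinant and satisfies $\lambda_{\min}(\cdot)^p\le\det(\cdot)\le\lambda_{\max}(\cdot)^p$. Bounding $\det(\Psi_1)$ and $\det(\Psi_{2W})^2$ from below, and $\det(\Psi_2)^2$ and $\det(\Psi_{1W})$ from above, by the appropriate powers of eigenvalues, and then taking $p$-th roots, yields the asserted lower bound on the $ARE$. The only place the weight bound $W(\cdot,\BF)\le W_{\max}$ is used is to introduce the $W_{\max}$ in the denominator through an upper estimate of $\det(\Psi_{1W})$ (for instance via $\Psi_{1W}\preceq W_{\max}^2\Psi_1$, or the cruder $\det(\Psi_{1W})\le W_{\max}\,\lambda_{\max}(\Psi_{1W})^p$). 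Finally, the ``consequently'' clause is a one-line rearrangement: the stated inequality between $W_{\max}/\lambda_{\min}^2(\Psi_{2W})$ and $\lambda_{\min}(\Psi_1)/\bigl(\lambda_{\max}(\Psi_{1W})\lambda_{\max}^2(\Psi_2)\bigr)$ is exactly the statement that the displayed lower bound on $ARE$ exceeds $1$.

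I do not anticipate a genuine obstacle: the content is bookkeeping. The one thing to be careful about is bounding each of the four determinants in the direction that keeps the chain of inequalities consistent, and checking positive definiteness before invoking either the eigenvalue bounds or the identity $\det(\Psi_{2W}^{-1}\Psi_{1W}\Psi_{2W}^{-1})=\det(\Psi_{1W})/\det(\Psi_{2W})^2$. A minor point worth stating carefully is the exact constant attached to $W_{\max}$: the estimate coming from the eigenvalue bounds alone differs from the stated one only by this factor (which can be absorbed cleanly when $W_{\max}\ge1$), and a sharper handling of $\det(\Psi_{1W})$ — e.g.\ $\Tr(\Psi_{1W})=\BE[W^2\|S(X_1;q_0)\|^2]\le W_{\max}^2$ combined with $\det(\Psi_{1W})\le(\Tr(\Psi_{1W})/p)^p$ — is available if one wants $W_{\max}$ to play a substantive role.
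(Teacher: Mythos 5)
Your proposal follows essentially the same route as the paper's own proof: factor $\det(V_1)/\det(V_W)$ using multiplicativity of the determinant and the sandwich forms $V_W=\Psi_{2W}^{-1}\Psi_{1W}\Psi_{2W}^{-1}$, $V_1=\Psi_2^{-1}\Psi_1\Psi_2^{-1}$, then bound each determinant by extreme eigenvalues and use the bound on $W$ to bring in $W_{\max}$, with the ``consequently'' clause as a trivial rearrangement. If anything you are more careful than the paper's two-line argument — you apply the correct inequalities $\lambda_{\min}(\cdot)^p\le\det(\cdot)\le\lambda_{\max}(\cdot)^p$ together with the $p$-th root in the definition of the ARE, and you rightly flag that the single factor of $W_{\max}$ in the stated bound only emerges after an additional step (e.g.\ $W_{\max}\ge 1$ or a rescaling of the weights, which the paper invokes elsewhere), a point the paper's ``the result follows'' leaves implicit.
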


\begin{proof}[Proof of Corollary~\ref{Cor:ARE}]
Using the facts that $det (AB) = det(A) det(B)$ for square matrices $A,B$ and $det (A^{-1}) = 1/det(A)$ for non-singular $A$, we write
\ban
\frac{det (V_1)}{det (V_W)} &= det( \Psi_2^{-1} \Psi_1 \Psi_2^{-1} )
det( \Psi_{2 W} \Psi_{1 W}^{-1} \Psi_{2 W} )\\
&= det (\Psi_1) det (\Psi_{1 W}^{-1}) [ det(\Psi_2^{-1}) det(\Psi_{2 W}) ]^2.
\ean
The result follows, using the facts that $\det(A) \geq \lambda_{\min} (A)$ and $\det(A^{-1}) \geq 1/\lambda_{\max} (A)$, and the upper bound on $W$.
\end{proof}


We may wish to further explore the conditions of Corollary~\ref{Cor:ARE}. 
Let us concentrate on the case of where the distribution of $X$ is spherically symmetric. Following \cite{ref:Fangetal90_Book}:

\begin{Definition}
A $p$-dimensional random vector $X$ is said to elliptically distributed if there exist a vector $\mu \in \BR^p$, a positive semi-definite matrix $\Sigma \in \BR^{p \times p}$ and a function 
$\phi: (0, \infty) \rightarrow \BR$ such that the characteristic function of $X$ is $\exp \{ i t^{T} \mu \} \phi (\bft^T \Sigma \bft)$ for $\bft \in \BR^p$.
\end{Definition}

There are several alternative formulations, see the above reference and citations within it for details.

\begin{Corollary}\label{Cor:Elliptic_ARE}
Assume that $\BF \equiv \BF_X$ is an elliptically symmetric distribution, with location parameter $\mu = q_{0}$ and the covariance matrix $\Sigma$ satisfies the following conditions:
\begin{enumerate}


\item $\exp \left\{ - \frac{Tr^2 (\Sigma)}{ 256 \lambda_{\max}^2 (\Sigma)} \right\} = 
o\left( \min\left( \frac{\lambda_{\max}(\Sigma) }{Tr( \Sigma)},
\frac{\lambda_{\min}(\Sigma)}{\lambda_{\max}(\Sigma)} \right) \right)$,

\item $\lambda_{\max}(\Sigma) = o( Tr(\Sigma))$.
\end{enumerate}

Also assume that the weight function is (a) affine invariant, i.e. $ W (Ax + b, A\BF + b) = W (x, \BF)$ for $A \in \BR^{p \times p}, b \in \BR^p$, and (b) satisfies the following:
\begin{equation}\label{eqn:CorElliptic_ARE_eqn}
\frac{W_{\max}}{[ \BE | X - q_0|^{-1} W(X, \BF)]^2 } <
\frac{\lambda_{\min} (\Psi_1)}{\lambda_{\max} (\Psi_1) [ \BE | X - q_0|^{-1}]^2]}.
\end{equation}
Then we have $ARE( \hat q_{n,W}, \hat q_n) > 1$.
\end{Corollary}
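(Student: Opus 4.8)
The plan is to deduce this from the algebraic sufficient condition already isolated in Corollary~\ref{Cor:ARE}: it is enough to verify that, under the stated hypotheses, $W_{\max}/\lambda_{\min}^2(\Psi_{2W}) < \lambda_{\min}(\Psi_1)/\bigl(\lambda_{\max}(\Psi_{1W})\,\lambda_{\max}^2(\Psi_2)\bigr)$. So the whole task is to estimate the four spectral quantities $\lambda_{\min}(\Psi_1)$, $\lambda_{\max}(\Psi_2)$, $\lambda_{\max}(\Psi_{1W})$, $\lambda_{\min}(\Psi_{2W})$ under the elliptical model. With $S = S(X;q_0)$ and $\Psi(q;0,\BF) = \BE[W(X,\BF)\,|X-q|]$, differentiating twice gives $\Psi_{2W} = \BE\bigl[\,|X-q_0|^{-1}W(X,\BF)\,(\BI - S S^T)\,\bigr]$, and $\Psi_{1W}$ — the covariance of the (mean-zero) subgradient of $\Phi$ at $q_0$ — equals $\BE[W(X,\BF)^2\,S S^T]$; the matrices $\Psi_2,\Psi_1$ are the $W\equiv 1$ specializations.

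The engine of the argument is the canonical stochastic representation of an elliptical vector: $X - q_0 \stackrel{d}{=} R\,\Sigma^{1/2}U$ with $R \ge 0$ independent of $U$ uniform on the unit sphere of $\BR^p$. Then $S = \Sigma^{1/2}U/|\Sigma^{1/2}U|$ is a function of $U$ alone; and, because affine invariance of $W$ forces orthogonal invariance while the standardized law of $\Sigma^{-1/2}(X - q_0)$ is spherical, $W(X,\BF) = g(R)$ for some scalar $g$, a function of $R$ alone. By independence of $R$ and $U$, each of the four matrices thus factors as a scalar moment of $R$ times one common ``directional'' matrix: $\Psi_1 = M$ and $\Psi_{1W} = \BE[g(R)^2]\,M$ with $M = \BE[\Sigma^{1/2}UU^T\Sigma^{1/2}/(U^T\Sigma U)]$; $\Psi_2 = \BE[R^{-1}]\,N$ and $\Psi_{2W} = \BE[R^{-1}g(R)]\,N$ with $N = \BE[\,|\Sigma^{1/2}U|^{-1}(\BI - S S^T)\,]$. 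The same split applies to the scalars in \eqref{eqn:CorElliptic_ARE_eqn}: $\BE|X-q_0|^{-1} = \BE[R^{-1}]\,\BE|\Sigma^{1/2}U|^{-1}$ and $\BE[\,|X-q_0|^{-1}W(X,\BF)] = \BE[R^{-1}g(R)]\,\BE|\Sigma^{1/2}U|^{-1}$.

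The analytic step, and the place where conditions~1 and~2 enter, is the control of $M$ and $N$. The quadratic form $U^T\Sigma U$ concentrates around $\Tr(\Sigma)/p$: a sub-Gaussian / Hanson--Wright-type tail governs its deviation probability — this is essentially the content of condition~1, whose $\exp\{-\Tr^2(\Sigma)/(256\,\lambda_{\max}^2(\Sigma))\}$ term is such a probability, required to be of smaller order than the quantities it must not swamp — while condition~2 forces the relative spread to vanish. Consequently, in operator norm, $M = \Sigma/\Tr(\Sigma) + o(1)$ and $N = (p/\Tr(\Sigma))^{1/2}(\BI - \Psi_1) + o(1)$; in particular $\lambda_{\max}(\Psi_1) \le (1+o(1))\,\lambda_{\max}(\Sigma)/\Tr(\Sigma) \to 0$, so the spectrum of $\BI - \Psi_1$ clusters at $1$. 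Substituting these leading-order forms — together with the above factorizations and the elementary bound $0 \le g \le W_{\max}$ to control $\BE[g(R)^2]$ — into the Corollary~\ref{Cor:ARE} inequality, the $\BE|\Sigma^{1/2}U|^{-1}$ and $R$-moment factors cancel in precisely the pattern encoded by \eqref{eqn:CorElliptic_ARE_eqn}, and the inequality collapses, up to multiplicative factors $1 + o(1)$ (from $(1-\lambda_{\min}(\Psi_1))/(1-\lambda_{\max}(\Psi_1))$ and the concentration errors), to \eqref{eqn:CorElliptic_ARE_eqn} itself. Since the latter holds strictly and the $o(\cdot)$ slack in condition~1 is exactly what absorbs the error terms, Corollary~\ref{Cor:ARE} applies and $ARE(\hat q_{n,W}, \hat q_n) > 1$.

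The main obstacle is this analytic step. Three points need real care. First, the operator-norm error in $N \approx (p/\Tr(\Sigma))^{1/2}(\BI - \Psi_1)$ comes from $|\Sigma^{1/2}U|^{-1}$, whose left tail is heavy — on the atypical event that $U^T\Sigma U$ is small this factor blows up — so one must show the exponentially small failure probability supplied by condition~1 strictly dominates that contribution; the constant $256$ is presumably calibrated to exactly this estimate. Second, one must verify that the algebraic reduction genuinely terminates at \eqref{eqn:CorElliptic_ARE_eqn} rather than at a superficially similar inequality: this is where affine invariance (to obtain the $R$--$U$ split) and the normalization of $\Sigma$ as the covariance are indispensable, and it may additionally use the property $0 \le W \le 1$ enjoyed by the depth-derived weights under consideration, so that $\BE[W^2] \le \BE[W] \le 1$ does not spoil the cancellation of the weight moments. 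Third, condition~2 cannot hold for fixed $p$, so the statement is implicitly along a sequence of models with growing effective rank $\Tr(\Sigma)/\lambda_{\max}(\Sigma)$; this asymptotic framework must be made explicit, and one must ensure that $\lambda_{\min}(\Psi_1)$ does not decay faster than the accumulated approximation errors.
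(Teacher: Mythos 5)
Your plan is sound and, at the skeleton level, coincides with the paper's: both reduce the claim to the sufficient condition isolated in Corollary~\ref{Cor:ARE} and then control the spectra of $\Psi_1,\Psi_2,\Psi_{1W},\Psi_{2W}$ under the elliptical model, with conditions 1--2 supplying concentration of $U^{T}\Sigma U$. The difference lies in the middle step. The paper does not rederive the spectral bounds: it invokes slightly modified versions of Lemmas A.4--A.5 of \cite{ref:JASA151658_WangPengLi} (Lemmas~\ref{lemma:lemmaB1}--\ref{lemma:lemmaB2} of the supplement), namely $\lambda_{\max}(\Psi_{1W})\le 2W_{\max}\lambda_{\max}(\Sigma)(1+o(1))/\Tr(\Sigma)$, $\lambda_{\max}(\Psi_{2W})\le W_{\max}\BE|X-q_0|^{-1}$ and $\lambda_{\min}(\Psi_{2W})\ge \BE\bigl[W(X,\BF)\,|X-q_0|^{-1}\bigr](1+o(1))/\sqrt{3}$, and plugs these into the ARE bound from Corollary~\ref{Cor:ARE}. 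You instead exploit the exact factorizations $\Psi_{1W}=\BE[W^{2}(X,\BF)]\,\Psi_{1}$, $\Psi_{2W}=\BE[R^{-1}g(R)]\,N$, $\Psi_{2}=\BE[R^{-1}]\,N$ (legitimate, since affine invariance makes $W$ a function of $R$ alone, independent of $U$), which makes transparent why precisely the moments $\BE[W(X,\BF)\,|X-q_0|^{-1}]$ and $\BE|X-q_0|^{-1}$ appear in \eqref{eqn:CorElliptic_ARE_eqn}; the price is that you must prove in-house the concentration estimates the paper imports, i.e.\ relative (not merely additive) operator-norm control of your $M$ and $N$ at the small scales $\lambda_{\min}(\Sigma)/\Tr(\Sigma)$ and $\lambda_{\max}(\Sigma)/\Tr(\Sigma)$, together with the heavy left tail of $|\Sigma^{1/2}U|^{-1}$ --- which is exactly the content of the cited lemmas and exactly what the exponential rate in condition 1 is calibrated to absorb, as you correctly flag.

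Two cautions, neither fatal. First, after your cancellation the reduction does not land exactly on \eqref{eqn:CorElliptic_ARE_eqn}: a residual factor $\BE[W^{2}(X,\BF)]\,\lambda_{\max}^{2}(N)/\lambda_{\min}^{2}(N)$ survives on the left-hand side, so you need $\BE[W^{2}]\le 1$ (true for depth-type weights scaled into $[0,1]$, as you half-note) or a fixed gap in the strict inequality to absorb it; this is the same order of constant-factor slack ($2$, $\sqrt{3}$, $1+o(1)$) that the paper's own sketch absorbs silently. Second, the asymptotic, growing-effective-rank reading of conditions 1--2 that you point out is likewise implicit in the paper, inherited from \cite{ref:JASA151658_WangPengLi}, so it is a feature of the statement rather than a defect of your argument.
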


The conditions 1 and 2 in Corollary~\ref{Cor:Elliptic_ARE} are due to \cite{ref:JASA151658_WangPengLi}, and ensure that as $p \equiv p_n$ diverges with $n$, the eigenvalues of $\Sigma \in \BR^{p \times p}$ are bounded away from 0 and $\infty$. Corollary~\ref{Cor:Elliptic_ARE} can be proved using Corollary~\ref{Cor:ARE}, the fact that for elliptical distributions $\Psi_1$ is non-singular \citep{ref:SPL12765_Taskinenetal}, then using and expanding upon Lemma A.5 in \cite{ref:JASA151658_WangPengLi}. We give the technical details in the supplementary material.

Obtaining affine invariant weight functions is not challenging. Most functions arising in the context of data depths are affine invariant. Corollary~\ref{Cor:Elliptic_ARE} implies that there is a wide frame of distributions and  choices of weight functions where there is a benefit to considering weighted spatial medians. In fact, Corollary~\ref{Cor:Elliptic_ARE} can be established under just 
location invariance condition on weights. We use the stronger affine invariance condition
here only because the subsequent sections use weights based on affine invariant 
data depth functions. Also note that in case \eqref{eqn:CorElliptic_ARE_eqn} does not hold for the original form of a weight function, one can scale all weights by an appropriate constant to reduce the value in the left hand side of \eqref{eqn:CorElliptic_ARE_eqn} to satisfy the condition.

In actual computations, in place of $W (x, \BF)$ we propose using $W (x, \BF_{n})$ where
$ \BF_{n}(z) = n^{-1} \sum_{i = 1}^{n} \prod_{j=1}^{p}
\cI (X_{ij} \leq z_j ), z \in \BR^p
$ is the empirical distribution function. Up to first order asymptotics, the analysis remain unchanged from the above for this modified weight function.

\subsection{Examples of affine invariant weights}

We now illustrate some specific choices of weight functions that are compatible with the 
conditions of Corollary~\ref{Cor:Elliptic_ARE} and the results in the rest of this paper.
 These arise as easy transformations 
of \textit{data-depth functions}. A data depth function 
is defined on $\cX \times \cF$, where $\cF$ is  a fixed set of probability measures.
The main property of a data-depth function is that for every probability measure 
$\BF \in \cF$, there exists a constant 
$\mu_{\BF} \in  \cX$ such that for any $t \in [ 0, 1]$ and $x \in \cX$,
\baq 
D ( \mu_{\BF} ; \BF ) \geq D ( \mu_{\BF} + t ( \bfx - \mu_{\BF} ); \BF ). 
\label{eq:Peripherality}
\eaq
That is, for every fixed $\BF$, the data-depth function achieves a supremum at 
$\mu_{F}$, and is non-decreasing in every direction away from $\mu_{F}$, thus providing 
a  center-outward partial ordering of points in $\cX$. There are generally several 
algebraic and analytic properties assumed for data-depth functions to elicit interesting 
mathematical properties, see for example \cite{ref:DIMACS061_Serfling, ref:AoS00461_ZuoSerfling} for details.

The spherically symmetric case of an elliptical distribution is realized with $\Sigma = \sigma^{2} \BI_{p}$ for some $\sigma^{2} > 0$. We fix the notation $Z = \Sigma^{-1/2} (X - \mu)$, and let 
$Z \sim \BF_{Z}$. Note that $\BF_{Z}$ is a spherically symmetric distribution and hence
depends only on $|z|$, and $\BE Z = \mathbf{0}_{p} \in \BR^{p}$ and $\BV Z = \BI_{p}$. Taking affine invariant data depth functions as weights ensures that $W(X, \BF) = W(Z, \BF_Z)$. It is now easy to show that results in this paper are valid for the weight functions (with appropriate scaling to satisfy \eqref{eqn:CorElliptic_ARE_eqn})
($i$) $W_{HSD} (X) = \BF_{Z_1} (| Z |)$ derived from the \textit{half-space depth},  
($ii$) $W_{MhD} (X) = |Z|^{2}/(1 + | Z |^2)$ derived from the \textit{Mahalanobis depth}, 
and ($iii$) $W_{PD} (X) = |Z|/(1+ | Z |/MAD(Z_1))$, where $MAD$ stands for median 
absolute deviation, derived from the \textit{projection depth}. We omit the technical 
details. These three weight functions give a center-inward partial ordering, thus essentially 
quantifying \textit{peripherality} instead of \textit{depth}.
Note however, that our results below are of much more general form, and 
these three special choices of weights only serve as important illustrative examples 
to achieve desirable robustness and efficiency balance in data analysis.

\section{A robust measure of dispersion} 
\label{Sec:WSDispersion1}
From this section onwards, we assume that $\cX = \BR^{p}$, that is, the support of the 
random variable under study is the $p$-dimensional Euclidean plane, and the data 
$X_{1}, \ldots, X_{n}$ are independent and identically distributed from an 
elliptical distribution $\BF$ with parameters $\mu$ and $\Sigma$.  We also 
assume that $X_{1}$ is absolutely continuous, with $\BP [ | X_{1}| = 0] = 0$, 
and that $\Sigma$ is positive definite. This eliminates technicalities arising from rank deficient cases, and makes the weight functions affine invariant.
Thus, we essentially restrict the rest of this paper to the same  framework as in Corollary~\ref{Cor:Elliptic_ARE}. 
Most of the results below generalize to the case where 
$\cX$ is a separable Hilbert space, however additional technicalities are involved, as in 
\cite{ref:AoS112852_Balietal_PCA_Functional_Robust}, and will be considered 
in a future project. 

\subsection{The Weighted Sign Covariance Matrix}
Consider the spectral decomposition of $\Sigma$ given by
 $\Sigma = \Gamma\Lambda\Gamma^T$, where $\Gamma$ is an orthogonal matrix 
 and $\Lambda$ is diagonal with positive diagonal elements $\lambda_1 \geq \ldots \geq 
 \lambda_p$.
Also denote the $i$-th eigenvector of $\Sigma$ by 
$\bfgamma_{i} = (\gamma_{i, 1}, \ldots, \gamma_{i, p})^T$ for $1 \leq i \leq p$. Thus, the 
$i$-th column of $\Gamma$ is $\bfgamma_{i}$.
  In the rest of this paper we use the notation 
 $\Sigma^{-1/2} = \Lambda^{-1/2} \Gamma^{T}$, and 
 hence $Z = \Lambda^{-1/2} \Gamma^T (X - \mu)$. 
Recall from Section~\ref{Sec:WSQuantiles} that we use the notation $\BF_{Z}$ for 
the distribution of $Z$, and that $\BF_{Z}$ is a spherically symmetric distribution 
and hence depends only on $|z|$.
Additionally, to simplify notations, for any random variable $X \sim \BF$, we occasionally 
use the abbreviated notation $W (X) \equiv W (X, \BF)$. Note that $W (X)$ is a 
\textit{random weight}, and takes the same value as $W (Z, \BF_{Z})$.  
Also, as noted in  Corollary~\ref{Cor:Elliptic_ARE}, 
 $W (Z)$ is a function of $|Z|$ only. 
 We additionally assume that $\BE W^{2} (X) < \infty$.
 
It is convenient to write 
\ban
X = \mu + R \Gamma \Lambda^{1/2}  U,
\ean
where $U$  is a random variable uniformly distributed on the unit sphere 
${\cS}_{0; 1} = \{ x \in \cX: | x | = 1 \}$ and $R$ is another random variable independent 
of $U$ satisfying $\BE R^{2} = p$. Note that  $Z = R U$, and $|Z| = R$, $Z/|Z| = U$. 
Then we have
\ban 
S (X; \mu) & = |X - \mu|^{-1} (X - \mu) 
= | \Lambda^{1/2} R U|^{-1} R \Gamma  \Lambda^{1/2}  U\\
& =  | \Lambda^{1/2} U|^{-1} \Gamma \Lambda^{1/2}  U 
=  | \Lambda^{1/2} Z|^{-1} \Gamma \Lambda^{1/2}  Z.
\ean

As a robust surrogate for $X - \mu$, we consider the  following  random variable 
\baq
\tilde{X} = W (X, \BF) S(X; \mu) \equiv W (Z, \BF_{Z})  
| \Lambda^{1/2} Z|^{-1} \Gamma \Lambda^{1/2}  Z. 
\label{eq:tildeX}
\eaq
In samples, the equivalent for $\tilde{X} $ is 
$\hat{\tilde{X}} = W (X, \BF_{n}) S(X; \hat{\mu})$ for a suitable location estimator 
$\hat{\mu}$, for example, the weighted spatial median. 
We fix the notation $\BS (X; \mu) =  S (X; \mu) S( X; \mu)^T$, and define the following dispersion  parameter:
\baq
\tilde{\Sigma} = \BE \tilde{X} \tilde{X}^{T}
=  \BE W^{2} (X, \BF) \BS (X; \mu). 
\label{eq:tildeSigma}
\eaq
In the following Theorem, we establish that the eigenvectors of ${\Sigma}$ and $\tilde{\Sigma}$ are identical, although their eigenvalues may be different.
  
\begin{Theorem}
\label{Thm:WSVariance}
Under the conditions listed above, we have
$\tilde{\Sigma} = \Gamma \tilde{\Lambda} \Gamma^{T}$, where 
$\tilde{\Lambda} = \Lambda^{1/2} \BE W^{2} (X) 
	\BE [UU^{T}/(U^{T}\Lambda U)] \Lambda^{1/2}$
is a diagonal matrix.  Thus, the eigenvectors of ${\Sigma}$ and 
 $\tilde{\Sigma}$ are identical.
\end{Theorem}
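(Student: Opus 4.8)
The plan is to substitute the stochastic representation $X = \mu + R\,\Gamma\Lambda^{1/2}U$ into the definition \eqref{eq:tildeSigma} and then exploit the reflection symmetry of the uniform law of $U$ on $\cS_{0;1}$. From the computation displayed just before \eqref{eq:tildeX} we already have $S(X;\mu) = |\Lambda^{1/2}U|^{-1}\Gamma\Lambda^{1/2}U$, and since the weight function is affine invariant and $\BF_Z$ is spherically symmetric, $W(X) = W(Z,\BF_Z)$ depends on $Z$ only through $|Z| = R$; in particular $W(X)$ is a measurable function of $R$ alone, hence independent of $U$.

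First I would form $\tilde X\tilde X^T = W^2(X)\,S(X;\mu)S(X;\mu)^T$, use $|\Lambda^{1/2}U|^2 = U^T\Lambda U$, and pull the nonrandom factors $\Gamma\Lambda^{1/2}$ and $\Lambda^{1/2}\Gamma^T$ out of the expectation, obtaining
\[
\tilde\Sigma \;=\; \Gamma\Lambda^{1/2}\;\BE\!\left[\frac{W^2(X)\,UU^T}{U^T\Lambda U}\right]\Lambda^{1/2}\Gamma^T .
\]
Because $\Sigma$ is positive definite, $U^T\Lambda U \ge \lambda_p > 0$, so together with $\BE W^2(X) < \infty$ the inner expectation is well defined and finite; and because $W(X)$ is independent of $U$ it splits as $\BE W^2(X)\cdot\BE[UU^T/(U^T\Lambda U)]$. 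This already identifies $\tilde\Sigma = \Gamma\tilde\Lambda\Gamma^T$ with $\tilde\Lambda = \Lambda^{1/2}\,\BE W^2(X)\,\BE[UU^T/(U^T\Lambda U)]\,\Lambda^{1/2}$, matching the formula in the statement.

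The remaining and only substantive point is that $\tilde\Lambda$ is diagonal, for which it suffices to show $M := \BE[UU^T/(U^T\Lambda U)]$ is diagonal. For any diagonal sign matrix $D = \diag(\pm1,\dots,\pm1)$ the uniform law on the sphere is invariant, so $DU \stackrel{d}{=} U$, while $D\Lambda D = \Lambda$ since $\Lambda$ is diagonal, whence $(DU)^T\Lambda(DU) = U^T\Lambda U$. Substituting $DU$ for $U$ gives $M = DMD$ for every such $D$; choosing $D$ that flips the sign of exactly one coordinate forces every off-diagonal entry of $M$ to vanish. Then $\tilde\Lambda$ is a product of three diagonal matrices, hence diagonal, and $\Sigma = \Gamma\Lambda\Gamma^T$ and $\tilde\Sigma = \Gamma\tilde\Lambda\Gamma^T$ share the eigenvectors $\bfgamma_1,\dots,\bfgamma_p$.

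I do not anticipate a real obstacle: the two places needing care are the claim that $W(X)$ is a function of $R$ only --- which is exactly the affine-invariance/sphericity consequence recorded after Corollary~\ref{Cor:Elliptic_ARE} --- and the integrability of the ratio, handled by positive-definiteness of $\Sigma$ and $\BE W^2(X) < \infty$. Note the theorem asserts only diagonality of $\tilde\Lambda$, not any particular ordering of its diagonal entries, so no further comparison of the entries of $M$ is required.
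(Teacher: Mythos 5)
Your proposal is correct and follows essentially the same route as the paper's proof: the factorization $\BE W^{2}(X)\,\BE[UU^{T}/(U^{T}\Lambda U)]$ via independence of $W(X)$ (a function of $R=|Z|$ alone) from $U$, and the coordinate sign-flip symmetry of the uniform law on the sphere (your $DU\stackrel{d}{=}U$ argument is the paper's $\tilde U$ construction, one coordinate at a time) to kill the off-diagonal entries. Your added remarks on integrability of the ratio and on explicitly pulling out $\Gamma\Lambda^{1/2}$ are harmless elaborations of the same argument.
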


\begin{proof}[Proof of Theorem~\ref{Thm:WSVariance}]

Fix any index $i \in \{1, \ldots, p \}$. Consider the vector $\tilde{U}$ such that 
\ban 
\tilde{U}_{j} = \left\{ \begin{array}{ll}
U_{j} & \text{ if } j \ne i, \\
- U_{i} & \text{ if } j = i. 
\end{array}
\right. 
\ean
Then $\tilde{U}$ and $U$ have the same distribution, and note that 
$U^{T} \Lambda U = \tilde{U}^{T} \Lambda \tilde{U}$ almost surely. Consequently, 
for any $j \ne i$ we have
\ban
\BE {\frac{U_{i} U_{j}}{U^{T} \Lambda U}}  
& = \BE {\frac{\tilde{U}_{i} \tilde{U}_{j}}{\tilde{U}^{T} \Lambda \tilde{U}}} 
= - \BE {\frac{{U}_{i} {U}_{j}}{U^{T} \Lambda U}}.
\ean
Consequently, 
$\BE S (X; \mu) S(X; \mu)^{T} = \Gamma \Lambda_{S} \Gamma^{T}$, 
as established in Theorem~1 of \citet{ref:SPL12765_Taskinenetal}.

Also, since the weight $W(X)$ is a function of $|Z| = R$, we have that $W(X)$ is 
independent of $S (X; \mu)$. Consequently, we have  
  \ban
  \tilde{\Sigma} & = \BE \tilde{X} \tilde{X}^{T} 
 = \BE W^{2} (X) S (X; \mu) S(X; \mu)^{T} \\
& =  \BE W^{2} (X)  \BE S (X; \mu) S(X; \mu)^{T} 
= \Gamma \Lambda_{W} \Gamma^{T}, 
\ean
where $\Lambda_{W}$ is a diagonal matrix.
\end{proof}

\subsection{Sample version of $\tilde \Sigma$}
We now discuss the properties of the sample version of $\tilde \Sigma$, say $\widehat{\tilde{\Sigma}}$ computed from $\bfX$. In practice, we cannot obtain $W (x) \equiv W (x, \BF)$, and consequently use  $W (x, \BF_{n})$ instead. We assume the following conditions:

\begin{description}
\item[(B1)]{\it Bounded weights}: The weights $W (\cdot, \cdot)$ are bounded functions. 

\item[(B2)]{\it Uniform convergence}:
\ban 
\sup_{x \in \cX} | W (x, \BF_{n}) - W (x, \BF) | \rightarrow 0
\ean 
almost surely as $n \rightarrow \infty $.

\item[(B3)]
{\it Smoothness under perturbation:} For all $\BF \in \cF$, there exists a $\delta > 0$, possibly depending on $\BF$, such that 
for any $\epsilon \in (0, \delta)$ 
\ban 
\sup_{x \in \cX} \Bigl| W \bigl( x, \BF \bigr) - 
W \bigl( x, (1 - \epsilon)\BF + \epsilon \delta_{x} \bigr) \Bigr| \leq 
\epsilon.
\ean 
\end{description}
In the above, $\delta_{x}$ denotes point mass at $x$. 
These properties are easily satisfied  for  weight functions  derived from standard depth functions, for example, $W_{HSD} (\cdot)$, $W_{MhD} (\cdot)$ and $W_{PD} (\cdot)$ discussed earlier.

The following result allows us to use the empirical, plug-in weights and an estimated location parameter in the weighted dispersion estimator. A natural choice for the location parameter estimator is the solution to $\sum_{i = 1}^{n} \tilde{X}_{i} = 0$, which is the same as the sample version of the weighted spatial median discussed in Section~\ref{Sec:WSQuantiles}.

\begin{Lemma} \label{Lemma:lemma1}
Assume that $\BE | X - \mu |^{-4} < \infty$. Also assume that 
we have a location estimator  $\hat{\mu}_n$ satisfying 
$\BE |\hat{\mu}_n  - \mu |^{4} = O (n^{-2}) $. Then
\ban
\frac{1}{n} \sum_{i=1}^{n} W_{n}^{2} (X_{i}, \BF_{n}) \BS (X_{i}; \hat{\mu}_{n})  
= \frac{1}{n} \sum_{i=1}^{n} W^{2} (X_{i}, \BF) \BS (X_{i}; {\mu})
+ R_n,
\ean
where for any $c \in \BR^{p}$ such that for $| c | = 1$, we have 
$\BE  c^{T} R_{n} c = o (n^{-1})$.
\end{Lemma}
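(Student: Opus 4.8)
The plan is to split $R_n=R'_n+R''_n$, isolating the error from estimating the centre from the error from estimating the weight. For each $i$,
\[
W_n^2(X_i,\BF_n)\,\BS(X_i;\hat\mu_n)-W^2(X_i,\BF)\,\BS(X_i;\mu)
=W^2(X_i,\BF)\bigl[\BS(X_i;\hat\mu_n)-\BS(X_i;\mu)\bigr]
+\bigl[W_n^2(X_i,\BF_n)-W^2(X_i,\BF)\bigr]\BS(X_i;\hat\mu_n),
\]
so it suffices to prove $\BE\,c^{T}R'_nc=o(n^{-1})$ and $\BE\,c^{T}R''_nc=o(n^{-1})$ for each unit vector $c$ (the genuinely mixed cross term is of strictly smaller order and gets absorbed into the analysis of $R''_n$). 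Two tools are used throughout: the Haberman--Niemiro/Bahadur representation behind Theorem~\ref{Thm:WSMedian}, $\hat\mu_n-\mu=n^{-1}\sum_j\psi(X_j)+\mathrm{Rem}_n$ with $\BE\psi(X_1)=0$ and $\BE|\mathrm{Rem}_n|^2=o(n^{-1})$; and a leave-one-out device, writing $\hat\mu_n^{(-i)}$ for the estimator recomputed without $X_i$, which is independent of $X_i$, lies within $O_{L^4}(n^{-1})$ of $\hat\mu_n$, and satisfies $\BE\bigl[(\hat\mu_n^{(-1)}-\mu)(\hat\mu_n^{(-1)}-\mu)^{T}\bigr]=(n-1)^{-1}V_W+o(n^{-1})$ with $V_W$ the asymptotic covariance of $\hat\mu_n$.

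For the centre term I would Taylor-expand $f_i(\mu'):=c^{T}\BS(X_i;\mu')c=(c^{T}S(X_i;\mu'))^2$ to second order about $\mu$. On the event $\{|\hat\mu_n-\mu|>\tfrac12|X_i-\mu|\}$ use $|c^{T}\BS c|\le1$ and the global bound $|S(x;\mu_1)-S(x;\mu_2)|\le 2|\mu_1-\mu_2|/|x-\mu_1|$; decoupling $\hat\mu_n$ from $X_i$, the hypotheses $\BE|X-\mu|^{-4}<\infty$ and $\BE|\hat\mu_n-\mu|^4=O(n^{-2})$ make this event's total contribution $O(n^{-2})$. On the complement, the first-order coefficient $\nabla f_i(\mu)=|X_i-\mu|^{-1}\bigl(2(c^{T}S_i)^2S_i-2(c^{T}S_i)c\bigr)$ is odd under $Z\mapsto-Z$ — exactly the symmetry used to prove Theorem~\ref{Thm:WSVariance} — so $\BE[W^2(X)\nabla f(\mu)]=0$. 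Substituting the Bahadur representation (the $i\neq j$ terms vanish by independence and $\BE\psi=0$; the remainder cross term is $o(n^{-1})$ by Cauchy--Schwarz and $\BE|\mathrm{Rem}_n|^2=o(n^{-1})$) and expanding the quadratic term after the leave-one-out swap, I obtain
\[
\BE\,c^{T}R'_nc=\frac1n\Bigl(\BE\bigl[W^2(X)\,\nabla f(\mu)^{T}\psi(X)\bigr]+\tfrac12\Tr(HV_W)\Bigr)+o(n^{-1}),
\qquad H:=\BE_X\bigl[W^2(X)\nabla^2 f(\mu)\bigr].
\]
The bracketed constant must vanish: for the weighted spatial median at an elliptical law (the centre recommended just before the lemma) one has $\psi(X)=\Psi_{2W}^{-1}W(X)S(X;\mu)$ with $\Psi_{2W}$ and $V_W$ diagonalised by $\Gamma$ and $\Psi_{1W}=\tilde\Sigma$ as in Theorem~\ref{Thm:WSVariance}, and the identity $\BE[W^2\nabla f^{T}\psi]+\tfrac12\Tr(HV_W)=0$ would be checked by a uniform-on-the-sphere moment computation of the same flavour as the proof of Theorem~\ref{Thm:WSVariance}.

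For the weight term, $|c^{T}R''_nc|\le n^{-1}\sum_i|W_n^2(X_i,\BF_n)-W^2(X_i,\BF)|\le 2W_{\max}\sup_x|W(x,\BF_n)-W(x,\BF)|$ up to the $W_n$-versus-$W$ gap controlled by \textbf{(B2)}; by \textbf{(B1)}--\textbf{(B2)} and dominated convergence this is $o(1)$, but only $O(n^{-1/2})$ in general, so the rate again has to come from cancellation in the bias. For each $i$ write $\BF_n=(1-\tfrac1n)\BF_n^{(-i)}+\tfrac1n\delta_{X_i}$: \textbf{(B3)} bounds the self-contribution $|W(X_i,\BF_n)-W(X_i,\BF_n^{(-i)})|$ by $\tfrac1n$, and since $X_i$ is independent of $\BF_n^{(-i)}$ one may expand $W^2(X_i,\BF_n^{(-i)})-W^2(X_i,\BF)$ to first order in $\BF_n^{(-i)}-\BF=(n-1)^{-1}\sum_{j\neq i}(\delta_{X_j}-\BF)$. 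The resulting cross term has zero bias — its off-diagonal part because $\BE(\delta_{X_j}-\BF)=0$, its diagonal part because after the affine reduction the weight-influence kernel is a function of $|Z|$ alone while $c^{T}\BS(X_i;\mu)c$ integrates out the direction, leaving a scalar that the $Z\mapsto-Z$ symmetry kills — and the second-order remainder is $O(n^{-1})$ in $L^2$, hence $o(n^{-1})$ in the bias. Since \textbf{(B1)}--\textbf{(B3)} hold for $W_{HSD}$, $W_{MhD}$ and $W_{PD}$, this covers those weights.

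The main obstacle is that the crude magnitude estimates give only $O(n^{-1})$ for the second-order piece of $R'_n$ (and for its first-order diagonal term) and only $O(n^{-1/2})$ for $R''_n$, so pushing the \emph{bias} below $n^{-1}$ is the entire content of the lemma. This forces combining four ingredients: the $Z\mapsto-Z$ sign symmetry inherited from Theorem~\ref{Thm:WSVariance}; the mean-zero property of the location influence function; the algebraic identity $\BE[W^2\nabla f^{T}\psi]+\tfrac12\Tr(HV_W)=0$ forced by elliptical symmetry; and $L^2$ (not merely in-probability) control of both the Bahadur remainder $\mathrm{Rem}_n$ and the leave-one-out perturbations. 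Establishing the algebraic identity and the $L^2$ remainder bound is the technically heaviest part; everything else is routine bookkeeping with the moment hypotheses.
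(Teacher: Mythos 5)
Your proposal is incomplete precisely where all of the difficulty sits. The analysis of the centre term $R'_n$ reduces, by your own account, to the identity $\BE\bigl[W^2(X)\,\nabla f(\mu)^{T}\psi(X)\bigr]+\tfrac12\Tr(HV_W)=0$, which you assert ``must vanish'' and defer to an unperformed ``uniform-on-the-sphere moment computation''. Nothing in the hypotheses of Lemma~\ref{Lemma:lemma1} forces such a cancellation: the lemma is stated for an \emph{arbitrary} location estimator satisfying $\BE|\hat{\mu}_n-\mu|^4=O(n^{-2})$, with no assumption that $\hat{\mu}_n$ is the weighted spatial median, that it admits the Haberman--Niemiro linearization of Theorem~\ref{Thm:WSMedian}, or that its influence function $\psi$ has the specific form $\Psi_{2W}^{-1}W(X)S(X;\mu)$ you plug in. A proof whose key step is an algebraic identity tied to one particular choice of $\hat{\mu}_n$ cannot deliver the statement as written, and for a generic $\hat{\mu}_n$ with the stated fourth-moment rate there is no reason for the two $n^{-1}$-order bias contributions you isolate to cancel. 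The weight term has the same problem: your $R''_n$ argument posits a first-order von Mises expansion of $W(x,\cdot)$ at $\BF$ in the direction $\BF_n^{(-i)}-\BF$, with a mean-zero linear term and an $L^2$-controlled quadratic remainder, but conditions \textbf{(B1)}--\textbf{(B3)} supply only boundedness, uniform almost-sure convergence, and a Lipschitz bound in the single contamination direction $\delta_x$; the differentiability of the weight functional in its distribution argument, the zero-bias claim for the linear term, and the $o(n^{-1})$ bias of the remainder are all additional unproved assertions. Since you correctly observe that the crude magnitude bounds only give $O(n^{-1})$, the entire content of the lemma is concentrated in exactly the steps you have not carried out.

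For comparison, the paper's own proof does not attempt any bias cancellation. It splits the difference into a weight-error term, a centre-error term and a cross term, and bounds each directly: for the weight term it uses $H(X_i)=|X_i-\mu|^{-2}\bigl(c^{T}(X_i-\mu)\bigr)^2\leq 1$ together with the leave-one-out empirical distribution and \textbf{(B3)} for the self-influence part (the one ingredient you share) and \textbf{(B2)} for the rest; the centre and cross terms are handled by elementary bounds on $\BS(X_i;\hat{\mu}_n)-\BS(X_i;\mu)$ combined with Cauchy--Schwarz and the moment hypotheses $\BE|X-\mu|^{-4}<\infty$, $\BE|\hat{\mu}_n-\mu|^4=O(n^{-2})$. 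No Bahadur representation, no Taylor expansion of the sign outer product, and no cancellation identity appear. If you wish to pursue your finer route you must actually establish the identity (and then the lemma you prove is for the weighted spatial median with a smooth, Fr\'{e}chet-differentiable weight functional, a narrower statement); otherwise the argument should revert to the paper's direct term-by-term bounding.
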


\begin{proof}[Proof of Lemma~\ref{Lemma:lemma1}]
Note that the moment condition above is satisfied by default since $X$ has an elliptical distribution \citep{durre14}. The proof is mostly algebra, and we provide a sketch of the main arguments.
We have
\ban 
& \frac{1}{n} \sum_{i=1}^{n} W_{n}^{2} (X_{i}, \BF_{n}) \BS (X_{i}; \hat{\mu}_{n}) \\
= & \frac{1}{n} \sum_{i=1}^{n} W^{2} (X_{i}, \BF) \BS (X_{i}; {\mu})
+ \frac{1}{n} \sum_{i=1}^{n}  
\bigl\{ W_{n}^{2} (X_{i}, \BF_{n})  -   W^{2} (X_{i}, \BF) \bigr\} \BS (X_{i}; {\mu}) \\
& + \frac{1}{n} \sum_{i=1}^{n} W^{2} (X_{i}, \BF) \bigl\{ 
\BS (X_{i}; \hat{\mu}_{n}) - \BS (X_{i}; {\mu}) \bigr\} \\
& + \frac{1}{n} \sum_{i=1}^{n}  
\bigl\{ W_{n}^{2} (X_{i}, \BF_{n})  -   W^{2} (X_{i}, \BF) \bigr\} 
\bigl\{ \BS (X_{i}; \hat{\mu}_{n}) - \BS (X_{i}; {\mu}) \bigr\}\\
& = \frac{1}{n} \sum_{i=1}^{n} W^{2} (X_{i}, \BF) \BS (X_{i}; {\mu})
+ T_{2} + T_{3} + T_{4}. 
\ean

Using the stated technical conditions, we can now show that 
$\BE  c^{T} T_{j} c = o (n^{-1})$ for $j = 2, 3, 4$. For illutration, we present
the case for $T_{2}$ below. 

Notice that the $(j, k)$-th element of $T_{2}$ is given by 
\ban 
n^{-1} \sum_{i=1}^{n}  | X_{i} - \mu |^{-2}
\bigl\{ W_{n}^{2} (X_{i}, \BF_{n})  -   W^{2} (X_{i}, \BF) \bigr\} 
(X_{i, j} - \mu_{j}) (X_{i, k} - \mu_{k}), 
\ean
and hence 
\ban 
& c^{T} T_{2} c  = \sum_{j, k} c_{j} c_{k}T_{2, j, k} \\
& = n^{-1} \sum_{i=1}^{n}  | X_{i} - \mu |^{-2}
\bigl\{ W_{n}^{2} (X_{i}, \BF_{n})  -   W^{2} (X_{i}, \BF) \bigr\} 
\bigl( \sum_{j} c_{j} (X_{i, j} - \mu_{j}) \bigr)^{2} \\
& \leq 
M n^{-1} \sum_{i=1}^{n}  | X_{i} - \mu |^{-2}
\bigl\{ | W_{n} (X_{i}, \BF_{n})  -   W (X_{i}, \BF) | \bigr\} 
\bigl( c^{T} (X_{i} - \mu) \bigr)^{2} \\
& \leq 
M n^{-1} \sum_{i=1}^{n}  | X_{i} - \mu |^{-2}
\bigl\{ | W_{n} (X_{i}, \BF_{n})  -   W_{n} (X_{i}, \BF_{n, -i})  | \bigr\} 
\bigl( c^{T} (X_{i} - \mu) \bigr)^{2} 
\\ & \hspace{1cm}
+ M n^{-1} \sum_{i=1}^{n}  | X_{i} - \mu |^{-2}
\bigl\{ | W_{n} (X_{i}, \BF_{n, -i})  -   W (X_{i}, \BF) | \bigr\} 
\bigl( c^{T} (X_{i} - \mu) \bigr)^{2} 
\\
& =  M n^{-1} \sum_{i=1}^{n}  T_{2 1 i}  + M n^{-1} \sum_{i=1}^{n}   T_{2 2 i} \\
& = T_{21} + T_{22}.
\ean
Let $H (X_{i}) = | X_{i} - \mu |^{-2} \bigl( c^{T} (X_{i} - \mu) \bigr)^{2}$, and notice 
that $H (X_{i}) \leq 1$ almost surely for $|c| = 1$.  
Now notice that conditional on $X_{i}$ except for a null set $A_{i}$ (possibly depending 
on $X_{i}$) we have  
$T_{2 1 i} \leq n^{-1}  H (X_{i}) $. Thus, except for a null set 
$A_{1} \cap \ldots \cap A_{n}$, $T_{21} \leq M n^{-2}  H (X_{i})$ and the conclusion 
follows for this part.

The argument for $T_{22}$ follows a similar argument.
\end{proof}

Let $vec(\BS (X; \mu))$ be the vectorized version of  $\BS (X; \mu)$.
We are now in a position to state the result for consistency of the sample 
version of $\tilde{\Sigma}$,

\begin{Theorem} \label{Theorem:CLT1}
Assume the conditions of Lemma~\ref{Lemma:lemma1}. Then
\ban
& n^{1/2} \sum_{i = 1}^{n} \Bigl( 
W_{n}^{2} (X_{i}, \BF_{n}) vec(\BS (X_{i}; \hat{\mu}_{n})) 
- \BE W^{2} (X_{i}) vec(\BS (X_{i}; \mu)) \Bigr)\\
& \hspace{0.5cm} \draro
N_{p^2} \bigl( 0, V_{W} \bigr),
\ean
where $V_{W} = \BV [W^{2} (X, \BF) vec(\BS (X; {\mu})) ] $.
\end{Theorem}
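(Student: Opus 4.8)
The plan is to invoke Lemma~\ref{Lemma:lemma1} to replace the empirical weights $W_{n}^{2}(X_{i},\BF_{n})$ and the estimated center $\hat{\mu}_{n}$ by their population analogues $W^{2}(X_{i},\BF)$ and $\mu$ at a cost that is $o_{P}(1)$ after $\sqrt{n}$-scaling, and then to apply the ordinary multivariate central limit theorem to what remains, which is a sum of i.i.d.\ bounded random vectors.

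Concretely, I would set $Y_{i} = W^{2}(X_{i},\BF)\,\ve(\BS(X_{i};\mu))$ and $\widehat{Y}_{i} = W_{n}^{2}(X_{i},\BF_{n})\,\ve(\BS(X_{i};\hat{\mu}_{n}))$. Since $\BS(X_{i};\mu) = S(X_{i};\mu)S(X_{i};\mu)^{T}$ has Frobenius norm $1$ almost surely and the weights are bounded by \textbf{(B1)}, the $Y_{i}$ are i.i.d.\ bounded $p^{2}$-vectors with $\BE Y_{1} = \BE W^{2}(X_{1})\ve(\BS(X_{1};\mu))$ and $\BV Y_{1} = V_{W}$ finite. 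Hence the Lindeberg--L\'evy CLT in $\BR^{p^{2}}$ gives
\ban
n^{-1/2}\sum_{i=1}^{n}\bigl(Y_{i} - \BE Y_{1}\bigr)\ \draro\ N_{p^{2}}(0, V_{W}).
\ean

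Next I would let $R_{n}$ be the symmetric $p\times p$ matrix with $\ve(R_{n}) = \frac{1}{n}\sum_{i=1}^{n}(\widehat{Y}_{i} - Y_{i})$, which is the remainder term produced by Lemma~\ref{Lemma:lemma1}. The estimates in the proof of that lemma in fact bound $\BE|c^{T}R_{n}c|$, uniformly over unit vectors $c$, by $o(n^{-1})$; applying this to $c = e_{j}$ and to $c = (e_{j}+e_{k})/\sqrt{2}$ and polarizing (using symmetry of $R_{n}$) yields $\BE|(R_{n})_{jk}| = o(n^{-1})$ for every $j,k$, so $\BE\|\ve(R_{n})\| = o(n^{-1})$ and therefore $\sqrt{n}\,\ve(R_{n})\to 0$ in $L^{1}$, hence $\sqrt{n}\,\ve(R_{n}) = o_{P}(1)$. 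Since the $\sqrt{n}$-normalized sum in the statement equals $n^{-1/2}\sum_{i=1}^{n}(Y_{i} - \BE Y_{1}) + \sqrt{n}\,\ve(R_{n})$, Slutsky's theorem delivers the claimed convergence to $N_{p^{2}}(0, V_{W})$.

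The delicate point is the step that turns Lemma~\ref{Lemma:lemma1} into a statement about the whole vectorized remainder vanishing after $\sqrt{n}$-scaling: one has to confirm that the lemma's $L^{1}$ control is on the absolute value $\BE|c^{T}R_{n}c|$ and not merely on the signed quantity, that it is uniform in the unit vector $c$ so that polarization reaches all $p^{2}$ entries, and that the rate is genuinely $o(n^{-1})$ rather than $O(n^{-1})$, since only the former survives multiplication by $\sqrt{n}$. Assumptions \textbf{(B1)}--\textbf{(B3)} together with the moment bounds $\BE|X-\mu|^{-4}<\infty$ and $\BE|\hat{\mu}_{n}-\mu|^{4}=O(n^{-2})$ are exactly what make this rate available; once the remainder is so controlled, the rest is a routine CLT-plus-Slutsky argument.
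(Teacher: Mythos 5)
Your proposal is correct and follows essentially the same route as the paper, whose proof is exactly "CLT for the i.i.d. terms $W^{2}(X_{i},\BF)\ve(\BS(X_{i};\mu))$ plus negligibility of the remainder supplied by Lemma~\ref{Lemma:lemma1}, then Slutsky." Your polarization step merely makes explicit how the lemma's quadratic-form control (which its proof indeed establishes for $\BE\bigl|c^{T}R_{n}c\bigr|$, and which is only needed for finitely many unit vectors $c$ since $p$ is fixed) gives $\sqrt{n}\,\ve(R_{n})=o_{P}(1)$, with the displayed normalization read as $n^{-1/2}\sum_{i=1}^{n}$, as intended.
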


The asymptotic normality  follows from our assumptions and as a direct consequence 
of Lemma~\ref{Lemma:lemma1}. Incidentally, an expression for $V_{W}$
can be explicitly obtained in terms of $\Gamma$, $\Lambda$ and $\BF$, but is 
algebraic in nature. We present it in the supplementary material.

We now use Theorem~\ref{Theorem:CLT1} to obtain consistency results for 
the eigenvectors obtained from $\widehat{\tilde{\Sigma}} = n^{-1} 
\sum_{i=1}^{n} W^{2} (X_{i}, \BF) \BS (X_{i}; \hat{\mu}_{n})$. 
Suppose that  $\tilde{\Lambda}_{1} > \tilde{\Lambda}_{2} > \ldots > \tilde{\Lambda}_{p}$
are the eigenvalues of $\tilde{\Sigma}$, which we assume are all distinct values. 

\begin{Theorem} \label{Theorem:Eigen1}
Suppose the  spectral decomposition of $\widehat{\tilde{\Sigma}}$ is given 
by $\widehat{\tilde{\Sigma}} = \widehat{\Gamma} \widehat{\tilde{\Lambda}}
\widehat{\Gamma}^T $. Then the matrix of centered and scaled eigenvectors 
$G_{n} = n^{1/2} (\widehat{\Gamma} - \Gamma) $ 
and the vector of centered and scaled eigenvalues 
$L_{n} = n^{1/2} (\widehat{\tilde{\Lambda}} - {\tilde{\Lambda}}) $ have 
independent distributions. The distribution of the random variable $vec~(G_{n})$ converges 
weakly to a $p^2$-variate normal distribution with mean ${\bf 0}_{p^2}$ and the
variance matrix whose $(i, j)$-th block of $p \times p$ entries are given by
\baq
& \sum_{k=1, {k \neq i}}^{p} 
\Bigl[ \tilde{\Lambda}_{i} -  \tilde{\Lambda}_{k} \Bigr]^{-2}
\BE \Bigl[ 
W^4 (Z, \BF_{Z}) \big( 
\BS_{i, k} ({\Lambda}^{1/2} Z; {\bf 0})
\big)^2 
\Bigr]
\bfgamma_k \bfgamma_k^T, 
\text{ if } i = j, \label{equation:DevEq} \\
& - 
\Bigl[ \tilde{\Lambda}_{i} -  \tilde{\Lambda}_{j} \Bigr]^{-2}
\BE \Bigl[ 
W^4 (Z, \BF_{Z}) \big( 
\BS_{i, j} ({\Lambda}^{1/2} Z; {\bf 0})
\big)^2 
\Bigr]
\bfgamma_i \bfgamma_j^T; \text{ if } i \neq j.
\eaq
The distribution of  $L_{n}$ converges 
weakly to a $p$-dimensional normal distribution with mean ${\bf 0}_{p}$ and the
variance-covariance matrix whose $(i, j)$-the element is
\ban
& \BE \Bigl[ 
W^4 (Z, \BF_{Z}) \big( 
\BS_{i, i} ({\Lambda}^{1/2} Z; {\bf 0})
\big)^2 
\Bigr]
- \tilde{\Lambda}_{i}^2, \text{ if } i = j, \\
& \BE \Bigl[ 
W^4 (Z, \BF_{Z}) \big( 
\BS_{i, j} ({\Lambda}^{1/2} Z; {\bf 0})
\big)^2 
\Bigr]
- \tilde{\Lambda}_{i} \tilde{\Lambda}_{j}, \text{ if } i \neq j.
\ean
\end{Theorem}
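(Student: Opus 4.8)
The plan is to derive Theorem~\ref{Theorem:Eigen1} from the joint central limit theorem of Theorem~\ref{Theorem:CLT1} for $vec(\widehat{\tilde{\Sigma}})$ by passing through the first-order perturbation (delta-method) expansion for the eigenvalues and eigenvectors of a symmetric matrix with a simple spectrum, and then reading off the limiting covariances using the same coordinate sign-flip symmetry that appears in the proof of Theorem~\ref{Thm:WSVariance}.

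Concretely, set $E_{n} = \widehat{\tilde{\Sigma}} - \tilde{\Sigma}$, so that Theorem~\ref{Theorem:CLT1} together with Lemma~\ref{Lemma:lemma1} gives $n^{1/2}\, vec(E_{n}) \draro N_{p^{2}}(0, V_{W})$ and in particular $\| E_{n}\| = O_{P}(n^{-1/2})$. Since $\tilde{\Sigma} = \Gamma\tilde{\Lambda}\Gamma^{T}$ (Theorem~\ref{Thm:WSVariance}) has distinct eigenvalues $\tilde{\Lambda}_{1} > \ldots > \tilde{\Lambda}_{p}$, after fixing the signs of the columns of $\widehat{\Gamma}$ I would invoke the standard resolvent/Kato perturbation expansion to write
\begin{align*}
\widehat{\tilde{\Lambda}}_{i} - \tilde{\Lambda}_{i} &= \bfgamma_{i}^{T} E_{n}\, \bfgamma_{i} + O_{P}\bigl(\| E_{n}\|^{2}\bigr),\\
\widehat{\bfgamma}_{i} - \bfgamma_{i} &= \sum_{k \ne i} \frac{\bfgamma_{k}^{T} E_{n}\, \bfgamma_{i}}{\tilde{\Lambda}_{i} - \tilde{\Lambda}_{k}}\, \bfgamma_{k} + O_{P}\bigl(\| E_{n}\|^{2}\bigr),
\end{align*}
with remainders that are $O_{P}(n^{-1}) = o_{P}(n^{-1/2})$. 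Thus $L_{n}$ and $vec(G_{n})$ are, up to $o_{P}(1)$, fixed linear images of $n^{1/2}\, vec(E_{n})$, so their joint asymptotic normality with mean $\mathbf{0}$ is immediate; what remains is to compute the limiting covariances and prove independence.

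For the covariances, Lemma~\ref{Lemma:lemma1} lets me replace $E_{n}$ by $n^{-1}\sum_{t=1}^{n}\{ W^{2}(X_{t},\BF)\BS(X_{t};\mu) - \tilde{\Sigma}\}$ up to an error negligible inside every bilinear form $\bfgamma_{k}^{T}(\cdot)\bfgamma_{i}$. Using $S(X;\mu) = |\Lambda^{1/2}Z|^{-1}\Gamma\Lambda^{1/2}Z$ together with $\bfgamma_{k}^{T}\Gamma = e_{k}^{T}$ gives the identity $\bfgamma_{k}^{T}\BS(X;\mu)\bfgamma_{i} = \BS_{k,i}(\Lambda^{1/2}Z;\mathbf{0}) = |\Lambda^{1/2}Z|^{-2}\lambda_{k}^{1/2}\lambda_{i}^{1/2}Z_{k}Z_{i}$, and since $W(X) = W(Z,\BF_{Z})$ depends only on $|Z|$, it follows that $n^{1/2}\bfgamma_{k}^{T}E_{n}\bfgamma_{i} = n^{-1/2}\sum_{t=1}^{n}\{ W^{2}(Z_{t},\BF_{Z})\BS_{k,i}(\Lambda^{1/2}Z_{t};\mathbf{0}) - \tilde{\Lambda}_{i}\delta_{ki}\} + o_{P}(1)$, where $\delta_{ki}$ is the Kronecker delta and $(\tilde{\Sigma})_{ki} = \tilde{\Lambda}_{i}\delta_{ki}$ by Theorem~\ref{Thm:WSVariance}. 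Every limiting covariance is therefore of the form $\mathrm{Cov}\{ W^{2}(Z)\BS_{k,i}(\Lambda^{1/2}Z;\mathbf{0}),\, W^{2}(Z)\BS_{l,j}(\Lambda^{1/2}Z;\mathbf{0})\}$; because $\BS$ is pointwise rank one the relevant quantity is $\BE[ W^{4}\, S_{k}S_{i}S_{l}S_{j}]$ with $S_{m} = |\Lambda^{1/2}Z|^{-1}\lambda_{m}^{1/2}Z_{m}$, and since $\BF_{Z}$ is spherically symmetric, flipping the sign of a single coordinate $Z_{m}$ (which leaves $W$ and $|\Lambda^{1/2}Z|$ unchanged) forces this expectation to vanish unless every index occurs with even multiplicity in the multiset $\{ k,i,l,j\}$. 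Hence $\BE[ W^{2}\BS_{k,i}] = 0$ for $k \ne i$, giving the stated zero means; for the eigenvector covariance with $i = j$ and $k,l \ne i$ only $k = l$ survives, yielding the diagonal block of \eqref{equation:DevEq}; for $i \ne j$ only $k = j,\, l = i$ survives, and using $(\tilde{\Lambda}_{i}-\tilde{\Lambda}_{j})^{-1}(\tilde{\Lambda}_{j}-\tilde{\Lambda}_{i})^{-1} = -(\tilde{\Lambda}_{i}-\tilde{\Lambda}_{j})^{-2}$ together with $\BS_{j,i}\BS_{i,j} = \BS_{i,j}^{2}$ gives the stated off-diagonal block; and for $L_{n}$, where $\widehat{\tilde{\Lambda}}_{i}-\tilde{\Lambda}_{i} \approx \bfgamma_{i}^{T}E_{n}\bfgamma_{i}$, using $\bfgamma_{i}^{T}\BS(X;\mu)\bfgamma_{i} = \BS_{i,i}(\Lambda^{1/2}Z;\mathbf{0})$ and $\BS_{i,i}\BS_{j,j} = \BS_{i,j}^{2}$ gives the covariance $\BE[ W^{4}(Z,\BF_{Z})(\BS_{i,j}(\Lambda^{1/2}Z;\mathbf{0}))^{2}] - \tilde{\Lambda}_{i}\tilde{\Lambda}_{j}$, with no sign cancellation since $\{ i,i,j,j\}$ is automatically even. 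Finally, the independence of $vec(G_{n})$ and $L_{n}$ follows from the same device: $\mathrm{Cov}\{ W^{2}\BS_{i,i},\, W^{2}\BS_{k,j}\}$ with $k \ne j$ involves the odd monomial $S_{i}^{2}S_{k}S_{j}$ and therefore vanishes, so in the jointly Gaussian limit the two blocks are independent.

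I expect the only genuine difficulty to be justifying the first-order eigen-expansion with a remainder that is truly $o_{P}(n^{-1/2})$: this needs $\| E_{n}\| = O_{P}(n^{-1/2})$ (from Theorem~\ref{Theorem:CLT1}), the assumed separation of the $\tilde{\Lambda}_{i}$, differentiability of the eigenprojections of $\tilde{\Sigma}$ at $\tilde{\Sigma}$, and a mild strengthening of the $L^{1}$-type bound of Lemma~\ref{Lemma:lemma1} to an in-probability statement, after which the quadratic remainder of the Kato/resolvent expansion is $O_{P}(n^{-1})$. Some additional care is needed with the sign normalization of the columns of $\widehat{\Gamma}$ and with the index bookkeeping in the off-diagonal block.
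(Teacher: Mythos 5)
Your proposal is correct and follows essentially the same route as the paper: asymptotic normality of $\widehat{\tilde{\Sigma}}$ from Theorem~\ref{Theorem:CLT1}, a first-order eigen-perturbation expansion (which the paper packages as the decomposition result cited from \citet{ref:SPL12765_Taskinenetal}), and the spherical symmetry of $Z$ to kill all odd-multiplicity moments in the limiting covariance, yielding the stated blocks and the independence of $G_n$ and $L_n$. The only differences are presentational: you invoke Kato/resolvent perturbation directly in the original coordinates, whereas the paper transforms to the diagonalized distribution $\BF_\Lambda$ and reads the covariances off the explicit form of $V_W$ in its supplement.
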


The proof of this result follows from using Theorem~\ref{Theorem:CLT1} and using 
techniques similar to a corresponding result in \citet{ref:SPL12765_Taskinenetal}. We omit the algebraic details here and put it in the supplementary material.

Recall that the asymptotic variance of the $i$-th eigenvector of the 
\textit{sample covariance matrix}, say $\hat{\bfgamma}_i$ is \citep{ref:AndersonBook09}:
\begin{equation} \label{equation:covevEq}
A\BV( \sqrt n\hat \bfgamma_{i}) = 
\sum_{k=1; k \neq i}^p 
\frac{\lambda_i \lambda_k}{(\lambda_i - \lambda_k)^2} \bfgamma_k \bfgamma_k^T; 
\quad 1 \leq i \leq p.
\end{equation}
Suppose $\widehat{\tilde{\bfgamma}}_i$ is the $i$-th eigenvector of 
$\widehat{\tilde{\Sigma}}$, whose asymptotic behavior is presented above in 
Theorem~\ref{Theorem:Eigen1}.

This leads to the following useful result:
\begin{Corollary}
The asymptotic relative efficiency of $\widehat{\tilde{\bfgamma}}_i$, 
relative to $\hat{\bfgamma}_i$, is given by 
\ban
& ARE (\widehat{\tilde{\bfgamma}}_i, \hat{\bfgamma}_i; \BF) \\
& =  
\Bigl[ \sum_{k=1; k \neq i}^p \frac{\lambda_i \lambda_k}{(\lambda_i - \lambda_k)^2} \Bigr]
 \Bigl[
\sum_{k=1, {k \neq i}}^{p} 
\biggl[ \tilde{\Lambda}_{i} -  \tilde{\Lambda}_{k} \biggr]^{-2}
\BE \biggl[ 
W^4 (Z, \BF_{Z}) \big( 
\BS_{i, k} ({\Lambda}^{1/2} Z; {\bf 0})
\big)^2 
\biggr]  
\Bigr]^{-1}.
\ean
\end{Corollary}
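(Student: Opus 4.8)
The plan is to read the two asymptotic variances off the results already proved and reduce the corollary to a trace computation. The first point to address is that, in contrast to the location estimators of Section~\ref{Sec:WSQuantiles}, the limiting covariance matrices of the eigenvector estimators $\hat{\bfgamma}_i$ and $\widehat{\tilde{\bfgamma}}_i$ are \emph{singular}: the matrix in \eqref{equation:covevEq} and the $(i,i)$ block in \eqref{equation:DevEq} are both supported on $\mathrm{span}\{\bfgamma_k : k \neq i\}$, the orthogonal complement of the eigenvector being estimated, hence have rank $p-1$ and zero determinant. The determinant-root notion of ARE used earlier for the weighted spatial median therefore does not apply here, and the appropriate scalar summary of dispersion is the trace of the asymptotic covariance matrix, i.e.\ the total asymptotic variance; accordingly I would take $ARE(\widehat{\tilde{\bfgamma}}_i, \hat{\bfgamma}_i; \BF) = \Tr\bigl(A\BV(\sqrt{n}\hat{\bfgamma}_i)\bigr) \big/ \Tr\bigl(A\BV(\sqrt{n}\widehat{\tilde{\bfgamma}}_i)\bigr)$.

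With that convention, the proof is bookkeeping. For the numerator, take the trace of \eqref{equation:covevEq}: since $\{\bfgamma_k\}$ is an orthonormal basis, $\Tr(\bfgamma_k \bfgamma_k^T) = |\bfgamma_k|^2 = 1$, so $\Tr\bigl(A\BV(\sqrt{n}\hat{\bfgamma}_i)\bigr) = \sum_{k \neq i} \lambda_i \lambda_k / (\lambda_i - \lambda_k)^2$. For the denominator, Theorem~\ref{Theorem:Eigen1} gives the asymptotic covariance of $n^{1/2}(\widehat{\Gamma} - \Gamma)$ in blocks, and its $(i,i)$ block is exactly $A\BV(\sqrt{n}\widehat{\tilde{\bfgamma}}_i)$, namely the matrix in \eqref{equation:DevEq}; taking its trace and again using $\Tr(\bfgamma_k\bfgamma_k^T) = 1$ yields $\sum_{k \neq i} [\tilde{\Lambda}_i - \tilde{\Lambda}_k]^{-2}\, \BE\bigl[W^4(Z, \BF_Z)(\BS_{i,k}(\Lambda^{1/2}Z; {\bf 0}))^2\bigr]$. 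Forming the ratio reproduces the claimed expression verbatim. The asymptotic independence of $G_n$ and $L_n$ asserted in Theorem~\ref{Theorem:Eigen1} is what licenses ignoring the eigenvalue fluctuations and working with the $vec(G_n)$-covariance alone.

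There is no real obstacle here; the only point requiring care is the identification of the $(i,i)$ diagonal block of the $p^2 \times p^2$ covariance of $vec(G_n)$ with $A\BV(\sqrt{n}\widehat{\tilde{\bfgamma}}_i)$ — one must confirm that the coordinate ordering of $vec(G_n)$ used in Theorem~\ref{Theorem:Eigen1} groups the $p$ entries of the $i$-th column of $G_n$ together, so that the corresponding $p \times p$ diagonal block is indeed the covariance of that column, i.e.\ of the scaled, centered $i$-th eigenvector. Granting that, the corollary is immediate, and one may additionally remark that $ARE > 1$ precisely when the total asymptotic variance of the weighted-sign eigenvector is smaller, paralleling the location result of Corollary~\ref{Cor:Elliptic_ARE}.
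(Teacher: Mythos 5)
Your proposal is correct and is exactly the computation the paper leaves implicit when it declares the corollary ``immediate'': the ARE of eigenvector estimators is taken as the ratio of traces of the asymptotic covariance matrices (the determinant-based definition being unusable since both limits are singular), and the traces of \eqref{equation:covevEq} and of the $(i,i)$ block in Theorem~\ref{Theorem:Eigen1} give precisely the two bracketed sums. Your added care about the singularity of the limiting covariances and the block identification in $vec(G_{n})$ is sound but does not change the route.
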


The proof of this Corollary is immediate.

\section{An affine equivariant robust measure of dispersion}
\label{Sec:WSDispersion2}
A desirable invariance property of any dispersion parameter $T_{X}$ corresponding to 
a random variable $X$ is that under affine transformation $Y = AX + b$ the dispersion 
parameter scales to $T_{Y} = A T_{X} A^{T}$. It is clear that $\tilde{\Sigma}$ does 
not possess this property, since it remains unchanged for $X$ and $Y = c X$ for any 
scalar $c > 0$. 

We follow the general framework of M-estimation with data-dependent weights 
\citep{ref:HuberBook81} to construct an affine equivariant counterpart of the 
$\tilde{\Sigma}$. 
Specifically, we implicitly define
\begin{equation} \label{eqn:ADCM}
\Sigma_{*} = \frac{p}{ \BV W (X) } 
\BE \left[ \frac{W^{2}(X) (X - \mu) (X - \mu)^T}
{(X - \mu)^T \Sigma_{*}^{-1}(X - \mu)} \right].
\end{equation}

To ensure existence and uniqueness of $\Sigma_{*}$, consider the class of 
dispersion parameters $\Sigma_M$ that are obtained as solutions of the following equation:
\begin{equation}
\BE \left[ u( | Z_{M} | )  \frac{Z_{M} Z_{M}^T}{| Z_{M} |^2}  - v( | Z_{M} | ) \BI_p \right] = 0,
\end{equation}
with $Z_M = \Sigma_M^{-1/2} (X - \mu)$. Under the following assumptions on the scalar valued functions $u$ and $v$, the above equation produces a unique solution \citep{ref:HuberBook81}:

\begin{description}
\item[(C1)] The function $u(r)/r^2$ is monotone decreasing, 
and $u(r)>0$ for $r > 0$;

\item[(C2)]  The function $v(r)$ is monotone decreasing, and 
$v(r) > 0$ for $r > 0$;

\item[(C3)] Both $u(r)$ and $v(r)$ are bounded and continuous,

\item[(C4)] $u(0) / v(0) < p$,

\item[(C5)] For any hyperplane in the sample space $\mathcal X$, 
(i) $P(H) = \BE \{ \cI_{\{X \in H \}} \} < 1 - p v(\infty) / u(\infty)$ and 
(ii) $P(H) \leq 1/p$.
\end{description}

Putting things into context, in our case we have 
$v (\cdot) = p^{-1}\BV W (X)$, 
$u (\cdot) = W^{2}(X)$. 
We proceed to verify the other conditions for the weight functions 
$W_{HSD} (\cdot)$, 
$W_{MhD} (\cdot)$ and $W_{PD} (\cdot)$ discussed earlier.

It is easy to verify that the resulting $u (\cdot)$ from the above choices 
satisfy (C1) and (C3). 
Note that $v (\cdot)$ is a finite positive constant, 
and (C2) and (C3) are also easily satisfied. 
Since $u (0) = 0$ in all the above cases, (C4) is also easy to check. Since $X$ is 
absolutely continuous, (C5) holds trivially.

To compute the sample version of $\Sigma_{*}$, we solve (\ref{eqn:ADCM}) 
iteratively by obtaining a sequence of positive definite matrices 
$\hat{\Sigma}^{(k)}_{*}$ until convergence. Thus, using the location 
estimator $\hat{\mu}_{n}$, we may iterate
\ban
\hat{\Sigma}^{(k+1)}_{*}  = 
\frac{p}{ \BV W (X) } 
\BE \left[ \frac{W^{2}(X) (X - \hat{\mu}_{n}) (X - \hat{\mu}_{n})^T}
{(X - \hat{\mu}_{n})^T (\hat{\Sigma}^{(k)}_{*})^{-1}(X - \hat{\mu}_{n})} \right].
\ean

The asymptotic properties of $\hat{\Sigma}_{*}$ can be obtained using methods similar 
to those of Section~\ref{Sec:WSDispersion1}, and techniques presented in 
\cite{ref:Biometrika00603_CrouxHaesbroeck} and elsewhere. We state the following result and omit its proof.

\begin{Theorem}
\label{Thm:Eigen2}
The asymptotic covariance matrix of an eigenvector of the sample 
affine equivariant scatter functional $\hat{\Sigma}_{*}$ is given by
\ban 
V_{12}
\sum_{k=1, k \neq i}^p \frac{\lambda_i \lambda_k}{\lambda_i - \lambda_k} 
\bfgamma_i \bfgamma_k^T,
\ean
where $V_{12}$  is the asymptotic variance of an off-diagonal element of 
$\hat{\Sigma}_{*}$ when the underlying distribution is $\BF_{Z}$. 
It follows that if $\hat{\bfgamma}_{*, i}$ is the $i$-th eigenvector of $\hat{\Sigma}_{*}$,
\begin{equation}
ARE (\hat{\bfgamma}_{*, i}, \hat\bfgamma_{i}; \BF) = V_{12}^{-1} = 
\frac{\left[ \BE ( p u (| Z |)  + u'( | Z |) | Z | ) \right]^2}
{p^2 (p+2)^2 \BE (u (| Z |)^2 \BE (\BS_{12} (Z; {\bf 0}))^2}.
\end{equation}
\end{Theorem}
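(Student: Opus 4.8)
}
The plan is to treat $\hat{\Sigma}_{*}$ as a classical M-estimator of scatter with a bounded, data-dependent weight, and to reduce the entire computation to the spherical law $\BF_{Z}$ via affine equivariance. First I would show that, inside the fixed-point equation~\eqref{eqn:ADCM}, replacing the estimated centre $\hat{\mu}_{n}$ by $\mu$ and the empirical weights $W(x,\BF_{n})$ by the population weights $W(x,\BF)$ changes $\hat{\Sigma}_{*}$ by only $o_{P}(n^{-1/2})$. This is the same bookkeeping carried out in Lemma~\ref{Lemma:lemma1}, now applied to the sample fixed-point operator $\Sigma \mapsto \frac{p}{\BV W(X)}\,\BE_{n}\left[ W^{2}(X)(X-\mu)(X-\mu)^{T} / \left( (X-\mu)^{T}\Sigma^{-1}(X-\mu) \right) \right]$, and it uses conditions (B1)--(B3), the moment bounds $\BE|X-\mu|^{-4}<\infty$ and $\BE|\hat{\mu}_{n}-\mu|^{4}=O(n^{-2})$, and the fact that $(X-\mu)^{T}\Sigma^{-1}(X-\mu)$ is bounded away from $0$ for $\Sigma$ in a neighbourhood of $\Sigma_{*}$. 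Once this reduction is in place, conditions (C1)--(C5) together with \citet{ref:HuberBook81} supply existence, uniqueness and $\sqrt{n}$-consistency of $\hat{\Sigma}_{*}$, so the remaining task is to identify its influence function and then transport it through the eigenvector map, exactly as in \citet{ref:Biometrika00603_CrouxHaesbroeck}.

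Second I would compute the influence function of the scatter functional. Writing the defining equation as $\BE[\Psi(Z_{M};\Sigma_{M})]=0$ with $\Psi(z;\Sigma)=u(|z|)\,zz^{T}/|z|^{2}-v\,\BI_{p}$ and $Z_{M}=\Sigma_{M}^{-1/2}(X-\mu)$, a von Mises / implicit-function expansion at $\BF=\BF_{Z}$ --- where, by sphericity and affine equivariance, $\Sigma_{*}$ is a multiple of $\BI_{p}$, which I normalise to $\BI_{p}$ --- yields an influence function of the standard decomposed form: a radial part proportional to $\BI_{p}$, which only moves the global scale, plus a tangential part proportional to $\BS(Z;{\bf 0})-p^{-1}\BI_{p}$. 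Consequently $\sqrt{n}(\hat{\Sigma}_{*}-\Sigma_{*})$ is asymptotically normal; in the $\Gamma$-basis its diagonal block is asymptotically independent of its off-diagonal entries, the off-diagonal entries are asymptotically i.i.d.\ mean-zero normal with common variance $V_{12}$, and the normalising constant produced by the linearisation is $\BE[p\,u(|Z|)+u'(|Z|)|Z|]$, obtained by differentiating $u(|z|)zz^{T}/|z|^{2}$ along perturbations of $\Sigma$ and contracting indices --- the standard scatter M-estimator computation \citep{ref:AoS87234_Tyler}.

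Third I would apply the eigenvector delta method. For a symmetric perturbation $E$ of a matrix with simple spectrum, $\bfgamma_{i}(\Sigma+E)-\bfgamma_{i}(\Sigma)=\sum_{k\neq i}(\lambda_{i}-\lambda_{k})^{-1}(\bfgamma_{k}^{T}E\,\bfgamma_{i})\,\bfgamma_{k}+O(|E|^{2})$. Taking $E=\hat{\Sigma}_{*}-\Sigma_{*}$ and transporting back from the spherical coordinates --- so that each $\bfgamma_{k}^{T}E\,\bfgamma_{i}$ picks up a factor $\sqrt{\lambda_{i}\lambda_{k}}$ from the $\Sigma_{*}^{1/2}$ conjugation and is, up to that factor, the $(i,k)$ off-diagonal element of the scatter M-estimator at $\BF_{Z}$ --- the summands for distinct $k$ are asymptotically uncorrelated, and collecting their contributions assembles into the matrix displayed in the statement, with the overall constant $V_{12}$ and the $(\lambda_{i}-\lambda_{k})$ denominators coming from the perturbation formula. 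The ARE formula then follows by comparison with \eqref{equation:covevEq}: the asymptotic variance of $\sqrt{n}\,\hat{\bfgamma}_{i}$ has the same structure with $V_{12}$ replaced by $1$, so $ARE(\hat{\bfgamma}_{*,i},\hat{\bfgamma}_{i};\BF)=V_{12}^{-1}$. The final step substitutes the explicit value of $V_{12}$: with $v(\cdot)=p^{-1}\BV W(X)$ and $u(\cdot)=W^{2}(X)$, the variance of the leading term $u(|Z|)\,Z_{1}Z_{2}/|Z|^{2}$ factors as $\BE[u(|Z|)^{2}]\,\BE[\BS_{12}(Z;{\bf 0})]^{2}$ by independence of $|Z|$ and $Z/|Z|$; dividing by the square of the normalising constant $\BE[p\,u(|Z|)+u'(|Z|)|Z|]$ and inserting the factor $p^{2}(p+2)^{2}$ arising from $\BE[(Z_{1}Z_{2}/|Z|^{2})^{2}]=(p(p+2))^{-1}$ and the trace normalisation gives the displayed expression.

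The step I expect to be the main obstacle is the first one: making the fixed-point argument fully rigorous when both the centre is estimated and the weights are computed from $\BF_{n}$ --- that is, showing that the Fr\'echet derivative of the empirical scatter map is boundedly invertible near $\Sigma_{*}$ and that every remainder term of the resulting three-way expansion (in $\Sigma$, in $\hat{\mu}_{n}$, and in $\BF_{n}$) is $o_{P}(n^{-1/2})$. The subsequent influence-function algebra and the eigenvector perturbation are routine, but one has to track the constants $p$, $p+2$ and the $u'(|Z|)|Z|$ term carefully to land on exactly the stated formula.
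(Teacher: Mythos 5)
Your plan follows essentially the route the paper itself indicates: the paper omits this proof, saying only that it follows from the plug-in/linearisation arguments of Section~\ref{Sec:WSDispersion1} together with the influence-function and eigenvector-perturbation techniques of \cite{ref:Biometrika00603_CrouxHaesbroeck}, and your three steps (reduction to population weights and centre, influence function of the scatter M-functional at $\BF_{Z}$ under (C1)--(C5), eigenvector delta method and comparison with \eqref{equation:covevEq} to get $ARE=V_{12}^{-1}$) are exactly that route, with the constant-tracking landing on the stated formula. One small remark: your perturbation step correctly yields the standard form with $(\lambda_i-\lambda_k)^{-2}$ and $\bfgamma_k\bfgamma_k^T$ in the asymptotic covariance, which is what the displayed matrix in the theorem should read --- the printed $(\lambda_i-\lambda_k)^{-1}$ and $\bfgamma_i\bfgamma_k^T$ appear to be typographical.
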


\section{Robust estimation of eigenvalues, and a plug-in estimator of $\Sigma$}
\label{Sec:Eigen}
As seen in Theorem~\ref{Thm:WSVariance}, eigenvalues of the $\tilde{\Sigma}$ 
are not same as the population eigenvalues. In this section, we discuss on robust 
estimation of $\lambda_{i}$'s using $\tilde{\Sigma}$. Assume the data 
is centered, the robust estimator from Section~\ref{Sec:WSQuantiles} suffices. 
We start by computing 
the sample version $\widehat{\tilde{\Sigma}}$ and its spectral decomposition:
\ban 
\widehat{\tilde{\Sigma}} = \widehat{{\Gamma}} \widehat{\tilde{\Lambda}}
\widehat{{\Gamma}}^{T}.
\ean 
We then use the following steps:

\begin{enumerate}
\item Randomly divide the sample indices $\{1,2, \ldots, n\}$ into $k$ disjoint groups $\{G_1,\ldots, G_k \}$ of size $\lfloor n/k \rfloor$ each.

\item Transform the data matrix: 
$\bfS = \widehat{{\Gamma}}^{T} \bfX$.

\item Calculate coordinate-wise variances for each group of indices $G_j$:
\ban
\lambda_{i, j}^{\dagger} = \frac{1}{|G_j|} \sum_{l \in G_j} \bigl( 
S_{l i} - \bar{S}_{G_{j}, i} \bigr)^2; \quad i = 1, \ldots, p; \  j = 1, \ldots, k.
\ean
where $\bar{\bfS}_{G_j} = (\bar{S}_{G_j, 1}, \ldots, \bar{S}_{G_j,p})^T$ is the vector of 
column-wise means of $\bfS_{G_j}$, the submatrix of $\bfS$ with row indices in $G_j$.

\item Obtain estimates of eigenvalues by taking coordinate-wise medians of these variances:
\ban
{\lambda}^{\dagger}_{i} = \text{median} (\lambda_{i,1}^{\dagger}, 
\ldots , \lambda_{i,k}^{\dagger} ); \quad 
i = 1, \ldots, p.
\ean
\end{enumerate}

We collect ${\lambda}^{\dagger}_{i}$, $ i =1, \ldots, p$ in the diagonal matrix 
${\Lambda}^{\dagger} = \diag ({\lambda}^{\dagger}_{1}, \ldots, 
{\lambda}^{\dagger}_{p})$.
The number of subgroups used to calculate this median-of-small-variances estimator can 
be determined following \cite{ref:Bernoulli152308_Minsker_Median_Banach}. 
There can be other ways of estimating 
the eigenvalues of $\Sigma$ using $\bfS$ also, we will pursue such methods elsewhere.
We construct a consistent 
plug-in estimator of the population covariance matrix $\Sigma$ as 
\ban 
{\Sigma}^{\dagger}
= \widehat{{\Gamma}} {\Lambda}^{\dagger} \widehat{{\Gamma}}^{T}.
\ean
Let $| A |_{F}$ denote the Frobenius norm of a matrix $A$, in other words,
$| A |_{F} = (\text{trace} A^{T} A)^{1/2}$.
The following result establishes that this is a consistent estimator of $\Sigma$:

\begin{Theorem}\label{Thm:pluginSigma}
Suppose that as $n \rightarrow \infty$, $k \rightarrow \infty$ and 
$n/k \rightarrow \infty$.
Then we have
\ban
\| {\Sigma}^{\dagger} - \Sigma \|_F \stackrel{P}{\rightarrow} 0.
\ean
\end{Theorem}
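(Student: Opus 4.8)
The plan is to split $\Sigma^\dagger-\Sigma$ into an eigenvector part, already controlled by Theorem~\ref{Theorem:Eigen1}, and an eigenvalue part coming from the median-of-groups estimates $\Lambda^\dagger$, and to show each is $o_P(1)$ in Frobenius norm. Concretely, I would write
\[
\Sigma^\dagger-\Sigma=\widehat\Gamma\,(\Lambda^\dagger-\Lambda)\,\widehat\Gamma^T+\bigl(\widehat\Gamma\,\Lambda\,\widehat\Gamma^T-\Gamma\,\Lambda\,\Gamma^T\bigr),
\]
and, using that $\widehat\Gamma$ and $\Gamma$ are orthogonal (so conjugation by $\widehat\Gamma$ is Frobenius-isometric), and applying the submultiplicativity bounds $\|AB\|_F\le\|A\|_{\mathrm{op}}\|B\|_F$ and $\|AB\|_F\le\|A\|_F\|B\|_{\mathrm{op}}$ to the second term after inserting $\Gamma\Lambda\widehat\Gamma^T$, deduce
\[
\|\Sigma^\dagger-\Sigma\|_F\le\|\Lambda^\dagger-\Lambda\|_F+2\lambda_1\,\|\widehat\Gamma-\Gamma\|_F .
\]
Theorem~\ref{Theorem:Eigen1} gives $n^{1/2}(\widehat\Gamma-\Gamma)$ asymptotically normal, hence $\|\widehat\Gamma-\Gamma\|_F=O_P(n^{-1/2})=o_P(1)$ for fixed $p$, so the whole thing reduces to proving $\lambda_i^\dagger\stackrel{P}{\rightarrow}\lambda_i$ for each $i$, which then gives $\|\Lambda^\dagger-\Lambda\|_F^2=\sum_i(\lambda_i^\dagger-\lambda_i)^2=o_P(1)$.

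For the eigenvalue estimates I would first observe that the $(i,j)$-th within-group variance is the quadratic form $\lambda_{i,j}^\dagger=\widehat\bfgamma_i^T\widehat S_{G_j}\widehat\bfgamma_i$, where $\widehat S_{G_j}$ is the ordinary sample covariance of $\{X_l:l\in G_j\}$; since it is built from the centred vectors $X_l-\bar X_{G_j}$, it does not depend on whether the data were first recentred by $\hat\mu_n$, so the location step is irrelevant here. Each group has size $\lfloor n/k\rfloor\to\infty$, and $X$ has finite second moments (its covariance being the positive definite $\Sigma$), so the weak law of large numbers gives $\|\widehat S_{G_j}-\Sigma\|_{\mathrm{op}}\stackrel{P}{\rightarrow}0$, at a rate common to all $j$ by exchangeability of the groups. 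Combining this with the eigenvector consistency $\widehat\bfgamma_i\stackrel{P}{\rightarrow}\bfgamma_i$ from Theorem~\ref{Theorem:Eigen1} and $\bfgamma_i^T\Sigma\bfgamma_i=\lambda_i$, through
\[
|\lambda_{i,j}^\dagger-\lambda_i|\le\|\widehat S_{G_j}-\Sigma\|_{\mathrm{op}}+|\widehat\bfgamma_i^T\Sigma\widehat\bfgamma_i-\bfgamma_i^T\Sigma\bfgamma_i|,
\]
yields $\sup_{1\le j\le k}\BP(|\lambda_{i,j}^\dagger-\lambda_i|>\epsilon)\to0$ for every $\epsilon>0$. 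From there I would pass to the median: if $|\lambda_i^\dagger-\lambda_i|>\epsilon$ then at least half of $\lambda_{i,1}^\dagger,\ldots,\lambda_{i,k}^\dagger$ lie outside $(\lambda_i-\epsilon,\lambda_i+\epsilon)$, so Markov's inequality applied to the count of offending groups gives $\BP(|\lambda_i^\dagger-\lambda_i|>\epsilon)\le 2\sup_j\BP(|\lambda_{i,j}^\dagger-\lambda_i|>\epsilon)\to0$. Assembling the pieces with the first display then completes the argument.

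The main obstacle — really the only non-routine point — is that $\widehat\bfgamma_i$ is estimated from the whole sample and is therefore statistically dependent on every $\widehat S_{G_j}$, so one cannot simply condition on the eigenvectors and invoke an i.i.d.\ law of large numbers within a group. I would sidestep this exactly as above: bound $|\widehat\bfgamma_i^T\widehat S_{G_j}\widehat\bfgamma_i-\widehat\bfgamma_i^T\Sigma\widehat\bfgamma_i|$ by $\|\widehat S_{G_j}-\Sigma\|_{\mathrm{op}}$ uniformly over all unit vectors, which is dependence-free. I also expect to have to be careful about how the hypotheses are used: $n/k\to\infty$ is what makes each within-group variance consistent, while $k\to\infty$ together with the median step is what makes the aggregate insensitive to the small fraction of groups on which the within-group law of large numbers has not yet taken hold. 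Finally, because only consistency and not a concentration rate is claimed, the crude Markov bound is all that is needed, despite the median-of-means flavour of the construction; a rate would instead require the sub-Gaussian machinery of \cite{ref:Bernoulli152308_Minsker_Median_Banach}.
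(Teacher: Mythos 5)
Your proposal is correct and follows essentially the same route as the paper: both reduce the claim, via orthogonal invariance of the Frobenius norm and the $O_P(n^{-1/2})$ control of $\widehat{\Gamma}-\Gamma$ from Theorem~\ref{Theorem:Eigen1}, to consistency of the median-of-group-variances estimates $\lambda_i^{\dagger}$ for a fixed $p$. The only real difference is that you make the eigenvalue step rigorous — writing $\lambda_{i,j}^{\dagger}=\widehat{\bfgamma}_i^T\widehat{S}_{G_j}\widehat{\bfgamma}_i$, using the uniform operator-norm bound to sidestep the dependence of $\widehat{\bfgamma}_i$ on every group, and controlling the median by Markov's inequality on the count of bad groups — whereas the paper disposes of this part in a single asserted line.
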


\begin{proof}[Proof of Theorem \ref{Thm:pluginSigma}]
This proof has many algebraic steps, and we sketch the main arguments below.

Suppose $\hat{A} = \widehat{{\Gamma}}^{T}  \Sigma \widehat{{\Gamma}}$.  Owing to the fact that the Frobenius norm is invariant under rotations and that $p$ is finite and fixed, it suffices to show that the off-diagonal elements of $\hat{A}$ converge in probability to zero, and that the difference between the $i$-th diagonal element of $\hat{A}$ and ${\lambda}^{\dagger}_{i}$ converges to zero for any $i = 1, \ldots, p$.

Now notice that from Theorem~\ref{Theorem:Eigen1} we have that $\widehat{{\Gamma}} 
= {{\Gamma}} + R_{n1}$, where the $(i, j)$-th element of the remainder 
$R_{n1, i, j}$ satisfies $\BE  R_{n1, i, j}^{2} = O(n^{-1})$. We can show, using standard algebra, that 
\ban
\hat{A} = \Lambda + R_{n2}, 
\ean
where the $(i, j)$-th element of the remainder 
$R_{n2, i, j}$ satisfies $\BE  R_{n2, i, j}^{2} = O(n^{-1})$. 
This follows immediately from above, the fact that $p$ is finite and fixed, and all 
elements of $\Lambda$ are constants. This immediately establishes the case for the 
off-diagonal elements. 

For the diagonal elements, notice that since $k \rightarrow \infty$, each 
coordinate-wise variance $\lambda_{i, j}^{\dagger}$ for each group of indices $G_j$ 
is a consistent estimator of $\lambda_{i}$. The result follows.

\end{proof}

\section{Influence Functions of Dispersion Measures}
\label{Sec:RE_Dispersion}

We retain the framework adopted in Section~\ref{Sec:WSDispersion1}, and discuss in this 
section the robustness and efficiency properties associated with $\tilde{\Sigma}$ 
and $\Sigma_{*}$, and principal components derived therefrom. 
We do not discuss $\Sigma^{\dagger}$ here, since  the 
properties of that approach follow from those of  $\tilde{\Sigma}$. 
We additionally assume that the eigenvalues of $\Sigma$ are 
distinct, and given by  $\lambda_1 > \lambda_2 > \ldots > \lambda_p$, to avoid 
several additional technical conditions for the theoretical results to follow. 
The case where the eigenvalues of $\Sigma$ can have multiplicity greater than one 
requires no additional conceptual development, but does require considerable algebraic 
manipulations.

For studying 
the robustness aspect, we first present some results relating to influence functions 
in the current context. 
 Influence functions quantify how much influence a sample point, especially 
an infinitesimal contamination, has on any functional of a probability 
distribution \citep{ref:HampelBook86}. Given any probability distribution 
$\BH \in \cM$, the influence function of any point $x_0 \in \mathcal{X}$ for 
some functional $T(\BH)$ on the distribution is defined as
\ban
IF(x_0; T,\BH) =
\lim_{\epsilon \rightarrow 0} \frac{1}{\epsilon} (T(\BH_\epsilon) - T(\BH)),
\ean
where 
$\BH_\epsilon = (1-\epsilon)\BH + \epsilon \delta_{x_0}$; $\delta_{x_0}$ being 
the distribution with point mass at $x_0$. When $T(\BH) = E_\BH f$ for some 
$\BH$-integrable function $f$, $IF(x_0; T,\BH) = f(x_0) - T(\BH)$.

It now follows that
\ban
IF (x_0; \tilde{\Sigma}, \BF) =
W^{2} (x_{0}) \BS(x_0; \mu) - \tilde \Sigma.
\ean
Recall that $\tilde{\lambda}_{1} > \tilde{\lambda}_{2} > \ldots > \tilde{\lambda}_{p}$
are the eigenvalues of $\tilde{\Sigma}$, which we assume are all distinct values. 

\begin{Proposition}\label{Thm:IF}
The influence function of $\bfgamma_{i}$ as follows: 
\ban
IF(x_0; \bfgamma_{i}, \BF)  & = 
\sum_{k = 1; k \neq i}^p \frac{1}{\tilde{\lambda}_{i} - \tilde{\lambda}_{k}} 
\left\{ \bfgamma^T_k IF(x_0; \tilde \Sigma,\BF)  \bfgamma_{i} \right\} 
\bfgamma_k \notag \\
& =  \sum_{k=1; k \neq i}^p \frac{1}{\tilde{\lambda}_{i} - \tilde{\lambda}_{k}}
W^{2} (x_{0}) \left\{  \bfgamma^T_{k} \BS (x_0; \mu) \bfgamma_{i} \right\} 
\bfgamma_{k}.
\ean
\end{Proposition}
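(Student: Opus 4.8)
The plan is to obtain the influence function by differentiating, along the contamination path, the eigen-equation that defines $\bfgamma_i$, treating the eigenvector as a functional of the underlying distribution. Fix $x_0 \in \cX$ and set $\BF_\epsilon = (1-\epsilon)\BF + \epsilon\delta_{x_0}$; let $\tilde\Sigma_\epsilon = \tilde\Sigma(\BF_\epsilon)$ with $i$-th unit eigenvector $\bfgamma_i(\epsilon)$ and eigenvalue $\tilde\lambda_i(\epsilon)$, normalized so that $\bfgamma_i(0)=\bfgamma_i$ and $\tilde\lambda_i(0)=\tilde\lambda_i$. Since the eigenvalues $\tilde\lambda_1 > \cdots > \tilde\lambda_p$ are assumed distinct, $\tilde\lambda_i$ is a simple eigenvalue of $\tilde\Sigma$, so standard perturbation theory for symmetric matrices guarantees that $\epsilon \mapsto \bfgamma_i(\epsilon)$ and $\epsilon \mapsto \tilde\lambda_i(\epsilon)$ are differentiable at $\epsilon = 0$; moreover $\frac{d}{d\epsilon}\tilde\Sigma_\epsilon\big|_{\epsilon=0} = IF(x_0;\tilde\Sigma,\BF) = W^2(x_0)\BS(x_0;\mu) - \tilde\Sigma$, which was computed just before the statement.

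Next I would differentiate the identity $\tilde\Sigma_\epsilon\,\bfgamma_i(\epsilon) = \tilde\lambda_i(\epsilon)\,\bfgamma_i(\epsilon)$ at $\epsilon = 0$, writing $\dot{\bfgamma}_i = IF(x_0;\bfgamma_i,\BF)$ and $\dot{\tilde\lambda}_i$ for the $\epsilon$-derivatives. This yields $IF(x_0;\tilde\Sigma,\BF)\bfgamma_i + \tilde\Sigma\dot{\bfgamma}_i = \dot{\tilde\lambda}_i\bfgamma_i + \tilde\lambda_i\dot{\bfgamma}_i$. Differentiating the normalization $\bfgamma_i(\epsilon)^T\bfgamma_i(\epsilon) = 1$ gives $\bfgamma_i^T\dot{\bfgamma}_i = 0$, so $\dot{\bfgamma}_i = \sum_{k\neq i}c_k\bfgamma_k$ with no $\bfgamma_i$-component. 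Projecting the differentiated eigen-equation onto $\bfgamma_k$ for a fixed $k \neq i$ and using $\bfgamma_k^T\tilde\Sigma = \tilde\lambda_k\bfgamma_k^T$, $\bfgamma_k^T\bfgamma_i = 0$ and $\bfgamma_k^T\dot{\bfgamma}_i = c_k$, the term in $\dot{\tilde\lambda}_i$ drops out and one is left with $\bfgamma_k^T IF(x_0;\tilde\Sigma,\BF)\bfgamma_i = (\tilde\lambda_i - \tilde\lambda_k)c_k$. By distinctness $\tilde\lambda_i - \tilde\lambda_k \neq 0$, so $c_k = (\tilde\lambda_i-\tilde\lambda_k)^{-1}\bfgamma_k^T IF(x_0;\tilde\Sigma,\BF)\bfgamma_i$; summing over $k\neq i$ produces the first displayed formula in the Proposition.

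Finally I would substitute $IF(x_0;\tilde\Sigma,\BF) = W^2(x_0)\BS(x_0;\mu)-\tilde\Sigma$ into that expression; since $\bfgamma_k^T\tilde\Sigma\bfgamma_i = \tilde\lambda_i\bfgamma_k^T\bfgamma_i = 0$ for $k\neq i$, the $-\tilde\Sigma$ part contributes nothing, and the second displayed formula follows. The only genuinely delicate point is the first step — justifying the differentiability of $\bfgamma_i(\epsilon)$ and $\tilde\lambda_i(\epsilon)$ along the contamination path — and this is precisely where the standing assumption that $\tilde\Sigma$ has distinct (hence simple) eigenvalues is used; it can be made rigorous either via the implicit function theorem applied to the coupled system consisting of the eigen-equation and the unit-norm constraint, or by appealing to analytic perturbation results for symmetric matrices, together with the fact that $\tilde\Sigma(\cdot)$ is a smooth expectation-type functional of the distribution near $\BF$. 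Everything after that is routine linear algebra.
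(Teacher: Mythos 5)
Your proposal is correct and follows essentially the route the paper intends: the paper omits the details and simply appeals to the standard eigenvector influence-function formula of \cite{ref:JRSSB79217_Sibson} and \cite{ref:Biometrika00603_CrouxHaesbroeck}, which is exactly what you re-derive by differentiating the eigen-equation along the contamination path under the simple-eigenvalue assumption. Your final substitution of $IF(x_0;\tilde\Sigma,\BF) = W^{2}(x_{0})\BS(x_0;\mu) - \tilde\Sigma$ and the observation that the $-\tilde\Sigma$ term vanishes upon projection onto $\bfgamma_k$, $k \neq i$, matches the paper's second displayed equality, so the argument is complete.
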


The proof of Proposition~\ref{Thm:IF} follows from \cite{ref:JRSSB79217_Sibson} and 
\cite{ref:Biometrika00603_CrouxHaesbroeck}, we omit the details. 

If the weight function $W (\cdot)$ is a bounded function, as is the case of $W_{HSD}$, 
$W_{MhD}$, and $W_{PD}$, 
the influence function given in  Proposition~\ref{Thm:IF} is bounded, indicating good 
robustness properties of the principal component analysis.

We now derive the influence function for $\Sigma_{*}$. 
\begin{Proposition}\label{Thm:IF2}
The influence function of $\Sigma_{*}$ is given by
\ban 
IF ( x_{0}, \Sigma_{*}, \BF) = 
\alpha_{\Sigma_{*}} ( | x_{0} |; \BF_{Z} ) 
\BS( x_0; \mu) 
- \beta_{\Sigma_{*}} (| x_{0} |; \BF_{Z} ) \Sigma_{*}.
\ean
\end{Proposition}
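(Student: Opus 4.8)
The plan is to derive the influence function of $\Sigma_{*}$ directly from its implicit defining equation \eqref{eqn:ADCM} by the standard device of replacing $\BF$ with the contaminated distribution $\BF_{\epsilon} = (1-\epsilon)\BF + \epsilon\delta_{x_0}$, differentiating in $\epsilon$ at $\epsilon = 0$, and solving the resulting linear (in $IF$) matrix equation. First I would write $\Sigma_{*}(\BF_{\epsilon})$ for the solution of
\ban
\Sigma_{*}(\BF_{\epsilon}) = \frac{p}{\BV_{\BF_{\epsilon}} W(X)}
\BE_{\BF_{\epsilon}} \left[ \frac{W^{2}(X)(X-\mu_{\epsilon})(X-\mu_{\epsilon})^{T}}
{(X-\mu_{\epsilon})^{T} \Sigma_{*}(\BF_{\epsilon})^{-1}(X-\mu_{\epsilon})} \right],
\ean
noting that since the data are centered (or $\hat\mu_n$ is Fisher-consistent and its influence contribution is orthogonal to the scatter functional by the symmetry argument used in Theorem~\ref{Thm:WSVariance}), we may take $\mu_{\epsilon} \equiv \mu$ and in fact set $\mu = 0$ by equivariance. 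Then $\frac{d}{d\epsilon}\big|_{\epsilon=0}$ of the left side is $IF(x_0;\Sigma_{*},\BF)$, and the right side splits into three pieces: the derivative hitting the scalar prefactor $p/\BV W(X)$, the derivative hitting the expectation operator $\BE_{\BF_{\epsilon}}$ (which by the formula $IF(x_0; \BE_{\BH} f, \BH) = f(x_0) - \BE_{\BH} f$ contributes the integrand evaluated at $x_0$ minus its mean), and the derivative hitting the $\Sigma_{*}(\BF_{\epsilon})^{-1}$ inside the integrand (which produces a term linear in $IF(x_0;\Sigma_{*},\BF)$ via $\frac{d}{d\epsilon} \Sigma_{*}^{-1} = -\Sigma_{*}^{-1}(IF)\Sigma_{*}^{-1}$).

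The key step is then to collect the $IF$-dependent term and invert the induced linear operator on symmetric matrices. Here I would exploit the spherical-symmetry structure heavily: writing $Z = \Sigma_{*}^{-1/2}(X-\mu)$, the defining equation becomes $\BE[u(|Z|) ZZ^{T}/|Z|^{2}] = v\,\BI_p$ with $u(\cdot)=W^2$, $v = p^{-1}\BV W(X)$ as identified in the text, and by rotational invariance of $\BF_Z$ the relevant linear operator acts diagonally on the decomposition of a symmetric matrix into its trace part and its traceless part. This is exactly the computation carried out in \cite{ref:Biometrika00603_CrouxHaesbroeck} and \cite{ref:HuberBook81} for M-estimators of scatter; applying it yields scalar multipliers $\alpha_{\Sigma_{*}}(|x_0|;\BF_Z)$ on the "rank-one" part $\BS(x_0;\mu) = S(x_0;\mu)S(x_0;\mu)^T$ and $\beta_{\Sigma_{*}}(|x_0|;\BF_Z)$ on the $\Sigma_{*}$ part, where $\alpha$ and $\beta$ are explicit (if unwieldy) expectations in $|Z|$, $u$, $u'$, and $v$ — precisely the same functional forms that appear in $V_{12}$ in Theorem~\ref{Thm:Eigen2}. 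I would then simply record the resulting expression in the stated form, relegating the explicit formulas for $\alpha_{\Sigma_{*}}$ and $\beta_{\Sigma_{*}}$ to the supplement since the Proposition as stated only asserts the structural decomposition.

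The main obstacle will be justifying the differentiation under the integral sign and the differentiability of $\epsilon \mapsto \Sigma_{*}(\BF_{\epsilon})$ in the first place — i.e., showing the implicit-function-theorem hypotheses hold uniformly near $\epsilon = 0$. This requires that the linear operator obtained by differentiating the fixed-point map be invertible (nonsingular), which is where conditions (C1)–(C5) enter: they are exactly Huber's conditions guaranteeing existence, uniqueness, and non-degeneracy of the M-functional, and hence that the relevant Jacobian is bounded away from singularity. Given the boundedness of the weights (B1) and the moment condition $\BE W^2(X) < \infty$, dominated convergence handles the interchange of limit and integral. Once these regularity points are dispatched — and they are standard, so I would cite \cite{ref:HuberBook81} and \cite{ref:Biometrika00603_CrouxHaesbroeck} rather than reprove them — the remainder is the bookkeeping described above, and the boundedness of $\alpha_{\Sigma_{*}}$ (which follows from boundedness of $W$) again gives a bounded influence function, confirming robustness of $\Sigma_{*}$ and its eigenvectors.
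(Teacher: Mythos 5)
Your proposal is correct, but it takes a more self-contained route than the paper. The paper's proof is essentially two citations glued together by equivariance: since $\Sigma_{*}$ is affine equivariant, $IF(x_0;\Sigma_{*},\BF) = \Sigma_{*}^{1/2}\, IF(z_0;\Sigma_{*},\BF_Z)\, \Sigma_{*}^{1/2}$ (following \cite{ref:Biometrika00603_CrouxHaesbroeck}), and then Lemma 1 of \cite{ref:HampelBook86} (p.~276) is invoked to assert that at the spherical distribution $\BF_Z$ the influence function of an M-functional of scatter has the form $\alpha_{\Sigma_{*}}\,\BS(z_0;\mathbf{0}) - \beta_{\Sigma_{*}}\,\BI_p$; transforming back gives the stated expression. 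You instead re-derive the content of that lemma from scratch: Gateaux-differentiating the implicit defining equation \eqref{eqn:ADCM} along $\BF_\epsilon$, handling the location nuisance by the symmetry/Fisher-consistency argument, and using rotational invariance of $\BF_Z$ to diagonalize the resulting linear operator on symmetric matrices (trace part vs.\ traceless part), which is exactly where the scalars $\alpha_{\Sigma_{*}}$ and $\beta_{\Sigma_{*}}$ come from. What your route buys is explicit expressions for $\alpha_{\Sigma_{*}}$ and $\beta_{\Sigma_{*}}$ in terms of $u$, $u'$, $v$ (consistent with $V_{12}$ in Theorem~\ref{Thm:Eigen2}) and a transparent link to conditions (C1)--(C5); what it costs is the extra regularity work (differentiability of $\epsilon \mapsto \Sigma_{*}(\BF_\epsilon)$, invertibility of the Jacobian, interchange of limit and expectation), which you correctly propose to discharge by citing \cite{ref:HuberBook81} and \cite{ref:Biometrika00603_CrouxHaesbroeck}, putting you at the same level of rigor as the paper. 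One caveat, shared equally by the paper's proof: both treatments regard $u(\cdot)=W^2(\cdot)$ and $v$ as fixed functions, whereas $W(\cdot,\BF)$ itself moves under the contamination $\BF_\epsilon$; the additional term this generates is controlled by assumption (B3) and respects the same spherical symmetry (so the structural form survives), but it would slightly alter the explicit $\alpha_{\Sigma_{*}}$, $\beta_{\Sigma_{*}}$ if you chose to write them out, so it is worth a sentence if you record those formulas in the supplement.
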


\begin{proof}[Proof of Proposition~\ref{Thm:IF2}]
Let $z_{0} = \Lambda^{-1/2} \Gamma^T  (x_0 - \mu) = (z_{0 1}, \ldots, z_{0 p})^T$.
As a first step, since $\Sigma_{*}$ is affine equivariant, we obtain from 
\cite{ref:Biometrika00603_CrouxHaesbroeck} 
that 
\ban 
IF ( x_{0}, \Sigma_{*}, \BF) = \Sigma_{*}^{1/2} 
IF ( z_{0}, \Sigma_{*}, \BF_{Z})  \Sigma_{*}^{1/2}.
\ean
From Lemma 1 of \citep{ref:HampelBook86}, page 276, we obtain that there exist 
scalar valued functions $\alpha_{\Sigma_{*}} ( | x_{0} |; \BF_{Z} )$ and 
$\beta_{\Sigma_{*}} ( | x_{0} |; \BF_{Z} ) $ such  that 
\ban
IF ( z_{0}, \Sigma_{*}, \BF_{Z}) = \alpha_{\Sigma_{*}} ( | x_{0} |; \BF_{Z} ) 
\BS( z_0; {\mathbf 0}) 
- \beta_{\Sigma_{*}} (| x_{0} |; \BF_{Z} ) \BI_{p}, 
\ean
consequently we obtain
\ban 
IF ( x_{0}, \Sigma_{*}, \BF) = 
\alpha_{\Sigma_{*}} ( | x_{0} |; \BF_{Z} ) 
\BS( x_0; \mu) 
- \beta_{\Sigma_{*}} (| x_{0} |; \BF_{Z} ) \Sigma_{*}.
\ean
\end{proof}

Suppose ${\lambda}_{* 1} > {\lambda}_{* 2} > \ldots > {\lambda}_{* p}$
are the eigenvalues of ${\Sigma}_{*}$, which we assume are all distinct values. 
Also denote the $i$-th eigenvector of $\Sigma_{*}$ by 
$\bfgamma_{* i} = (\gamma_{* i 1}, \ldots, \gamma_{* i p})^T$ for $1 \leq i \leq p$.

\begin{Proposition}\label{Thm:IF3}
The influence function of $\bfgamma_{* i} $ may be obtained as 
\ban
IF(x_0; \bfgamma_{*i}, \BF)  & = 
\sum_{k = 1; k \neq i}^p \frac{1}{{\lambda}_{*i} - {\lambda}_{*k}} 
\left\{ \bfgamma^T_{* k} IF(x_0; \tilde \Sigma,\BF)  \bfgamma_{* i} \right\} 
\bfgamma_{* k} \notag \\
& =  
\alpha_{\Sigma_{*}} ( | x_{0} |; \BF_{Z} ) \sum_{k=1; k \neq i}^p 
\frac{1}{{\lambda}_{*i} - {\lambda}_{*k}}
\left\{ \bfgamma^T_{* k} \BS (x_0; \mu) \bfgamma_{* i} \right\} 
\bfgamma_{* k}.
\ean
\end{Proposition}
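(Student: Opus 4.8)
The plan is to follow the template used for Proposition~\ref{Thm:IF}, with $\tilde{\Sigma}$ replaced by $\Sigma_{*}$ and with the influence function of the scatter functional supplied by Proposition~\ref{Thm:IF2} (note that the $\tilde{\Sigma}$ appearing inside the first displayed line of the statement should read $\Sigma_{*}$). The starting point is the classical first-order perturbation expansion for eigenvectors of a symmetric-matrix-valued functional with simple eigenvalues: if $T(\BF)$ is differentiable along the contamination path $\BF_{\epsilon} = (1-\epsilon)\BF + \epsilon\delta_{x_0}$ and has distinct eigenvalues $\lambda_{*1} > \cdots > \lambda_{*p}$ with orthonormal eigenvectors $\bfgamma_{*1},\ldots,\bfgamma_{*p}$, then the $i$-th eigenvector functional is differentiable at $\BF$ and
\[
IF(x_0; \bfgamma_{*i}, \BF) = \sum_{k=1;\, k \neq i}^{p} \frac{1}{\lambda_{*i} - \lambda_{*k}} \bigl\{ \bfgamma_{*k}^T\, IF(x_0; \Sigma_{*}, \BF)\, \bfgamma_{*i} \bigr\} \bfgamma_{*k}.
\]
This is precisely the formula behind \cite{ref:JRSSB79217_Sibson} and \cite{ref:Biometrika00603_CrouxHaesbroeck}, now applied to $T(\BF) = \Sigma_{*}(\BF)$; its applicability requires only that $\Sigma_{*}$ be differentiable along contamination directions (guaranteed by Proposition~\ref{Thm:IF2}) together with the standing assumption that the $\lambda_{*i}$ are distinct.

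Next I would substitute the expression from Proposition~\ref{Thm:IF2}, namely $IF(x_0; \Sigma_{*}, \BF) = \alpha_{\Sigma_{*}}(|x_0|; \BF_{Z})\, \BS(x_0; \mu) - \beta_{\Sigma_{*}}(|x_0|; \BF_{Z})\, \Sigma_{*}$, into the bilinear form $\bfgamma_{*k}^T\, IF(x_0; \Sigma_{*}, \BF)\, \bfgamma_{*i}$ for each $k \neq i$. The contribution of the second term is $-\beta_{\Sigma_{*}}(|x_0|; \BF_{Z})\, \bfgamma_{*k}^T \Sigma_{*} \bfgamma_{*i} = -\beta_{\Sigma_{*}}(|x_0|; \BF_{Z})\, \lambda_{*i}\, \bfgamma_{*k}^T \bfgamma_{*i}$, which is zero because the eigenvectors of $\Sigma_{*}$ are mutually orthogonal and $k \neq i$. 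Hence only the term involving $\BS(x_0; \mu)$ survives, and factoring the scalar $\alpha_{\Sigma_{*}}(|x_0|; \BF_{Z})$ out of the sum yields the claimed identity.

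The only genuinely delicate point is the justification of the perturbation expansion in the first step, i.e. that the eigenvector map is differentiable at $\BF$ along contamination paths and that the spectral decomposition may be differentiated term by term. This rests on analytic perturbation theory for simple eigenvalues (no resonance terms arise since the $\lambda_{*i}$ are assumed distinct) applied to the path $\epsilon \mapsto \Sigma_{*}(\BF_{\epsilon})$, whose differentiability in turn follows from the implicit function theorem applied to the defining equation \eqref{eqn:ADCM} — which is exactly what Proposition~\ref{Thm:IF2} already records. Since these facts are established in the cited references, in the write-up I would invoke them as in the proof of Proposition~\ref{Thm:IF} rather than reproduce the argument; everything beyond that is the one-line orthogonality computation above.
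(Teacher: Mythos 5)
Your proof takes essentially the same route the paper intends: the paper omits the proof of Proposition~\ref{Thm:IF3}, stating only that it follows the lines of the other computations in the section, namely the Sibson/Croux--Haesbroeck eigenvector perturbation formula applied to $\Sigma_{*}$ combined with Proposition~\ref{Thm:IF2}, and your orthogonality observation that the $\beta_{\Sigma_{*}}$ term drops out because $\bfgamma_{*k}^T \Sigma_{*} \bfgamma_{*i} = 0$ for $k \neq i$ is exactly the required computation. You are also correct that the $\tilde \Sigma$ appearing in the first displayed line of the statement should read $\Sigma_{*}$.
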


We omit the proof of Proposition~\ref{Thm:IF3}, which follows along similar lines to 
to rest of the computations of this section.
It can be shown that when $W (\cdot)$ is a bounded function, 
$\alpha_{\Sigma_{*}} ( | x_{0} |; \BF_{Z} )$ is also bounded, along the lines of 
\cite{ref:HuberBook81}, which in turn implies that the influence function for a 
principal component based on $\Sigma_{*}$ is also bounded.

\section{Simulation Studies}
\label{Sec:Simulation}

We report a number of numerical simulation studies on several properties relating to 
$\tilde{\Sigma}$ and  $\Sigma_{*}$, and their eigenvalues and eigenvectors, 
on datasets with or without influential points, to illustrate the finite sample 
efficiency and robustness properties of the proposed weighted estimators. We compare 
these proposed estimators with techniques that exists in literature, specifically, 
the Sign Covariance Matrix (SCM) and Tyler's shape matrix \citep{ref:AoS87234_Tyler}.

\subsection{Efficiency of different robust estimators}

We compare the performance of $\tilde{\Sigma}$ and $\Sigma_{*}$ with that of the 
SCM and Tyler's scatter matrix. For this study, we fix the dimension $p = 4$.
We consider six elliptical distributions, 
and from every distribution draw 10000 samples each for sample sizes $n = 20, 50, 100, 
300$ and $500$. All distributions are centered at ${\bf 0}_p$, and have covariance matrix 
$\Sigma = \diag(4, 3, 2, 1)$. 

We use the concept of principal angles 
\citep{ref:LinearAlgebraApplications9281_MiaoBenIsrael} 
to find out error estimates for the 
first eigenvector of a scatter matrix. In our case, the first eigenvector is
\ban
\bfgamma_1 = (1, \overbrace{0, \ldots, 0}^{p - 1})^T.
\ean
We measure the prediction error for an eigenvector estimator (say, $\tilde\bfgamma_{1}$), 
using the smallest angle between the true and predicted vectors, i.e. 
$ \cos^{-1} | \tilde\bfgamma_{1}^T \hat\bfgamma_1 | $. A small absolute value of this 
angle means to better prediction. We repeat this 10,000 times and calculate the 
\textbf{Mean Squared Prediction Angle}:
\ban
MSPA(\hat \bfgamma_{1}) =
\frac{1}{10000} \sum_{m=1}^{10000} \left( \cos^{-1} 
\left|\bfgamma_1^T \tilde\bfgamma^{(m)}_{1} \right| \right)^2.
\ean
where $\tilde\bfgamma^{(m)}_{1}$ is the value of $\tilde\bfgamma_{1}$ in the 
$m$-th replication, $m = 1, \ldots, 10,000$. 
The finite sample efficiency of $\tilde\bfgamma_{1}$ relative to that 
from the sample covariance matrix, i.e. $\hat\bfgamma_{1}$ is obtained as:
\ban
 FSE ( \hat\bfgamma_{1}, \hat\bfgamma_{1}) = 
 \frac{ MSPA(\hat\bfgamma_{0, 1})}{MSPA(\hat\bfgamma_{1})}.
 \ean
 
 \begin{table}[t]
 \centering
\begin{scriptsize}
    \begin{tabular}{c|cc|ccc|ccc}
    \hline
    4-variate $t_5$    & SCM  & Tyler & $\tilde{\Sigma}$-H & $\tilde{\Sigma}$-M & $\tilde{\Sigma}$-P & ${\Sigma}_{*}$-H & ${\Sigma}_{*}$-M & ${\Sigma}_{*}$-P \\ \hline
    $n$=20             & 1.04 & 1.02  & 1.10   & 1.07   & 1.02  & 1.09    & 1.07    & 0.98   \\
    $n$=50             & 1.08 & 1.08  & 1.16   & 1.16   & 1.13  & 1.19    & 1.19    & 1.13   \\
    $n$=100            & 1.31 & 1.31  & 1.42   & 1.38   & 1.36  & 1.46    & 1.44    & 1.36   \\
    $n$=300            & 1.46 & 1.54  & 1.81   & 1.76   & 1.95  & 1.88    & 1.88    & 1.95   \\
    $n$=500            & 1.92 & 1.93  & 2.23   & 2.03   & 2.31  & 2.35    & 2.19    & 2.39   \\ \hline
    4-variate $t_6$     & SCM  & Tyler & $\tilde{\Sigma}$-H & $\tilde{\Sigma}$-M & $\tilde{\Sigma}$-P & ${\Sigma}_{*}$-H & ${\Sigma}_{*}$-M & ${\Sigma}_{*}$-P \\ \hline
    $n$=20             & 1.00 & 1.05  & 1.03   & 1.05   & 1.00  & 1.04    & 1.04    & 0.95   \\
    $n$=50             & 1.03 & 1.01  & 1.13   & 1.12   & 1.11  & 1.19    & 1.17    & 1.10   \\
    $n$=100            & 1.08 & 1.12  & 1.25   & 1.23   & 1.27  & 1.24    & 1.25    & 1.22   \\
    $n$=300            & 1.34 & 1.36  & 1.64   & 1.52   & 1.60  & 1.67    & 1.61    & 1.68   \\
    $n$=500            & 1.26 & 1.34  & 1.55   & 1.49   & 1.60  & 1.65    & 1.61    & 1.69   \\ \hline
    4-variate $t_{10}$ & SCM  & Tyler & $\tilde{\Sigma}$-H & $\tilde{\Sigma}$-M & $\tilde{\Sigma}$-P & ${\Sigma}_{*}$-H & ${\Sigma}_{*}$-M & ${\Sigma}_{*}$-P \\ \hline
    $n$=20             & 0.90 & 0.89  & 0.95   & 0.98   & 0.98  & 0.96    & 1.01    & 0.95   \\
    $n$=50             & 0.90 & 0.91  & 1.01   & 0.98   & 0.98  & 1.03    & 1.04    & 0.99   \\
    $n$=100            & 0.87 & 0.87  & 0.93   & 0.95   & 1.01  & 0.99    & 1.01    & 1.05   \\
    $n$=300            & 0.87 & 0.87  & 1.09   & 1.09   & 1.17  & 1.14    & 1.16    & 1.23   \\
    $n$=500            & 0.88 & 0.92  & 1.10   & 1.10   & 1.23  & 1.19    & 1.22    & 1.29   \\ \hline
    4-variate $t_{15}$  & SCM  & Tyler & $\tilde{\Sigma}$-H & $\tilde{\Sigma}$-M & $\tilde{\Sigma}$-P & ${\Sigma}_{*}$-H & ${\Sigma}_{*}$-M & ${\Sigma}_{*}$-P \\ \hline
    $n$=20             & 0.92 & 0.90  & 0.94   & 0.94   & 0.96  & 0.95    & 0.97    & 0.89   \\
    $n$=50             & 0.82 & 0.83  & 0.88   & 0.91   & 0.93  & 0.88    & 0.93    & 0.93   \\
    $n$=100            & 0.84 & 0.87  & 0.92   & 0.95   & 1.00  & 0.93    & 0.96    & 1.00   \\
    $n$=300            & 0.73 & 0.75  & 0.96   & 0.99   & 1.10  & 1.00    & 1.06    & 1.12   \\
    $n$=500            & 0.73 & 0.76  & 0.95   & 0.96   & 1.06  & 0.94    & 0.97    & 1.06   \\ \hline
    4-variate $t_{25}$  & SCM  & Tyler & $\tilde{\Sigma}$-H & $\tilde{\Sigma}$-M & $\tilde{\Sigma}$-P & ${\Sigma}_{*}$-H & ${\Sigma}_{*}$-M & ${\Sigma}_{*}$-P \\ \hline
    $n$=20             & 0.89 & 0.92  & 0.92   & 0.92   & 0.90  & 0.96    & 0.95    & 0.89   \\
    $n$=50             & 0.82 & 0.84  & 0.89   & 0.90   & 0.91  & 0.93    & 0.96    & 0.92   \\
    $n$=100            & 0.77 & 0.76  & 0.90   & 0.90   & 0.96  & 0.94    & 0.98    & 1.04   \\
    $n$=300            & 0.73 & 0.77  & 0.93   & 0.91   & 0.98  & 1.00    & 0.98    & 1.03   \\
    $n$=500            & 0.67 & 0.71  & 0.83   & 0.83   & 0.96  & 0.88    & 0.90    & 1.00   \\ \hline
    4-variate Normal   & SCM  & Tyler & $\tilde{\Sigma}$-H & $\tilde{\Sigma}$-M & $\tilde{\Sigma}$-P & ${\Sigma}_{*}$-H & ${\Sigma}_{*}$-M & ${\Sigma}_{*}$-P \\ \hline
    $n$=20             & 0.82 & 0.84  & 0.87   & 0.90   & 0.91  & 0.89    & 0.93    & 0.89   \\
    $n$=50             & 0.80 & 0.81  & 0.87   & 0.88   & 0.88  & 0.88    & 0.92    & 0.88   \\
    $n$=100            & 0.68 & 0.71  & 0.80   & 0.85   & 0.91  & 0.82    & 0.86    & 0.92   \\
    $n$=300            & 0.61 & 0.63  & 0.82   & 0.85   & 0.93  & 0.86    & 0.91    & 0.96   \\
    $n$=500            & 0.60 & 0.64  & 0.77   & 0.80   & 0.90  & 0.82    & 0.86    & 0.96   \\ \hline
    \end{tabular}
\end{scriptsize}
\caption{Finite sample efficiencies of estimators of the first eigenvector based on 
several scatter matrices in dimension $p=4$. The notation
 H, M or P after $\tilde{\Sigma}$ or ${\Sigma}_{*}$ indicates the depth function 
 used for the weights: H = halfspace depth, M = Mahalanobis depth, P = projection depth.}
\label{table:FSEtable4}
\end{table}

The results from this simulation exercise are presented in Table~\ref{table:FSEtable4}.
It can be seem that  $\tilde{\Sigma}$-based estimators (columns 3-5) 
outperform SCM and Tyler's $M$-estimator of scatter. Among the 3 depth functions 
considered, projection depth gives the best results. Its finite sample performances are 
better than Tyler's and Huber's M-estimators of scatter, as well as their symmetrized 
counterparts that are much more computationally intensive (see Table 4 in 
\cite{ref:JMVA071611_Sirkiaetal}). The affine equivariant ${\Sigma}_{*}$-based estimators (columns 6-8) are even more efficient.

\subsection{Influence function comparison}

\begin{figure}[t!]
	\centering
		\includegraphics[width=0.8\textwidth]{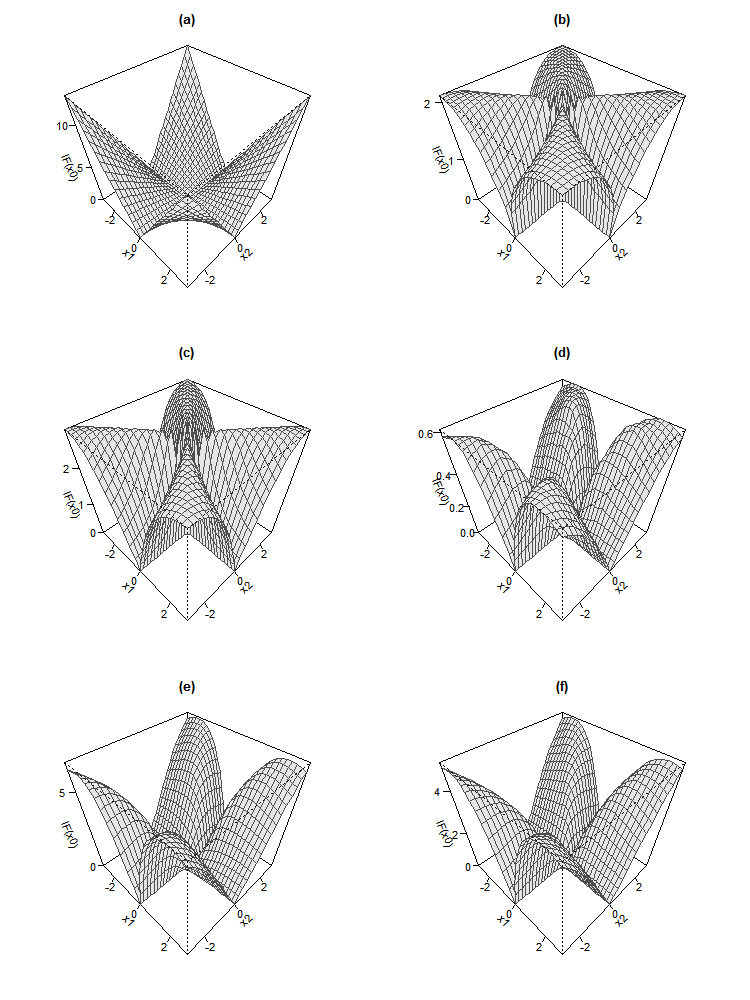}
	\caption{Plot of the norm of influence function for first eigenvector of (a) sample 
	covariance matrix, (b) SCM, (c) Tyler's scatter matrix and $\tilde{\Sigma}$ for 
	weights obtained from (d) Halfspace depth, (e) Mahalanobis depth, (f) Projection depth 
	for a bivariate normal distribution with $\bfmu = {\bf 0}, \Sigma = \diag(2,1)$}
	\label{fig:IFnorm}
\end{figure}

In Figure \ref{fig:IFnorm} we consider first eigenvectors of $\tilde{\Sigma}$, 
the Sign Covariance Matrix (SCM) 
and Tyler's shape matrix \citep{ref:AoS87234_Tyler}. We generate data from and set
$\BF \equiv \mathcal{N}_2({ 0}, \diag(2,1))$ and plot norms of the 
eigenvector influence functions for different values of $x_0$. 
Let us denote the $i$-th eigenvector of  the Sign Covariance Matrix and Tyler's shape 
matrix by  $\gamma_{S,i}$ and $\gamma_{T,i}$, respectively. 
Their influence functions are given  as follows:
\ban
 IF(x_0; \gamma_{S,i}, \BF) & = 
\sum_{k=1; k \neq i}^p \frac{\BS_{ik}(\Lambda^{1/2} z_0, { 0})}{\lambda_{S,i} - \lambda_{S,k}} \gamma_k; \\
& \text{ where }
\lambda_{S,i} = \BE_Z \left( \frac{\lambda_i z_i^2}{\sum_{j=1}^p \lambda_j z_j^2} \right),\\
IF(x_0; \gamma_{T,i}, \BF) & =  (p+2) \sum_{k=1; k \neq i}^p \frac{\sqrt{\lambda_i 
\lambda_k}}{\lambda_i - \lambda_k}\BS_{ik}(z_0; {0}) \gamma_k. 
\ean
Panels (b) and (c) in Figure~\ref{fig:IFnorm}, corresponding to  Sign Covariance Matrix 
and Tyler's shape matrix respectively, exhibit an `inlier effect', that is, 
points close to the center  having high influence,  which results in loss of efficiency. 
On the other hand, the influence function for eigenvector estimates of the 
sample covariance matrix  (panel (a)) is 
unbounded and makes the corresponding estimates non-robust. In comparison, the 
$\tilde{\Sigma}$ corresponding to weights derived from projection depth, 
half-space depth and Mahalanobis depth  have bounded influence functions {\it and} 
small values of the influence function at `deep' points.

\subsection{Efficiency of affine equivariant robust estimator}

We study the finite sample efficiency properties of $\Sigma_{*}$ 
using a simulation exercise. 
We consider 6 different elliptic distributions,  namely, the $p$-variate multivariate 
Normal distribution and the multivariate $t$ distributions corresponding to degrees 
of freedom 5, 6, 10, 15 and 25. We compute the ARE of the estimator for the 
first eigenvector using $\Sigma_{*}$, using weights based on the projection depth 
(PD) and the halfspace depth (HSD), thus this simulation is an illustration of how 
different choices of weights affect the results in the context of 
Theorem~\ref{Thm:Eigen2}.  We consider using the sample covariance 
matrix as the baseline method for this study. The ARE values are computed by
using Monte-Carlo simulation of $10^6$ samples and subsequent numerical integration.
We report the results of this exercise in Table~\ref{table:AREtable}.  
Based on these results, we notice that  $\Sigma_{*}$  is particularly efficient in lower 
dimensions for distributions with heavier tails ($t_5$ and $t_6$), while for distributions 
with lighter tails, the AREs increase with data dimension. At higher values of $p$,
note that $\Sigma_{*}$ is almost as efficient as the sample covarnace matrix even
 when the data comes from multivariate normal distribution.

\begin{table}[t]
\centering
\begin{footnotesize}
\begin{tabular}{c|cccc|cccc}
    \hline
    & \multicolumn{4}{c|}{PD} & \multicolumn{4}{c}{HSD} \\\cline{2-9}
    Distribution & $p=2$  & $p=5$  & $p=10$ & $p=20$ & $p=2$  & $p=5$  & $p=10$ & $p=20$ \\ \hline
    $t_5$           & 4.73 & 3.99 & 3.46 & 3.26 & 4.18 & 3.63 & 3.36 & 3.15 \\
    $t_6$           & 2.97 & 3.28 & 2.49 & 2.36 & 2.59 & 2.45 & 2.37 & 2.32 \\
    $t_{10}$          & 1.45 & 1.47 & 1.49 & 1.52 & 1.30 & 1.37 & 1.43 & 1.49 \\
    $t_{15}$          & 1.15 & 1.19 & 1.23 & 1.27 & 1.01 & 1.10 & 1.17 & 1.24 \\
    $t_{25}$          & 0.97 & 1.02 & 1.07 & 1.11 & 0.85 & 0.94 & 1.02 & 1.08 \\
    MVN          & 0.77 & 0.84 & 0.89 & 0.93 & 0.68 & 0.77 & 0.84 & 0.91 \\ 
    \hline
\end{tabular}
\end{footnotesize}
\caption{Table of AREs of the estimator for the first eigenvector estimation using 
$\Sigma_{*}$, relative to using the sample covariance matrix, for different choices of 
dimension $p$. The data-generating distributions are the multivariate Normal (MVN), 
and multivariate $t$-distributions with degrees of freedom 5, 6, 10, 15 and 25. Weights 
for $\Sigma_{*}$ are based on either the projection depth (PD) or the half-space 
depth (HSD).}
\label{table:AREtable}
\end{table}


\subsection{Robust sufficient dimension reduction and supervised learning}

One of the main usages of obtaining dispersion estimators and their eigenvalues and 
eigenvectors is in \textit{dimension reduction} techniques. Examples of such uses are in 
\textit{principal component regression, partial least squares} and \textit{envelope 
methods}. We illustrate below the latter technique, in the context of 
\textit{sufficient dimension reduction} (SDR). For details on envelope methods and other 
uses of robust estimators of dispersion and eigen-structures, see 
\citet{ref:Sinica10927_Cooketal, ref:PhilTransRoyalSoc094385_AdragniCook, 
ref:JASA15599_CookZhang} 
and references and citations of these 
articles. In the context of multivariate-response ($Y_{i} \in \BR^{q}$) linear regression, 
the envelope method proposes the model $Y_{i} = \alpha + \Gamma_{1} \eta x_{i} + e_{i}$, 
where $e_{i}$ are independent mean zero Gaussian  noise terms with covariance matrix 
$\Sigma$ whose spectral representation can be written as 
\ban 
\Sigma = \Gamma \Lambda \Gamma^{T}
& = 
\left( \begin{array}{ll}
\Gamma_{0} & \Gamma_{1} 
\end{array}
\right)
\left( \begin{array}{ll}
\Lambda_{0} & 0 \\
0 & \Lambda_{1} 
\end{array}
\right)
\left( \begin{array}{l}
\Gamma_{0} \\
\Gamma_{1} 
\end{array}
\right) \\
& = \Gamma_{0} \Lambda_{0} \Gamma_{0}^{T}
+ \Gamma_{1} \Lambda_{1} \Gamma_{1}^{T}.
\ean
Thus, the eigenvectors of $\Sigma$ are partitioned into two blocks:  
$\Gamma_{1} \in \BR^{q} \times \BR^{d}$ and 
$\Gamma_{0} \in \BR^{q} \times \BR^{q-d}$, and 
the regression coefficient of $Y_{i}$ on $x_{i}$  is given by 
$\Gamma_{1} \eta$ for some $\eta \in \BR^{d} \times \BR^{p}$. 
Dimension reduction is achieved  when $d \ll p$, typically without extraneous 
assumptions like sparsity. The envelope model for generalized linear models is 
discussed in \citet{ref:PhilTransRoyalSoc094385_AdragniCook}, 
and may be used for supervised 
learning. Nonlinear regression models may also be handled similarly.

Given a set of examples $\{ (Y_{i}, X_{i}), i = 1, \ldots, n \}$, an envelope-based 
prediction for the response $Y$ for any $X$ may be obtained from 
\ban
\hat{Y} (X) & = \bigl[ \sum_{i = 1}^{n} w_{i} \bigr]^{-1}
\sum_{i = 1}^{n} w_{i} Y_{i}, 
\text{ where } \\ 
 w_{i} &= \exp \left[ -\frac{1}{\hat\sigma^2}  | \hat\Gamma_{1}^T (X - X_i) |^2 \right].
\ean
The above assumes that the covariates come from the Gaussian distribution 
$N_{p} ({\bf 0}_{p}, \sigma^{2} \BI_{p})$, and appropriate 
changes may be made for other distributions. 

We design a robust version of the above, by using weighted spatial medians for location 
parameters corresponding to the distributions of $X$ and $X | Y$, and using the first $d$ 
eigenvectors of $\tilde{\Sigma}$ as $\hat\Gamma_{1}$. A robust location estimator for 
the distribution of $X | Y$ is required for the estimation of $\sigma^2$. 
Details are available in \cite{ref:PhilTransRoyalSoc094385_AdragniCook}.
\begin{figure}[t!]
\begin{center}
\begin{tabular}{ll}
\includegraphics[width=0.4\textwidth]{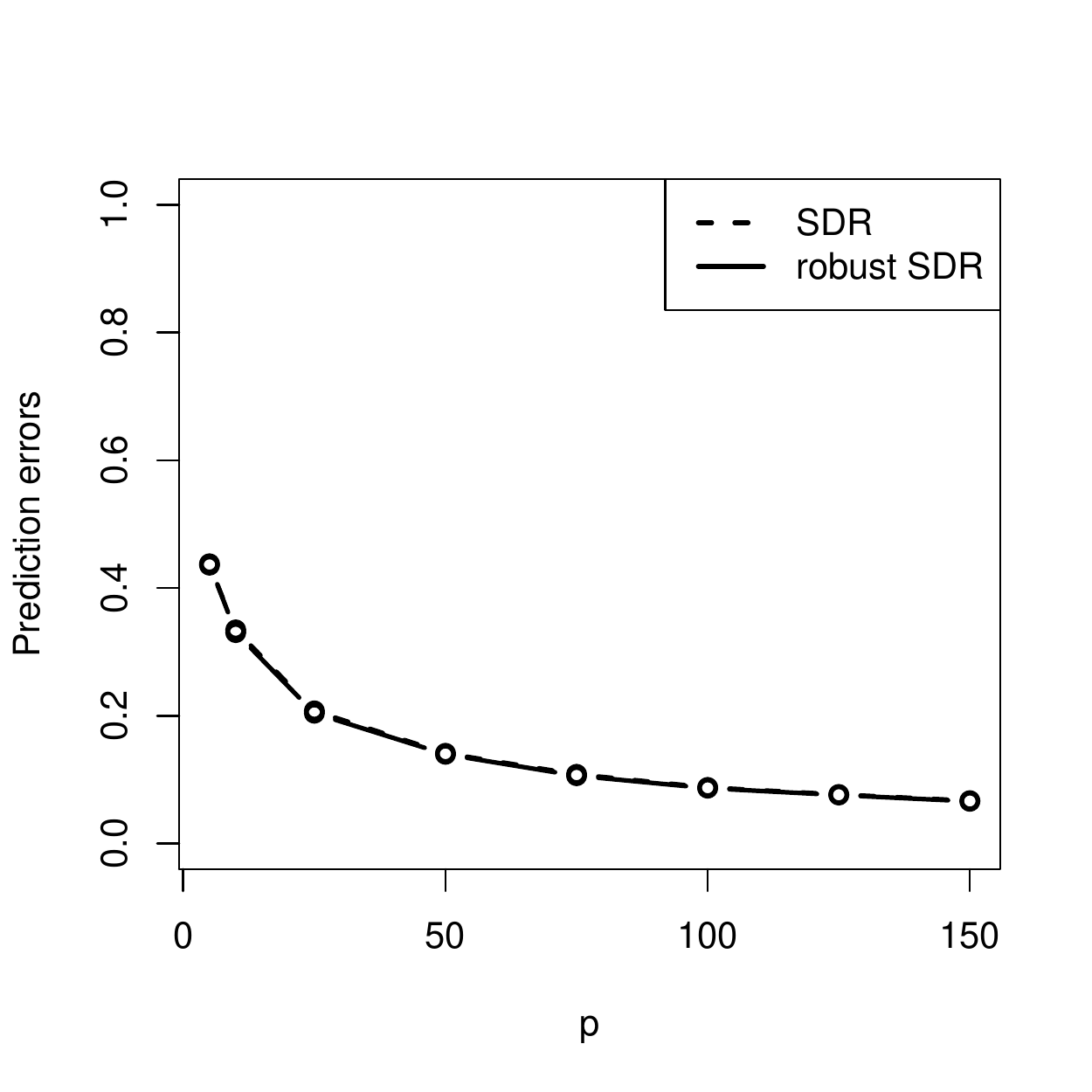} &
\includegraphics[width=0.4\textwidth]{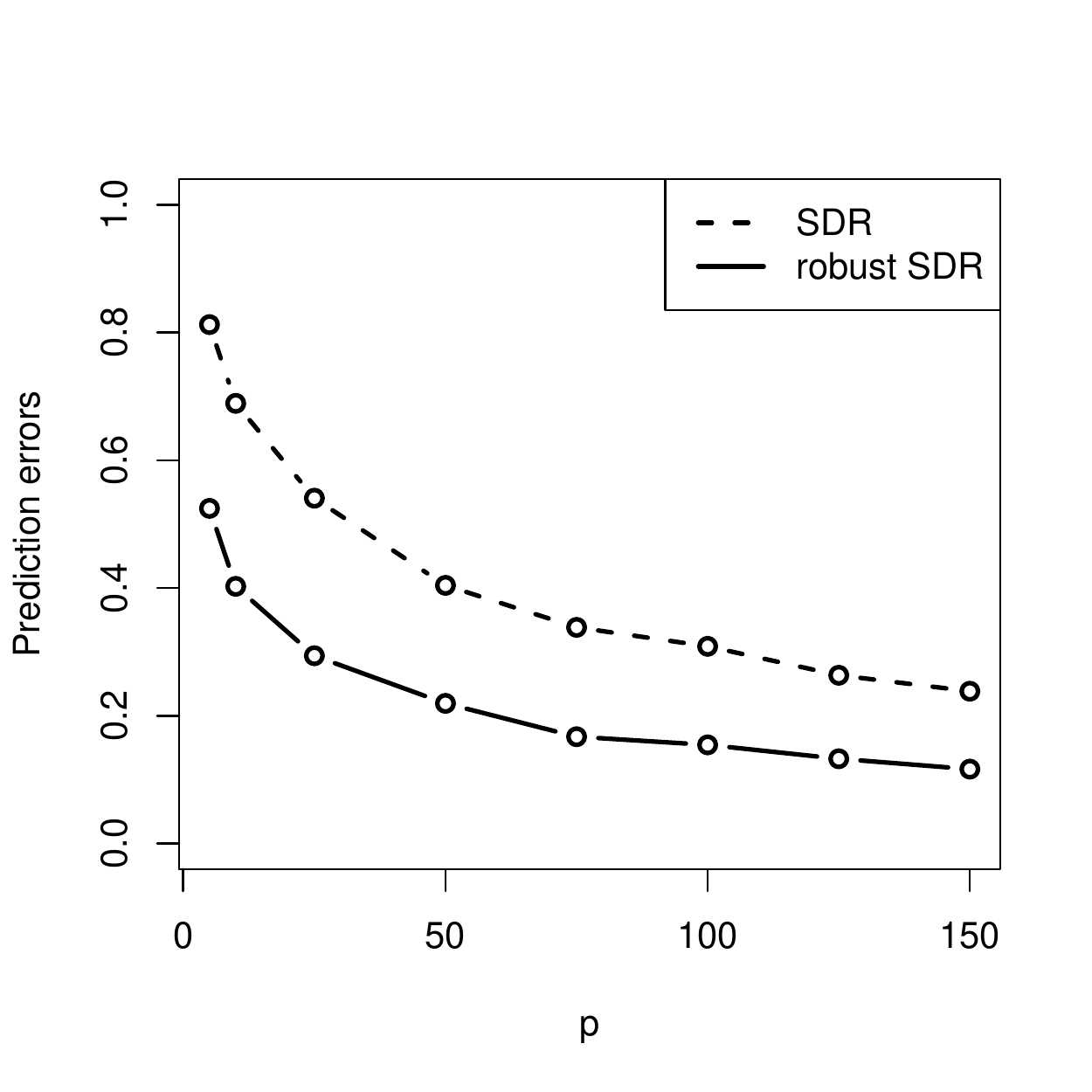}
\end{tabular}
\caption{Average prediction errors for two methods of SDR (a) in absence and (b) in presence of outliers}
\label{fig:SDRfig}
\end{center}
\end{figure}
In a non-linear regression model, we compare the performance of the robust version of 
SDR with the original method of 
\cite{ref:PhilTransRoyalSoc094385_AdragniCook} 
with or without the presence of bad leverage points in $\Sigma$. 
For any given choice of covariate dimension $p$, we take $n=200$ and $d=1$, 
and generate the responses 
$Y_1, \ldots, Y_n$ as independent standard normal, and 
$X | Y$ as Normal with mean $Y + Y^{2} + Y^3$ in each of the $p$ coordinates, 
and variance $25 \BI_p$.
We measure performances of the SDR models by their mean squared prediction error on 
another set of 200 observations  generated similarly, and taking the average of these 
errors on 100 such training-test pairs of datasets. The above steps 
are repeated for the choices of $p = 5, 10, 25, 50, 75, 100, 125, 150$.

Panel (a) of figure \ref{fig:SDRfig} compares prediction errors using both robust and 
maximum likelihood SDR estimates when the covariates contain no outliers: here the two 
methods are virtually indistinguishable. We then introduce outliers in each of the 100 
datasets by adding 100 to first $p/5$ coordinates of the first 10 observed covariate 
values, and repeat the analysis. Panel (b) of the figure shows that the robust SDR method 
remains more accurate in predicting out of sample observations for all values of $p$ than 
the standard SDR.

\section{Real Data Applications}
\label{Sec:DA}

We  now present an application of our proposed approach to some real data problems. 

Robust techniques are useful when in identifying outlying observations, and we illustrate 
below how to make use of our fixed-dimensional methods presented earlier for 
functional (and hence infinite-dimensional)  data. 

We follow the approach of \cite{ref:JASA151100_BoenteSalibianBarrera} 
for performing robust principal component 
analysis on functional data using the estimated eigenvectors from 
$\widehat{\tilde{\Sigma}}$. Suppose the data consistents of  $n$ curves, say 
$ \cF = \{ f_1, \ldots , f_n  \} \in L^2[0,1]$, each observed at a set of common design 
points $\{ t_1, ..., t_m \} $. We model each of these functions as a linear combination of 
$p$ mutually orthogonal B-spline basis functions $\cD = \{ \delta_1, ..., \delta_p \}$. We map data for each of the functions onto the coordinate system formed by the spline basis:
\begin{equation}
T( \cF, \cD)_{ij} = \sum_{l=2}^m f_i(t_l) \delta_j(t_l) (t_l - t_{l-1}); \quad 1 \leq i \leq n, 1 \leq j \leq p.
\end{equation}
We then model  the $i$-th row of the $n \times p$ matrix $T(\cF, \cD) \equiv T$, 
denoted by $\bfT_{i}$ 
as follows:
\ban 
\bfT_i = \mu + P s_i + e_{i},
\ean
where $\mu$ is a location parameter, $P$ is a $p \times q$ loading matrix, $s_{i}$ is 
a $q \times 1$ score vector, and $e_{i}$ is the error term. We obtain robust estimators of 
$\mu$, $P$ and consequently $s_{i}$ using $\widehat{\tilde{\Sigma}}$. Define 
$\widehat{\bfT}_{i} = \hat{\mu} + \hat{P} \hat{s}_{i}$. 
The \textit{orthogonal distance} (OD)  corresponding to this projection is defined as 
\ban 
OD_{i} = | \bfT_i - \widehat{\bfT}_{i} |.
\ean
Analogously, the \textit{score distance} (SD) is defined as 
\ban 
SD_i = \sqrt{ \sum_{j=1}^q \frac{\hat{s}^2_{ij}}{\hat \lambda_j}}; 
\ean
where $\hat \lambda_1,\ldots ,\hat \lambda_q$ are the top eigenvalues from 
$\widehat{\tilde{\Sigma}}$. 
For outlier detection, following \cite{ref:Technometrics0564_Hubertetal} 
we set the upper cutoff values for 
score distances at $(\chi^2_{2, .975})^{1/2}$ and orthogonal distances at 
\ban
[\text{median}(OD^{2/3}) + \text{MAD}(OD^{2/3})\Phi^{-1}(0.975)]^{3/2},
\ean
 where 
$\Phi(\cdot)$ is the standard normal cumulative distribution function.

\begin{figure}[t!]
\begin{center}
\includegraphics[width=0.9\textwidth]{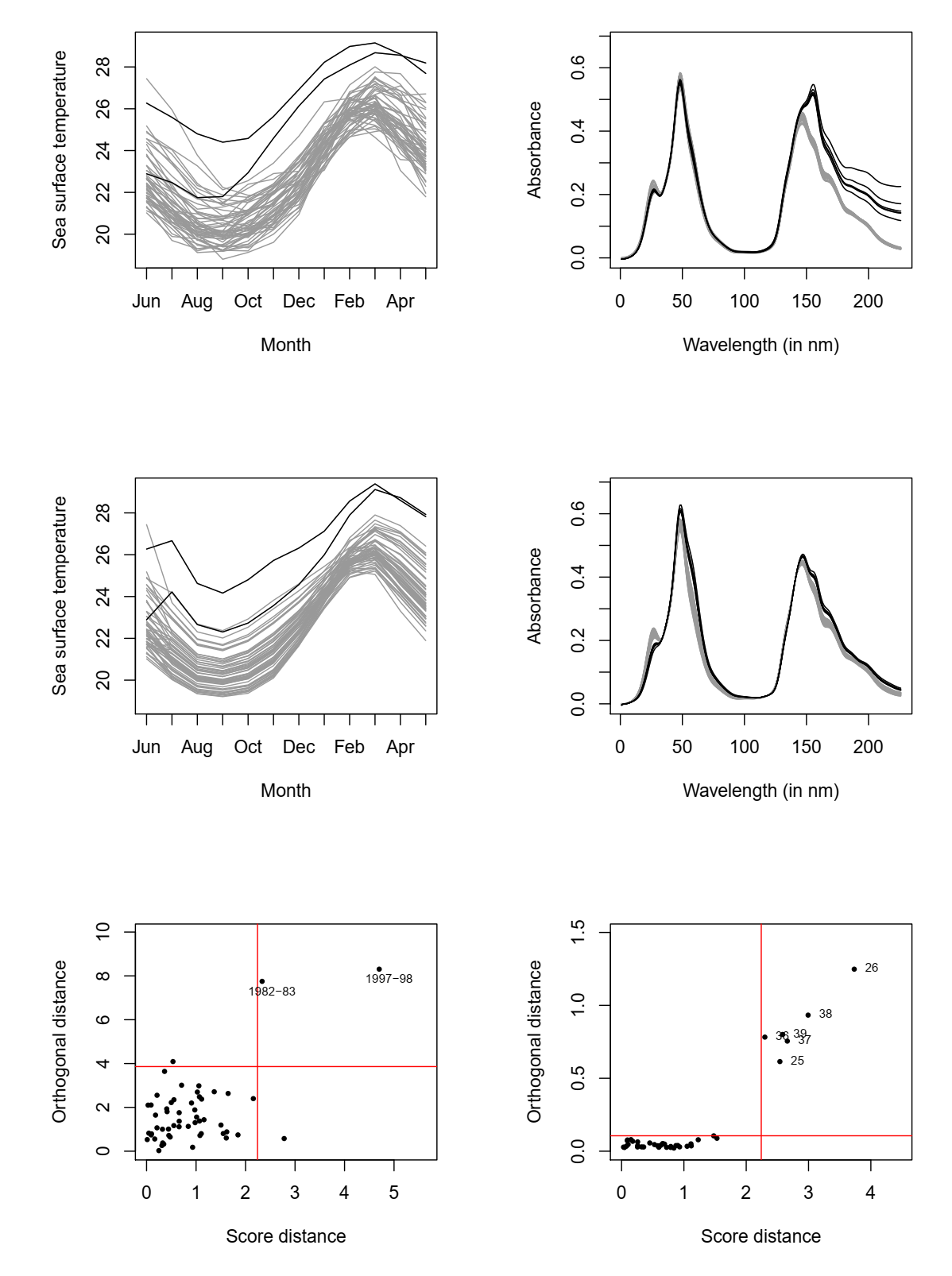}
\caption{Actual sample curves, their spline approximations and diagnostic plots respectively for El-Ni\~no (left side plots) and Octane (right side plots) datasets}
\label{fig:fPCAfig}
\end{center}
\end{figure}

We apply the above outlier detection method on two data sets. First, we consider the 
monthly average \textit{sea surface temperature anomaly data} 
from June 1970 to May 2004, available 
from \url{http://www.cpc.ncep.noaa.gov/data}, depicted in  top-left panel
of Figure~\ref{fig:fPCAfig})).
Second, we consider the \textit{octane data}, which consists of 226 variables and 39 
observations \citep{ref:EsbensenetalBook94}. 
Each sample is a gasoline compound with a certain octane 
number, and has its NIR absorbance spectra measured in 2 nm intervals between 1100 - 1550 
nm. There are 6 outliers here: compounds 25, 26 and 36-39, which contain alcohol. This 
data is presented in  top-right panel  of Figure~\ref{fig:fPCAfig})).

In the sea surface temperature data, using a cubic spline basis with knots at alternate 
months starting in June, we get a close approximation as depicted in middle-left
panel  of Figure~\ref{fig:fPCAfig}. Using our proposed methodology with $q =1$ 
results in two points having their SD and OD larger than cutoff, depicted in  
bottom-left panel  of Figure~\ref{fig:fPCAfig}. These points correspond to the time periods June 
1982 to May 1983 and June 1997 to May 1998 are marked by black curves in panels a and c, 
and pinpoint the two seasons with strongest El-Ni\~no events.

On the octane data, we use the same methodology, 
 and again the top robust PC turns out to be sufficient in identifying all 6 outliers. 
 Details are available in the right side panels  of Figure~\ref{fig:fPCAfig}.
\section{Conclusions}
\label{Sec:Conclusion}

We propose the use of a weighted multivariate sign transformation for robust 
estimation and inference, and as demonstrated by theoretical results and several 
simulation studies and data examples, in many situations using a data-depth driven weight 
function leads to considerable efficiency gain without compromising on robustness 
properties. Our methodology seems to suggest new ways of identifying 
El-Ni\~no or La-Ni\~na events from the sea-surface temperature anomaly data, 
which will be studied further later.

Several of our results stated above are for data from the Euclidean space $\BR^{p}$, where 
$p$ is fixed. The cases where $p$ increases with sample size and may be higher than sample 
size, and where data are from a separable Hilbert space, will be considered in a future 
work. There are only few conceptual challenges to such extensions, however, there are 
several  technical and algebraic challenges.

\section*{Acknowledgements}
The research of SC is partially  supported by the National Science Foundation (NSF) under grants \#~DMS-1622483, \#~DMS-1737918.

\bibliographystyle{apalike}
\bibliography{arxiv-main} 

\begin{thebibliography}{}

\bibitem[Adragni and Cook, 2009]{ref:PhilTransRoyalSoc094385_AdragniCook}
Adragni, K.~P. and Cook, R.~D. (2009).
\newblock Sufficient dimension reduction and prediction in regression.
\newblock {\em Philosophical Transactions of the Royal Society of London A:
  Mathematical, Physical and Engineering Sciences}, 367(1906):4385 -- 4405.

\bibitem[Anderson, 2009]{ref:AndersonBook09}
Anderson, T. (2009).
\newblock {\em An Introduction to Multivariate Statistical Analysis}.
\newblock Wiley, Hoboken, NJ, 3 edition.

\bibitem[Bali et~al., 2011]{ref:AoS112852_Balietal_PCA_Functional_Robust}
Bali, J.~L., Boente, G., Tyler, D.~E., and Wang, J.-L. (2011).
\newblock Robust functional principal components: {A} projection-pursuit
  approach.
\newblock {\em The Annals of Statistics}, 39(6):2852 -- 2882.

\bibitem[Boente and Salibian-Barrera,
  2015]{ref:JASA151100_BoenteSalibianBarrera}
Boente, G. and Salibian-Barrera, M. (2015).
\newblock $s$-estimators for functional principal component analysis.
\newblock {\em Journal of the American Statistical Association}, 110(511):1100
  -- 1111.

\bibitem[Cardot et~al., 2017]{ref:AoS17591_Cardotetal_Median_HilbertSpace}
Cardot, H., C{\'e}nac, P., and Godichon-Baggioni, A. (2017).
\newblock Online estimation of the geometric median in {H}ilbert spaces:
  {N}onasymptotic confidence balls.
\newblock {\em The Annals of Statistics}, 45(2):591 -- 614.

\bibitem[Chakraborty and Chaudhuri,
  2014]{ref:AoS141203_ChakrabortyChaudhuri_Banach_Quantile}
Chakraborty, A. and Chaudhuri, P. (2014).
\newblock The spatial distribution in infinite dimensional spaces and related
  quantiles and depths.
\newblock {\em The Annals of Statistics}, 42(3):1203 -- 1231.

\bibitem[Chatterjee and Bose, 2005]{ref:CBose_AoS05414}
Chatterjee, S. and Bose, A. (2005).
\newblock Generalized bootstrap for estimating equations.
\newblock {\em The Annals of Statistics}, 33(1):414 -- 436.

\bibitem[Chaudhuri, 1996]{ref:JASA96862_Chaudhuri}
Chaudhuri, P. (1996).
\newblock On a geometric notion of quantiles for multivariate data.
\newblock {\em Journal of the American Statistical Association}, 91(434):862 --
  872.

\bibitem[Cook et~al., 2010]{ref:Sinica10927_Cooketal}
Cook, R.~D., Li, B., and Chiaromonte, F. (2010).
\newblock Envelope models for parsimonious and efficient multivariate linear
  regression.
\newblock {\em Statistica Sinica}, pages 927 -- 960.

\bibitem[Cook and Zhang, 2015]{ref:JASA15599_CookZhang}
Cook, R.~D. and Zhang, X. (2015).
\newblock Foundations for envelope models and methods.
\newblock {\em Journal of the American Statistical Association}, 110(510):599
  -- 611.

\bibitem[Croux and Haesbroeck, 2000]{ref:Biometrika00603_CrouxHaesbroeck}
Croux, C. and Haesbroeck, G. (2000).
\newblock Principal component analysis based on robust estimators of the
  covariance or correlation matrix: Influence functions and efficiencies.
\newblock {\em Biometrika}, 87(3):603--618.

\bibitem[D\"{u}rre et~al., 2014]{durre14}
D\"{u}rre, A., Vogel, D., and Tyler, D. (2014).
\newblock The spatial sign covariance matrix with unknown location.
\newblock {\em J. Mult. Anal.}, 130:107--117.

\bibitem[Esbensen et~al., 1994]{ref:EsbensenetalBook94}
Esbensen, K.~H., Sch\"{o}nkopf, S., and Midtgaard, T. (1994).
\newblock {\em Multivariate Analysis in Practice}.
\newblock CAMO As, Trondheim, Germany.

\bibitem[Fang et~al., 1990]{ref:Fangetal90_Book}
Fang, K.-T., Kotz, S., and Ng, K.-W. (1990).
\newblock {\em Symmetric Multivariate and Related Distributions}.
\newblock CRC Press.

\bibitem[Haberman, 1989]{ref:AoS891631_Haberman}
Haberman, S.~J. (1989).
\newblock Concavity and estimation.
\newblock {\em The Annals of Statistics}, 17(4):1631 -- 1661.

\bibitem[Haldane, 1948]{ref:Biometrika48414_Haldane}
Haldane, J. B.~S. (1948).
\newblock Note on the median of a multivariate distribution.
\newblock {\em Biometrika}, 35(3 - 4):414 -- 417.

\bibitem[Hampel et~al., 1986]{ref:HampelBook86}
Hampel, F.~R., Ronchetti, E.~M., Rousseeuw, P.~J., and Stahel, W.~A. (1986).
\newblock {\em Robust Statistics: The Approach based on Influence Functions},
  volume 196.
\newblock John Wiley \& Sons.

\bibitem[Huber, 1981]{ref:HuberBook81}
Huber, P.~J. (1981).
\newblock {\em Robust Statistics}.
\newblock Wiley.

\bibitem[Hubert et~al., 2005]{ref:Technometrics0564_Hubertetal}
Hubert, M., Rousseeuw, P.~J., and Vanden~Branden, K. (2005).
\newblock {ROBPCA}: A new approach to robust principal component analysis.
\newblock {\em Technometrics}, 47(1):64 -- 79.

\bibitem[Koltchinskii, 1997]{ref:AoS97435_Koltchinskii}
Koltchinskii, V.~I. (1997).
\newblock \textit{M}-estimation, convexity and quantiles.
\newblock {\em The Annals of Statistics}, 25(2):435 -- 477.

\bibitem[Liu et~al., 1999]{LiuPareliusSingh99}
Liu, R., Parelius, J.~M., and Singh, K. (1999).
\newblock Multivariate analysis by data depth: descriptive statistics, graphics
  and inference.
\newblock {\em The Annals of Statistics}, 27(3):783--858.

\bibitem[Locantore et~al., 1999]{ref:Test991_Locantoreetal}
Locantore, N., Marron, J.~S., Simpson, D.~G., et~al. (1999).
\newblock Robust principal component analysis for functional data.
\newblock {\em Test}, 8(1):1 -- 73.

\bibitem[Magyar and Tyler, 2014]{ref:Biometrika14673_MagyarTyler}
Magyar, A.~F. and Tyler, D.~E. (2014).
\newblock The asymptotic inadmissibility of the spatial sign covariance matrix
  for elliptically symmetric distributions.
\newblock {\em Biometrika}, 101(3):673--688.

\bibitem[Majumdar and Chatterjee, 2018]{StatPaper18}
Majumdar, S. and Chatterjee, S. (2018).
\newblock Non-convex penalized multitask regression using data depth-based
  penalties.
\newblock {\em Stat}, 7:e174.

\bibitem[Miao and Ben-Israel,
  1992]{ref:LinearAlgebraApplications9281_MiaoBenIsrael}
Miao, J. and Ben-Israel, A. (1992).
\newblock On principal angles between subspaces in $\mathbb{R}^{n}$.
\newblock {\em Linear Algebra and its Applications}, 171:81 -- 98.

\bibitem[Minsker, 2015]{ref:Bernoulli152308_Minsker_Median_Banach}
Minsker, S. (2015).
\newblock Geometric median and robust estimation in {B}anach spaces.
\newblock {\em Bernoulli}, 21(4):2308 -- 2335.

\bibitem[M{\"o}tt{\"o}nen and Oja, 1995]{ref:JNonpara95201_MottonenOja95}
M{\"o}tt{\"o}nen, J. and Oja, H. (1995).
\newblock Multivariate spatial sign and rank methods.
\newblock {\em Journal of Nonparametric Statistics}, 5(2):201--213.

\bibitem[Niemiro, 1992]{ref:AoS921514_Niemiro}
Niemiro, W. (1992).
\newblock Asymptotics for {M}-estimators defined by convex minimization.
\newblock {\em The Annals of Statistics}, 20(3):1514--1533.

\bibitem[Oja, 2010]{ref:OjaBook10}
Oja, H. (2010).
\newblock {\em Multivariate Nonparametric Methods with R: An Approach Based on
  Spatial Signs and Ranks}.
\newblock Springer Science \& Business Media.

\bibitem[Serfling, 2006]{ref:DIMACS061_Serfling}
Serfling, R. (2006).
\newblock Depth functions in nonparametric multivariate inference.
\newblock {\em DIMACS Series in Discrete Mathematics and Theoretical Computer
  Science}, 72:1.

\bibitem[Sibson, 1979]{ref:JRSSB79217_Sibson}
Sibson, R. (1979).
\newblock Studies in the robustness of multidimensional scaling:
  {P}erturbational analysis of classical scaling.
\newblock {\em Journal of the Royal Statistical Society. Series B
  (Methodological)}, pages 217 -- 229.

\bibitem[Sirki{\"a} et~al., 2007]{ref:JMVA071611_Sirkiaetal}
Sirki{\"a}, S., Taskinen, S., and Oja, H. (2007).
\newblock Symmetrised $m$-estimators of multivariate scatter.
\newblock {\em Journal of Multivariate Analysis}, 98(8):1611 -- 1629.

\bibitem[Taskinen et~al., 2012]{ref:SPL12765_Taskinenetal}
Taskinen, S., Koch, I., and Oja, H. (2012).
\newblock Robustifying principal component analysis with spatial sign vectors.
\newblock {\em Statistics \& Probability Letters}, 82(4):765 -- 774.

\bibitem[Tyler, 1987]{ref:AoS87234_Tyler}
Tyler, D.~E. (1987).
\newblock A distribution-free $ m $-estimator of multivariate scatter.
\newblock {\em The Annals of Statistics}, 15(1):234 -- 251.

\bibitem[Wang et~al., 2015]{ref:JASA151658_WangPengLi}
Wang, L., Peng, B., and Li, R. (2015).
\newblock A high-dimensional nonparametric multivariate test for mean vector.
\newblock {\em Journal of the American Statistical Association}, 110(512):1658
  -- 1669.

\bibitem[Zuo and Serfling, 2000]{ref:AoS00461_ZuoSerfling}
Zuo, Y. and Serfling, R. (2000).
\newblock General notions of statistical depth function.
\newblock {\em The Annals of Statistics}, 28(2):461--482.

\end{thebibliography}

\appendix
\section*{Supplementary material}
\renewcommand{\thesection}{\Alph{section}}
\section{Form of $V_W$}\label{sec:appA}
First observe that for $F$ having covariance matrix $\Sigma = \Gamma\Lambda\Gamma^T$,
\begin{align}\label{eqn:VWeqn}
V_W  = (\Gamma \otimes \Gamma) V_{W, \Lambda} (\Gamma \otimes \Gamma)^T
\end{align}
where $V_{W, \Lambda}$ is the covariance matrix of $\BF_\Lambda$, the elliptic distribution with mean $\mu$ and covariance matrix $\Lambda$. Now,
\begin{eqnarray*}
V_{W, \Lambda} &=& \BE \left[ \ve \left\{ \frac{W^2 (Z, \BF_Z) \Lambda^{1/2} ZZ^T \Lambda^{1/2} }{ Z^T \Lambda Z} - \tilde \Lambda \right\} {\ve}^T \left\{ \frac{ W^2(Z, \BF_Z) \Lambda^{1/2} ZZ^T \Lambda^{1/2} }{ Z^T \Lambda Z} - \tilde \Lambda \right\} \right]\\
&=& \BE \left[ \ve \left\{ W^2 (Z, \BF_Z) \BS( \Lambda^{1/2} z; {\bf 0} ) \right\} {\ve}^T \left\{ W^2(Z, \BF_Z) \BS( \Lambda^{1/2}Z; \bf0) \right\} \right]\\
&& - \quad \ve( \tilde \Lambda) {\ve}^T( \tilde \Lambda )
\end{eqnarray*}

The matrix $\ve( \tilde\Lambda) \ve^T(\tilde\Lambda)$ consists of elements $\lambda_i\lambda_j$ at $(i,j)^\text{th}$ position of the $(i,j)^\text{th}$ block, and 0 otherwise. These positions correspond to variance and covariance components of on-diagonal elements. For the expectation matrix, all its elements are of the form $ \BE[ \sqrt{\lambda_a \lambda_b \lambda_c \lambda_d} Z_a Z_b Z_c Z_d . W^4 (Z, \BF_Z) / (Z^T \Lambda Z)^2]$, with $1 \leq a,b,c,d \leq p$. Since $W^4 (Z, \BF_Z) / (Z^T \Lambda Z)^2$ is even in $Z$, which has a spherically symmetric distribution, all such expectations will be 0 unless $a,b,c,d$ are all equal or pairwise equal. Following a similar derivation for spatial sign covariance matrices in \cite{ref:Biometrika14673_MagyarTyler}, we collect the non-zero elements and write the matrix of expectations:
$$ (\BI_{p^2} + \BK_{p,p}) \left\{ \sum_{a=1}^p \sum_{b=1}^p \tilde\gamma_{ab} (e_a e_a^T \otimes  e_b e_b^T) - \sum_{a=1}^p \tilde\gamma_{aa} (e_a e_a^T \otimes  e_a e_a^T) \right\} + \sum_{a=1}^p \sum_{b=1}^p \tilde\gamma_{ab} (e_a e_b^T \otimes  e_a e_b^T) $$
where $\BI_k = (e_1,...,e_k), \BK_{m,n} = \sum_{i=1}^m \sum_{j=1}^n \BJ_{ij} \otimes \BJ_{ij}^T$ with $\BJ_{ij} \in \BR^{m \times n}$ having 1 as $(i,j)$-th element and 0 elsewhere, and $\tilde\gamma_{mn} = \BE[ \lambda_m \lambda_n Z_m^2 Z_n ^2 . W^4 (Z, \BF_Z) / (Z^T \Lambda Z)^2]; 1 \leq m,n \leq p$.

\paragraph{}Putting everything together, denote by $\hat {\tilde \Lambda}$ the sample version of $\tilde \Lambda$, the weighted covariance matrix obtained from $\BF_\Lambda$, i.e. $\hat {\tilde \Lambda} = \sum_{i=1}^n W_n^2 (Z_i, \BF_{Z,n}) \BS( \Lambda^{1/2} Z_i; \hat \mu_n)/n $. Then the different types of elements in the matrix $\hat {\tilde \Lambda}$ are as given below ($1 \leq a,b,c,d \leq p$):

\begin{itemize}
\item Variance of on-diagonal elements
$$ A\BV( \sqrt n \hat {\tilde \Lambda} (a,a) ) = \BE \left[ \frac{ W^4(Z, \BF_Z) \lambda_a^2 Z_a^4}{(Z^T \Lambda Z)^2} \right] - \tilde\lambda_a^2 $$

\item Variance of off-diagonal elements ($a \neq b$)
$$ A\BV( \sqrt n \hat {\tilde \Lambda} (a,b) ) = \BE \left[ \frac{W^4 (Z, \BF_Z) \lambda_a \lambda_b Z_a^2 Z_b^2}{(Z^T \Lambda Z)^2} \right] $$

\item Covariance of two on-diagonal elements ($a \neq b$)
$$ A\BV(\sqrt n \hat {\tilde \Lambda} (a,a), \sqrt n \hat {\tilde \Lambda} (b,b) )
= \BE \left[ \frac{W^4 (Z, \BF_Z) \lambda_a \lambda_b Z_a^2 Z_b^2}{(Z^T \Lambda Z)^2} \right] - \tilde\lambda_{a} \tilde\lambda_{b} $$

\item Covariance of two off-diagonal elements ($a \neq b, c \neq d$)
$$ A\BV(\sqrt n \hat {\tilde \Lambda} (a,b), \sqrt n \hat {\tilde \Lambda} (c,d) ) = 0 $$

\item Covariance of one off-diagonal and one on-diagonal element ($a \neq b \neq c$)
$$ A\BV(\sqrt n \hat {\tilde \Lambda} (a,b), \sqrt n \hat {\tilde \Lambda} (c,c) ) = 0 $$
\end{itemize}

\noindent The above give all the elements of $V_{W,\Lambda}$. We plug these in \eqref{eqn:VWeqn} to recover $V_W$.

\section{Proofs}\label{section:appB}

\begin{proof}[Proof sketch of Corollary 2.4]
%

We start with slightly modified versions of Lemmas A.4 and A.5 in \cite{ref:JASA151658_WangPengLi}:

\begin{Lemma}\label{lemma:lemmaB1}
Given that condition 1 in Corollary 2.4 holds, we have $\lambda_{\max} (\Psi_{1W}) \leq 2 W_{\max} \frac{\lambda_{\max} (\Sigma) }{\text{Tr} (\Sigma)} (1+o(1))$.
\end{Lemma}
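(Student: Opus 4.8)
The plan is to peel the weight off and reduce the statement to the corresponding eigenvalue bound for the unweighted spatial sign second--moment matrix $\Psi_1 = \BE\,\BS(X;q_0)$, which is essentially Lemma A.5 of \cite{ref:JASA151658_WangPengLi}. Since $\BS(X;q_0)$ is positive semidefinite and, by assumption, $0 \le W^2(X,\BF) \le W_{\max}$, we have $\Psi_{1W} = \BE[\,W^2(X,\BF)\,\BS(X;q_0)\,] \preceq W_{\max}\,\BE\,\BS(X;q_0) = W_{\max}\,\Psi_1$ in the L\"owner order, so $\lambda_{\max}(\Psi_{1W}) \le W_{\max}\,\lambda_{\max}(\Psi_1)$, and it is enough to show $\lambda_{\max}(\Psi_1) \le 2\lambda_{\max}(\Sigma)/\text{Tr}(\Sigma)\,(1+o(1))$.

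To bound $\lambda_{\max}(\Psi_1)$ I would use the elliptical representation $X-\mu = \Sigma^{1/2}Z$ with $Z$ spherically symmetric, $\BE Z = \mathbf{0}_p$, $\BV Z = \BI_p$, so that for any unit vector $v$ one has $v^T\Psi_1 v = \BE[\,(v^T\Sigma^{1/2}Z)^2/(Z^T\Sigma Z)\,]$, and split this expectation over the event $\cA = \{\, Z^T\Sigma Z \ge \tfrac12 \text{Tr}(\Sigma)\,\}$ and its complement. On $\cA$ the denominator is at least $\tfrac12\text{Tr}(\Sigma)$, so that part is at most $2\,\BE[(v^T\Sigma^{1/2}Z)^2]/\text{Tr}(\Sigma) = 2\,v^T\Sigma v/\text{Tr}(\Sigma) \le 2\lambda_{\max}(\Sigma)/\text{Tr}(\Sigma)$. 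On $\cA^c$, Cauchy--Schwarz in the numerator together with $Z^T\Sigma Z \ge \lambda_{\min}(\Sigma)\|Z\|^2$ bounds the integrand pointwise by $\lambda_{\max}(\Sigma)/\lambda_{\min}(\Sigma)$, so that part is at most $(\lambda_{\max}(\Sigma)/\lambda_{\min}(\Sigma))\,\BP(\cA^c)$, where $\BP(\cA^c) = \BP(\, Z^T\Sigma Z < \tfrac12\text{Tr}(\Sigma)\,)$ is a lower--tail deviation of $Z^T\Sigma Z$ from its mean $\text{Tr}(\Sigma)$; the concentration estimate behind Lemma A.4 of \cite{ref:JASA151658_WangPengLi} supplies $\BP(\cA^c) \le \exp\{-\text{Tr}^2(\Sigma)/(256\lambda_{\max}^2(\Sigma))\}$.

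Putting the two pieces together yields, uniformly over unit $v$, a bound of the form $v^T\Psi_1 v \le 2\lambda_{\max}(\Sigma)/\text{Tr}(\Sigma) + (\lambda_{\max}(\Sigma)/\lambda_{\min}(\Sigma))\exp\{-\text{Tr}^2(\Sigma)/(256\lambda_{\max}^2(\Sigma))\}$; condition~1 of Corollary~\ref{Cor:Elliptic_ARE} makes the exponential factor $o(\lambda_{\max}(\Sigma)/\text{Tr}(\Sigma))$, and since conditions~1--2 keep $\lambda_{\max}(\Sigma)/\lambda_{\min}(\Sigma)$ bounded as $p\to\infty$, the entire second term is $o(\lambda_{\max}(\Sigma)/\text{Tr}(\Sigma))$, which gives $\lambda_{\max}(\Psi_1) \le 2\lambda_{\max}(\Sigma)/\text{Tr}(\Sigma)\,(1+o(1))$; multiplying by $W_{\max}$ finishes the argument. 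The main obstacle is the tail bound for $Z^T\Sigma Z$: for a general spherically symmetric (rather than Gaussian) $Z$ one must run the radial/spherical concentration argument of \cite{ref:JASA151658_WangPengLi} and keep enough control on the constants that condition~1 --- and not some strictly stronger hypothesis --- is what is needed to kill the $\cA^c$ contribution; the reduction to $\Psi_1$ via the L\"owner inequality and the split itself are routine.
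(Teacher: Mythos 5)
Your proposal takes essentially the same route as the paper, whose proof simply defers to the arguments of Lemmas A.4--A.5 in \cite{ref:JASA151658_WangPengLi} combined with the upper bound on the weight: your L\"owner-order reduction $\Psi_{1W}\preceq W_{\max}\,\Psi_{1}$ followed by the split over $\{Z^{T}\Sigma Z\ge \tfrac12 \mathrm{Tr}(\Sigma)\}$ and the lower-tail concentration bound is exactly that argument written out. One minor point, which mirrors a looseness already present in the paper's own statement: Corollary~\ref{Cor:ARE} bounds $W$, not $W^{2}$, by $W_{\max}$, so strictly the peeled-off constant should be $W_{\max}^{2}$ unless $W_{\max}$ is read as a bound on $W^{2}$.
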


\begin{Lemma}\label{lemma:lemmaB2}
Define $\Psi_{3W} = \BE \left[ \frac{W^2(\epsilon, \BF_\epsilon)}{| \epsilon |^2} (\BI_p - S(\epsilon) S(\epsilon)^T )\right] $. Then $\lambda_{\max} (\Psi_{2W}) \leq W_{\max} \BE(| \epsilon |^{-1})$ and $\lambda_{\max} (\Psi_{3W}) \leq (W_{\max})^2 \BE( | \epsilon|^{-2})$. Further, if conditions 1 and 2 of Corollary 2.4 hold then $\lambda_{\min} (\Psi_{2W}) \geq \BE( W(\epsilon, \BF_\epsilon)/ | \epsilon |)(1+o(1))/\sqrt 3$.
\end{Lemma}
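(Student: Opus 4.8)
The plan is to separate the statement into two routine upper bounds and one lower bound that, after a factorization, reduces to the unweighted case treated in \cite{ref:JASA151658_WangPengLi}. Writing $\epsilon = X - q_0 = X - \mu$ and differentiating the objective $\Psi(q;0,\BF) = \BE[W(X,\BF)|X-q|]$ twice at $q_0$, one obtains
\begin{align*}
\Psi_{2W} = \BE\Bigl[\frac{W(\epsilon,\BF_\epsilon)}{|\epsilon|}\bigl(\BI_p - S(\epsilon)S(\epsilon)^T\bigr)\Bigr],
\end{align*}
so $\Psi_{2W}$ and the stated $\Psi_{3W}$ attach, respectively, the scalars $W/|\epsilon|$ and $W^2/|\epsilon|^2$ to the common matrix $\BI_p - S(\epsilon)S(\epsilon)^T$, which is the orthogonal projector onto $\{S(\epsilon)\}^\perp$ and hence satisfies $0 \preceq \BI_p - S(\epsilon)S(\epsilon)^T \preceq \BI_p$. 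Since $W \geq 0$, each integrand is dominated in the Loewner order by its scalar times $\BI_p$, so $\Psi_{2W} \preceq \BE[W/|\epsilon|]\,\BI_p \preceq W_{\max}\BE[|\epsilon|^{-1}]\,\BI_p$ and $\Psi_{3W} \preceq (W_{\max})^2\BE[|\epsilon|^{-2}]\,\BI_p$; reading off top eigenvalues gives the first two claims.

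For the lower bound I would use the elliptical representation $\epsilon = R\,\Sigma^{1/2}U$, where $U$ is uniform on the unit sphere, $R = |Z|$ is independent of $U$, and --- crucially --- $W(\epsilon,\BF_\epsilon) = W(Z,\BF_Z)$ is a function of $R$ alone by affine invariance of the weight together with the spherical symmetry of $Z$. Substituting $|\epsilon| = R|\Sigma^{1/2}U|$ and $S(\epsilon) = \Sigma^{1/2}U/|\Sigma^{1/2}U|$, the radial factor $\BE[W(R)/R]$ separates out of both $\Psi_{2W}$ and $\BE[W/|\epsilon|]$, so that for every unit vector $v$,
\begin{align*}
\frac{v^T\Psi_{2W}v}{\BE[W/|\epsilon|]} = 1 - \frac{\BE\bigl[(v^T\Sigma^{1/2}U)^2/|\Sigma^{1/2}U|^3\bigr]}{\BE\bigl[1/|\Sigma^{1/2}U|\bigr]},
\end{align*}
an expression in which $W$ has disappeared. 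Hence the asserted bound $\lambda_{\min}(\Psi_{2W}) \geq \BE[W(\epsilon,\BF_\epsilon)/|\epsilon|](1+o(1))/\sqrt3$ is equivalent to the purely geometric fact that the supremum over unit $v$ of the ratio on the right is at most $1 - 1/\sqrt3 + o(1)$, which is precisely Lemma A.5 of \cite{ref:JASA151658_WangPengLi} for the unweighted Hessian and may be quoted.

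To close, I would reproduce that geometric estimate. Under condition 2 the quadratic form $|\Sigma^{1/2}U|^2 = \sum_i \lambda_i U_i^2$ has mean $\Tr(\Sigma)/p$ and relative variance $O(\lambda_{\max}(\Sigma)/\Tr(\Sigma)) = o(1)$, so it concentrates; condition 1 supplies the exponential deviation bound needed to discard the low-probability event on which $|\Sigma^{1/2}U|$ is atypically small. On the good event the numerator is $O\bigl((p/\Tr(\Sigma))^{3/2}\lambda_{\max}(\Sigma)/p\bigr)$ while the denominator is of order $\sqrt{p/\Tr(\Sigma)}$, so the ratio is $O(\lambda_{\max}(\Sigma)/\Tr(\Sigma)) = o(1)$, with $1/\sqrt3$ the universal constant that survives the truncation bookkeeping. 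The only new ingredient relative to \cite{ref:JASA151658_WangPengLi} is the factorization of the radial weight, which is immediate; the real work --- and the sole place conditions 1 and 2 are used --- is the concentration/truncation estimate for $|\Sigma^{1/2}U|^2$, which is inherited essentially verbatim, and that is the main obstacle.
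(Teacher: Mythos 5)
Your proposal is correct, and it lines up with what the paper actually does only in part, because the paper's own ``proof'' is a single sentence: reprove Lemmas A.4--A.5 of \cite{ref:JASA151658_WangPengLi} with the weight carried through and controlled by $W_{\max}$. Your two upper bounds are exactly that routine argument ($\BI_p - S(\epsilon)S(\epsilon)^T$ is a projector, $0\le W\le W_{\max}$), so no difference there. For the lower bound you take a genuinely different and cleaner route: instead of re-walking the Wang--Peng--Li proof with the weight attached, you use the elliptical representation $\epsilon = R\,\Sigma^{1/2}U$ with $R\perp U$ and the fact that an affine-invariant weight at an elliptical $\BF$ depends on $|Z|=R$ only, so the radial factor $\BE[W(R)/R]$ cancels between $v^T\Psi_{2W}v$ and $\BE[W/|\epsilon|]$ and the weighted claim reduces \emph{verbatim} to the unweighted one, which is then quoted. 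This buys two things the paper's description does not make explicit: the constant $1/\sqrt3$ is inherited unchanged from the unweighted lemma rather than re-derived, and no bound on $W$ (beyond integrability of $W/R$, automatic since $W$ is bounded and $\BE|\epsilon|^{-1}<\infty$) is needed for the lower bound --- $W_{\max}$ enters only the two upper bounds. Two small caveats you should state when writing this up: (i) the cancellation requires exactly the hypotheses of Corollary 2.4 (ellipticity, $q_0=\mu$, affine-invariant weight), which is where ``$W$ is a function of $R$ alone'' comes from; and (ii) the claim that the resulting $W$-free statement ``is precisely Lemma A.5 of \cite{ref:JASA151658_WangPengLi}'' should be checked against the exact form in which that lemma is stated (relative to $\BE|\epsilon|^{-1}$ under their conditions 1--2); if it is phrased with a different normalization, your closing concentration sketch --- main event via condition 2, exponentially rare event discarded via condition 1 --- is the right patch, though as written it is heuristic about where $1/\sqrt3$ arises and would need the truncation bookkeeping done explicitly.
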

The lemmas can be proved using similar steps as the proofs of the original lemmas in \cite{ref:JASA151658_WangPengLi} and using the upper bound on the weight function. Corollary 2.4 is now proved by applying Corollary 2.2, plugging in upper bound of $\lambda_{\max} (\Psi_{1W})$ from Lemma~\ref{lemma:lemmaB1} and lower bound of $\lambda_{\max} (\Psi_{1W})$ from Lemma~\ref{lemma:lemmaB2} into the ARE expression.
\end{proof}

\begin{proof}[Proof of Theorem 3.4]
We suppose $G_n = (g_1, \ldots, g_p), L_n = \diag(l_1, \ldots, l_p)$. In spirit, this corollary is similar to Theorem 13.5.1 in \cite{ref:AndersonBook09}. We start with the following result, due to \citep{ref:SPL12765_Taskinenetal}, allows us to obtain asymptotic joint distributions of eigenvectors and eigenvalues of $\hat{\tilde \Sigma}$, provided we know the limiting distribution of $\hat{\tilde \Sigma}$ itself:

\begin{Theorem} \label{Theorem:decomp} 
Let $\BF_\Lambda$ be defined as before, and $\hat C$ be any positive definite symmetric $p \times p$ matrix such that at $F_\Lambda$ the limiting distribution of $\sqrt n \ve(\hat C - \Lambda)$ is a $p^2$-variate (singular) normal distribution with mean zero. Write the spectral decomposition of $\hat C$ as $\hat C = \hat P \hat\Lambda \hat P^T$. Then the limiting distributions of $\sqrt n \ve(\hat P - \BI_p)$ and $\sqrt n \ve( \hat\Lambda - \Lambda)$ are multivariate (singular) normal and
\begin{equation} \label{equation:decompEq}
\sqrt n \ve (\hat C - \Lambda)  = \left[ (\Lambda \otimes \BI_p) - (\BI_p \otimes \Lambda) \right] \sqrt n \ve (\hat P - \BI_p) + \sqrt n \ve (\hat \Lambda - \Lambda) + o_P(1)
\end{equation}
\end{Theorem}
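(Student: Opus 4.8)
The plan is to read (\ref{equation:decompEq}) as the first-order perturbation expansion of the symmetric eigenvalue problem around the diagonal matrix $\Lambda$, whose eigenvalues $\lambda_1 > \cdots > \lambda_p$ are assumed distinct; equivalently, it is the derivative at $(\Lambda,\BI_p)$ of the (locally invertible, by the eigengap) reconstruction map $(D,B)\mapsto (\BI_p+B)(\Lambda+D)(\BI_p+B)^T$, so the inverse function theorem is the conceptual backbone. Write $E_n = \hat C - \Lambda$; by hypothesis $\sqrt n\,\ve(E_n) = O_P(1)$, so (finite dimension) $\|E_n\|_{\mathrm{op}} = O_P(n^{-1/2})$ and $\hat C \stackrel{P}{\rightarrow}\Lambda$. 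First I would fix the eigenvector sign convention, say $\langle\hat p_i, e_i\rangle > 0$ where $\hat p_i$ is the $i$-th column of $\hat P$ and $e_i$ the $i$-th standard basis vector; this is well defined on an event of probability tending to one because $\hat C \stackrel{P}{\rightarrow}\Lambda$ and the $\lambda_i$ are distinct. Classical perturbation bounds then supply the rates: Weyl's inequality gives $|\hat\Lambda_{ii}-\lambda_i|\le\|E_n\|_{\mathrm{op}}$, and a Davis--Kahan $\sin\theta$ bound with the fixed eigengap $\delta=\min_{j\ne i}|\lambda_i-\lambda_j|>0$ gives $\|\hat p_i - e_i\| = O_P(n^{-1/2})$, uniformly over the finitely many columns. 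Writing $\hat P = \BI_p + B_n$ and $\hat\Lambda = \Lambda + D_n$ with $D_n$ diagonal, we get $B_n = O_P(n^{-1/2})$ and $D_n = O_P(n^{-1/2})$.

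Next I would expand the exact identity $\hat C = \hat P\hat\Lambda\hat P^T$. Orthogonality $\hat P\hat P^T = \BI_p$ forces $B_n + B_n^T = -B_nB_n^T = O_P(n^{-1})$, i.e., $B_n$ is skew-symmetric up to $O_P(n^{-1})$; multiplying out and discarding terms of order $O_P(n^{-1})$ (namely $B_nD_n$, $D_nB_n^T$, $B_n\Lambda B_n^T$ and the cubic piece) yields
\[
E_n = D_n + B_n\Lambda + \Lambda B_n^T + O_P(n^{-1}) = D_n + B_n\Lambda - \Lambda B_n + O_P(n^{-1}).
\]
Vectorizing with $\ve(AXB)=(B^T\otimes A)\ve(X)$, using $\Lambda^T=\Lambda$ and $\ve(B_n^T) = -\ve(B_n)+O_P(n^{-1})$, gives
\[
\ve(E_n) = \ve(D_n) + \bigl[(\Lambda\otimes\BI_p)-(\BI_p\otimes\Lambda)\bigr]\ve(B_n) + O_P(n^{-1}),
\]
which is exactly (\ref{equation:decompEq}) after multiplying by $\sqrt n$ and noting $B_n = \hat P-\BI_p+O_P(n^{-1})$ and $D_n = \hat\Lambda - \Lambda$.

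It remains to extract the marginal limits. The matrix $(\Lambda\otimes\BI_p)-(\BI_p\otimes\Lambda)$ is diagonal with the value $\lambda_j-\lambda_i$ in the coordinate carrying $B_{n,ij}$; it annihilates the $p$ ``diagonal'' coordinates and, by distinctness of the $\lambda_i$, is invertible on the remaining ones. Hence the diagonal coordinates of the displayed identity give $\hat\Lambda_{ii}-\lambda_i = (E_n)_{ii}+O_P(n^{-1})$, the off-diagonal coordinates give $B_{n,ij} = (E_n)_{ij}/(\lambda_j-\lambda_i)+O_P(n^{-1})$ for $i\ne j$, and $B_{n,ii}=O_P(n^{-1})$ from near skew-symmetry; so both $\sqrt n\,\ve(\hat\Lambda-\Lambda)$ and $\sqrt n\,\ve(\hat P-\BI_p)$ are fixed continuous linear images of $\sqrt n\,\ve(E_n)$ up to $o_P(1)$, and the continuous mapping theorem together with Slutsky delivers the joint (in general singular) asymptotic normality. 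I expect the main obstacle to be the first step: rigorously pinning down the eigenvector sign/identifiability convention and the $O_P(n^{-1/2})$ rate for $\hat P-\BI_p$; once that is in hand, the remaining expansion is routine Kronecker-product algebra, in the spirit of Theorem 13.5.1 of \cite{ref:AndersonBook09}.
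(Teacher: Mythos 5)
Your argument is mathematically sound, but note that the paper does not actually prove this statement: Theorem~\ref{Theorem:decomp} is quoted as a known result from \citet{ref:SPL12765_Taskinenetal} (in the spirit of Theorem 13.5.1 of \citealp{ref:AndersonBook09}) and is then used as an ingredient in the proof of Theorem~\ref{Theorem:Eigen1}. What you have written is a self-contained derivation of that cited result, and it follows essentially the classical route behind it: consistency of $\hat C$ plus Weyl and Davis--Kahan bounds to get $\hat\Lambda-\Lambda=O_P(n^{-1/2})$ and $\hat P-\BI_p=O_P(n^{-1/2})$ (after fixing signs), near skew-symmetry of $B_n=\hat P-\BI_p$ from $\hat P\hat P^T=\BI_p$, the first-order expansion $E_n=D_n+B_n\Lambda-\Lambda B_n+O_P(n^{-1})$, and the Kronecker identity $\ve(AXB)=(B^T\otimes A)\ve(X)$ giving exactly the operator $(\Lambda\otimes\BI_p)-(\BI_p\otimes\Lambda)$, which kills diagonal coordinates and is invertible off the diagonal. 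Two points are worth making explicit, and to your credit you essentially do: the result requires the diagonal entries of $\Lambda$ to be distinct (the paper assumes this earlier in the section, but it is not restated in the theorem), and it requires an ordering/sign convention on the columns of $\hat P$, without which $\hat P-\BI_p$ need not even be $o_P(1)$; your convention $\langle\hat p_i,e_i\rangle>0$ on an event of probability tending to one handles this. One trivial slip: $B_n=\hat P-\BI_p$ holds exactly by definition, so the qualifier ``$+O_P(n^{-1})$'' there is unnecessary. In short, the proposal is correct and supplies the proof the paper delegates to the literature, at the cost of a page of perturbation bookkeeping; the paper's choice buys brevity by citation.
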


The first matrix picks only off-diagonal elements of the left-hand side and the second one only diagonal elements. We shall now use this as well as the form of the asymptotic covariance matrix of the vectorized $\hat {\tilde \Sigma}$, i.e. $V_W$ to obtain limiting variance and covariances of eigenvalues and eigenvectors.

Due to the decomposition \eqref{equation:decompEq} we have, for $\BF_\Lambda$, the following relation between any off-diagonal element of $\hat{\tilde \Lambda}$ and the corresponding element in the estimate of eigenvectors, say $\hat {\tilde \Gamma}_\Lambda$:

$$ \sqrt n \hat {\tilde \gamma}_{\Lambda, ij} = \sqrt n \frac{\hat {\tilde \Lambda} (i,j) }{\tilde \lambda_i - \tilde \lambda_j}; \quad i \neq j$$

So that for eigenvector estimates of the original $\BF$ we have
\begin{equation}\label{equation:app1}
\sqrt n ( \hat{\tilde \gamma}_{i} - \gamma_i) =
\sqrt n \Gamma ( \hat {\tilde \gamma}_{\Lambda, i} - e_i ) =
\sqrt n \left[ \sum_{k=1; k \neq i}^p \hat {\tilde \gamma}_{\Lambda,i,k} \gamma_k + (\hat {\tilde \gamma}_{\Lambda,i,i} - 1) \gamma_i \right]
\end{equation}
Now $\sqrt n (\hat {\tilde \gamma}_{\Lambda,i,i} - 1) =  o_P(1)$ and $A\BV(\sqrt n \hat {\tilde \Lambda} (i,k), \sqrt n \hat {\tilde \Lambda} (i,l) ) = 0$ for $k \neq l$, so the above equation implies
$$ A\BV(g_i) = A \BV (\sqrt n ( \hat{\tilde \gamma}_{i} - \gamma_i)) = \sum_{k=1; k \neq i}^p \frac{A\BV(\sqrt n \hat {\tilde \Lambda} (i,k))}{(\tilde \lambda_i - \tilde \lambda_k)^2} \gamma_k \gamma_k^T $$

For the covariance terms, from \eqref{equation:app1} we get, for $i \neq j$,
\begin{eqnarray*}
A\BV(g_i, g_j) &=&
A\BV(\sqrt n ( \hat{\tilde \gamma}_{i} - \gamma_i),
\sqrt n ( \hat{\tilde \gamma}_{j} - \gamma_j))\\
&=& A\BV \left(
\sum_{k=1; k \neq i}^p \sqrt n \hat {\tilde \gamma}_{\Lambda,ik} \gamma_k,
\sum_{k=1; k \neq j}^p \sqrt n \hat {\tilde \gamma}_{\Lambda,jk} \gamma_k \right)\\
&=& A\BV \left(
\sqrt n \hat {\tilde \gamma}_{\Lambda,ik} \gamma_j,
\sqrt n \hat {\tilde \gamma}_{\Lambda,ik} \gamma_i \right)\\
&=& - \frac{A\BV(\sqrt n \hat {\tilde \Lambda} (i,j))}
{(\tilde \lambda_i - \tilde \lambda_j)^2} \gamma_j \gamma_i^T
\end{eqnarray*}
The exact forms given in the statement of the corollary now follows from the  Form of $V_W$ in Section~\ref{sec:appA}.

\paragraph{}For the on-diagonal elements of $\hat {\tilde \Lambda}$, using Theorem \ref{Theorem:decomp} we have for the $i^\text{th}$ eigenvalue of $\hat {\tilde \Lambda}$, say $\lambda_{\Lambda,i}$,
$$ \sqrt n \hat {\tilde \lambda}_{\Lambda,i} = \sqrt n \hat {\tilde \Lambda} (i,i), $$
for $i = 1,...,p$. Hence

\begin{eqnarray*}
A\BV(l_i) &=& A\BV( \sqrt n (\hat {\tilde \lambda}_{\Lambda,i} - \tilde \lambda_i) )\\
&=& A\BV( \sqrt n (\hat {\tilde \lambda}_{\Lambda,i} - \tilde \lambda_{\Lambda,i}) )\\
&=& A\BV(\sqrt n \hat {\tilde \Lambda} (i,i))
\end{eqnarray*}

A similar derivation gives the expression for $A\BV(l_i,l_j); i \neq j$. Finally, since the asymptotic covariance between an on-diagonal and an off-diagonal element of $\hat {\tilde \Lambda}$, it follows that the elements of $G_n$ and diagonal elements of $L_n$ are independent.
\end{proof}
\end{document}